\newcommand{\nope}[1]{}
\newcommand{\abs}[1]{\left| #1 \right|}
\renewcommand{\epsilon}{\varepsilon}
\newcommand{\RR}{\mathbb{R}}
\newcommand{\NULL}{\mathsf{NULL}}
\newcommand{\id}{\mathbb{I}}
\newcommand{\TLap}{\mathrm{TLap}}
\newcommand{\Score}{\textsc{Score}}
\newcommand{\pcount}{\mathrm{Count}}
\newcommand{\DPASE}{\mathrm{DPASE}}
\newcommand{\DPASEB}{\mathrm{DPASEB}}
\newcommand{\DPESE}{\mathrm{DPESE}}
\newcommand{\GAPMAX}{\mathrm{GAP\text{-}MAX}}
\newcommand{\tnote}[1]{}%\textcolor{red}{[Thomas: #1]}}
\newcommand{\vnote}[1]{}%\textcolor{orange}{[Vikrant: #1]}}
\newcommand*{\citet}[1]{\AtNextCite{\AtEachCitekey{\defcounter{maxnames}{2}}}\textcite{#1}}
\newcommand*{\citetall}[1]{\AtNextCite{\AtEachCitekey{\defcounter{maxnames}{999}}}\textcite{#1}}
\newcommand*{\citep}[1]{\citep{#1}}
\title{Privately Learning Subspaces}
\author{Vikrant Singhal
    \thanks{Northeastern University. Part of this work was done during an internship at IBM Research -- Almaden.~\dotfill~\texttt{singhal.vi@northeastern.edu}}
    \and
    Thomas Steinke\thanks{Google Research, Brain Team. Part of this work was done at IBM Research --
    Almaden.~\dotfill~\texttt{subspace@thomas-steinke.net}}}
\date{}
\begin{document}
\maketitle

\begin{abstract}
    Private data analysis suffers a costly curse of dimensionality. However, the data often has an underlying low-dimensional structure. For example, when optimizing via gradient descent, the gradients often lie in or near a low-dimensional subspace. If that low-dimensional structure can be identified, then we can avoid paying (in terms of privacy or accuracy) for the high ambient dimension. 
    
    We present differentially private algorithms that take input data sampled from a low-dimensional linear subspace (possibly with a small amount of error) and output that subspace (or an approximation to it). These algorithms can serve as a pre-processing step for other procedures.
\end{abstract}
%\newpage

\section{Introduction}

Differentially private algorithms generally have a poor dependence on the dimensionality of their input. That is, their error or sample complexity grows polynomially with the dimension. For example, for the simple task of estimating the mean of a distribution supported on $[0,1]^d$, we have per-coordinate error $\Theta(\sqrt{d}/n)$ to attain differential privacy, where $n$ is the number of samples. In contrast, the non-private error is $\Theta(\sqrt{\log(d)/n})$.

This cost of dimensionality is inherent \cite{BunUV14,SteinkeU17a,DworkSSUV15}. \emph{Any} method with lower error is susceptible to tracing attacks (a.k.a.~membership inference attacks). However, these lower bounds only apply when the data distribution is ``high-entropy.'' This leaves open the posssibility that we can circumvent the curse of dimensionality when the data has an underlying low-dimensional structure.

Data often does possess an underlying low-dimensional structure. For example, the gradients that arise in deep learning tend to be close to a low-dimensional subspace \cite{AbadiCGMMTZ16,LiZTSG17,GurAriRD18,LiFLY18,LiGZCB19,ZhouWB20,FengT20}. Low dimensionality can arise from meaningful relationships that are at least locally linear, such as income versus tax paid. It can also arise because we are looking at a function of data with relatively few attributes.

A long line of work \cite[etc.]{BlumLR08,HardtT10,HardtR10,Ullman15,BlasiokBNS19,BassilyCMNUW20,ZhouWB20,KairouzRRT20} has shown how to exploit structure in the data to attain better privacy and accuracy. However, these approaches assume that this structure is known \emph{a priori} or that it can be learned from non-private sources. This raises the question:
\begin{quote}
    Can we learn low-dimensional structure from the data subject to differential privacy?
\end{quote}
We consider the simple setting where the data lies in $\mathbb{R}^d$ but is in, or very close to a linear subspace, of dimension $k$. We focus on the setting where $k \ll d$ and we develop algorithms whose sample complexity does not depend on the ambient dimension $d$; a polynomial dependence on the true dimension $k$ is unavoidable.

Our algorithms identify the subspace in question or, if the data is perturbed slightly, an approximation to it. Identifying the subspace structure is interesting in its own right, but it also can be used as a pre-processing step for further analysis -- by projecting to the low-dimensional subspace, we ensure subsequent data analysis steps do not need to deal with high-dimensional data.

\subsection{Our Contributions: Privately Learning Subspaces -- Exact Case}

We first consider the exact case, where the data $X_1, \cdots, X_n \in \mathbb{R}^d$ are assumed to lie in a $k$-dimensional subspace (rather than merely being near to it) -- i.e., $\mathsf{rank}\left(A\right) = k$, where $A = \sum_i^n X_i X_i^T \in \mathbb{R}^{d \times d}$. In this case, we can also recover the subspace exactly.

However, we must also make some non-degeneracy assumptions. We want to avoid a pathological input dataset such as the following. Suppose $X_1, \cdots, X_k$ are linearly independent, but $X_k=X_{k+1}=X_{k+2}=\cdots=X_n$. While we can easily reveal the repeated data point, we cannot reveal anything about the other points due to the privacy constraint. 

A natural non-degeneracy assumption would be to assume that the data points are in ``general position'' -- that is, that there are no non-trivial linear dependencies among the data points. This means that \emph{every} set of $k$ data points spans the subspace or, equivalently, no subspace of dimension $k-1$ contains more than $k-1$ data points. This is a very natural assumption -- if the data consists of $n$ samples from a continuous distribution on the subspace, then this holds with probability $1$. We relax this assumption slightly and assume that no subspace of dimension $k-1$ contains more than $\ell$ data points. We also assume that all points are non-zero. Note that we define subspaces to pass through the origin; our results can easily be extended to affine subspaces.

\begin{theorem}[Main Result -- Exact Case]\label{thm:intro-main-exact}
For all $n,d,k,\ell \in \mathbb{N}$ and $\varepsilon,\delta>0$ satisfying $n \ge O\left(\ell + \frac{\log(1/\delta)}{\varepsilon}\right)$, there exists a randomized algorithm $M : \mathbb{R}^{d \times n} \to \mathcal{S}_d^k$ satisfying the following. Here $\mathcal{S}_d^k$ denotes the set of all $k$-dimensional subspaces of $\mathbb{R}^d$.
\begin{itemize}
    \item $M$ is $(\varepsilon,\delta)$-differentially private with respect to changing one column of its input.
    \item Let $X = (X_1, \cdots, X_n) \in \mathbb{R}^{d \times n}$.
    Suppose there exists a $k$-dimensional subspace $S_* \in \mathcal{S}_d^k$ that contains all but $\ell$ of the points -- i.e., $|\{i \in [n] : X_i \in S_*\}| \ge n -\ell$.
    Further suppose that any $(k-1)$-dimensional subspace contains at most $\ell$ points -- i.e., for all $S \in \mathcal{S}_d^{k-1}$, we have $|\{i \in [n] : X_i \in S\}| \le \ell$. 
    Then $\pr{}{M(X)=S_*}=1$.
\end{itemize}
\end{theorem}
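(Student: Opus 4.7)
My overall approach is a Propose–Test–Release mechanism built around a combinatorial candidate subspace together with a sensitivity-$1$ distance-to-instability statistic, privatized using the bounded-support $\TLap$ noise. For input $X = (X_1,\ldots,X_n)$, score each $k$-dimensional subspace $S$ by $c(S) = |\{i \in [n] : X_i \in S\}|$, and define the non-private candidate $\hat S(X)$ to be the unique $k$-dim subspace with $c(S) > 2\ell$, or a fixed default subspace if no such unique subspace exists. The first step is to verify $\hat S(X) = S_*$ whenever $X$ satisfies the theorem's hypothesis. We have $c(S_*) \ge n - \ell$, which exceeds $2\ell$ under the sample-size assumption. For any other $k$-dim $S$, the intersection $S \cap S_*$ has dimension at most $k-1$, so by the $(k{-}1)$-dim non-degeneracy bound at most $\ell$ points of $X$ lie in $S \cap S_*$; the remaining points of $S$ lie outside $S_*$ and so number at most $\ell$ as well. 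Hence $c(S) \le 2\ell$, making $S_*$ the unique candidate.

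Next I would analyze the distance to instability $d(X)$, defined as the minimum number of columns of $X$ that must be changed to yield an $X'$ with $\hat S(X') \ne \hat S(X)$. Because modifying one column shifts every $c(S)$ by at most one, $d$ has global sensitivity at most $1$. On a good input the bookkeeping above upgrades to $c_{X'}(S_*) \ge n - \ell - t$ and $c_{X'}(S) \le 2\ell + t$ for every other $k$-dim $S$ after $t$ column changes, so $\hat S$ remains $S_*$ as long as $t < (n-3\ell)/2$. Therefore $d(X) \ge (n-3\ell)/2$ on all good inputs.

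The mechanism outputs $\hat S(X)$ iff $d(X) + \TLap > T$, for a threshold $T = \Theta(\log(1/\delta)/\varepsilon)$ and a truncated Laplace of scale $1/\varepsilon$ with truncation $B = \Theta(\log(1/\delta)/\varepsilon)$, and otherwise outputs the default subspace. Privacy is the standard Propose–Test–Release argument for a sensitivity-$1$ stability statistic and gives $(\varepsilon,\delta)$-DP. For exact correctness on good inputs, the essential point is that the truncated noise lives in $[-B,B]$, so once $(n-3\ell)/2 \ge T + B$ — which is implied by $n \ge O\!\left(\ell + \log(1/\delta)/\varepsilon\right)$ — the test is satisfied deterministically and $M(X) = S_*$ with probability exactly $1$.

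The main subtlety is the demand for probability-$1$ recovery: with any unbounded noise (ordinary Laplace, Gaussian, or the exponential mechanism) there is a strictly positive failure probability, so the use of a bounded-support noise distribution such as $\TLap$ is what makes zero-error recovery compatible with $(\varepsilon,\delta)$-DP. A minor implementation point is that one need not range over all $k$-dim subspaces of $\RR^d$: both $\hat S$ and $d$ are determined by the finitely many subspaces spanned by $k$-subsets of the data, which suffices to compute the candidate and its stability radius.
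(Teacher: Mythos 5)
Your proposal is correct in spirit but takes a genuinely different route from the paper. The paper runs a $\GAPMAX$ selection over the finitely many subspaces spanned by $k$-subsets of the data (plus a $\NULL$ candidate), using the score $u(X,s)=|X\cap s|-\sup\{|X\cap t|:t\subsetneq s\}$ and adding independent truncated Laplace noise to the per-candidate gaps, with privacy following from the sparse-vector-style fact that only one candidate has a nonzero gap. You instead run Propose--Test--Release on a single sensitivity-$1$ distance-to-instability statistic, noised once. Both arguments rest on the same combinatorial fact, which you prove correctly via the intersection argument: $S_*$ contains at least $n-\ell$ points while every other $k$-dimensional subspace contains at most $2\ell$. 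By restricting candidates to dimension exactly $k$ you avoid the paper's subtraction of the best strict-subspace count, which the paper needs only because its candidate set also contains higher-dimensional subspaces containing $s_*$; and you correctly identify that bounded-support noise is what makes probability-$1$ recovery possible, exactly as in the paper.

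There is, however, one genuine inconsistency to repair. As literally defined, $\hat S(X)$ is ``the unique $k$-dimensional subspace with $c(S)>2\ell$,'' defaulting when uniqueness fails. Under that definition the distance to instability can collapse to $1$ on a valid input: the hypotheses allow another $k$-dimensional subspace $S$ to contain exactly $2\ell$ points ($\ell$ in a $(k-1)$-dimensional intersection with $S_*$ plus the $\ell$ points outside $S_*$), and moving a single column into $S$ then produces a second subspace exceeding the threshold, so $\hat S$ flips to the default after one change. The test $d(X)+\TLap>T$ would then fail and the mechanism would output the default, breaking probability-$1$ correctness. Your own stability computation ($c_{X'}(S_*)\ge n-\ell-t$ versus $c_{X'}(S)\le 2\ell+t$, stable while $t<(n-3\ell)/2$) compares counts to each other rather than to a fixed threshold, and is correct for the candidate defined as the \emph{unique maximizer} of $c(\cdot)$ over $\mathcal{S}_d^k$ (default if the maximizer is not unique). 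Redefine $\hat S$ that way --- or, for a cleanly computable statistic, replace $d$ by the sensitivity-$1$ lower bound given by half the gap between the two largest counts --- and the rest of your argument goes through.
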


The parameter $\ell$ in Theorem \ref{thm:intro-main-exact} can be thought of as a robustness parameter. Ideally the data points are in general position, in which case $\ell=k-1$. If a few points are corrupted, then we increase $\ell$ accordingly; our algorithm can tolerate the corruption of a small constant fraction of the data points.
Theorem \ref{thm:intro-main-exact} is optimal in the sense that $n \ge \Omega\left(\ell + \frac{\log(1/\delta)}{\varepsilon}\right)$ samples are required.

\subsection{Our Contributions: Privately Learning Subspaces -- Approximate Case}

Next we turn to the substantially more challenging approximate case, where the data $X_1, \cdots, X_n \in \mathbb{R}^d$ are assumed to be close to a $k$-dimensional subspace, but are not assumed to be contained within that subspace. Our algorithm for the exact case is robust to changing a few points, but very brittle if we change all the points by a little bit. Tiny perturbations of the data points (due to numerical errors or measurement imprecision) could push the point outside the subspace, which would cause the algorithm to fail. Thus it is important to for us to cover the approximate case and our algorithm for the approximate is entirely different from our algorithm for the exact case. 

The approximate case requires us to precisely quantify how close the input data and our output are to the subspace and we also need to make quantitative non-degeneracy assumptions. It is easiest to formulate this via a distributional assumption. We will assume that the data comes from a Gaussian distribution where the covariance matrix has a certain eigenvalue gap. This is a strong assumption and we emphasize that this is only for ease of presentation; our algorithm works under weaker assumptions. Furthermore, we stress that the differential privacy guarantee is worst-case and does not depend on any distributional assumptions.

We assume that the data is drawn from a multivariate Gaussian $\mathcal{N}(0,\Sigma)$. Let $\lambda_1(\Sigma) \ge \lambda_2(\Sigma) \ge \cdots \ge \lambda_d(\Sigma)$ be the eigenvalues of $\Sigma \in \mathbb{R}^{d \times d}$. We assume that there are $k$ large eigenvalues $\lambda_1(\Sigma), \cdots, \lambda_k(\Sigma)$ -- these represent the ``signal'' we want -- and $d-k$ small eigenvalues $\lambda_{k+1}(\Sigma), \cdots, \lambda_d(\Sigma)$ --  these are the ``noise''. Our goal is to recover the subspace spanned by the eigenvectors corresponding to the $k$ largest eigenvalues $\lambda_1(\Sigma), \cdots, \lambda_k(\Sigma)$.
Our assumption is that there is a large \emph{multiplicative} gap between the large and small eigenvalues. Namely, we assume $\frac{\lambda_{k+1}(\Sigma)}{\lambda_{k}(\Sigma)} \le \frac{1}{\mathsf{poly}(d)}$. 

\begin{theorem}[Main Result -- Approximate Case]\label{thm:intro-main-approx}
    For all $n,d,k \in \mathbb{N}$ and $\alpha,\gamma,\varepsilon, \delta > 0$ satisfying
    \[n \!\ge\! \Theta\!\left(\!\frac{k\log(1/\delta)}{\varepsilon} \!+\!
        \frac{\ln(1/\delta)\ln(\ln(1/\delta)/\eps)}{\eps}\!\right)
    ~\text{and}~
    \gamma^2 \!\le\! \Theta\!\left(\!\frac{\eps\alpha^2n}{d^2k\log(1/\delta)}\!\cdot\!\min\!\left\{\!\frac{1}{k},\!
        \frac{1}{\log(k\log(1/\delta)/\varepsilon)}\!\right\}\!\right)\!,\]
    there exists an algorithm $M : \mathbb{R}^{d \times n} \to \mathcal{S}_d^k$ satisfying the following. Here $\mathcal{S}_d^k$ is the set of all $k$-dimensional subspaces of $\mathbb{R}^d$ represented as projection matricies -- i.e., $\mathcal{S}_d^k = \{\Pi \in \mathbb{R}^{d \times d} : \Pi^2=\Pi=\Pi^T, \mathsf{rank}(\Pi)=k\}$.
\begin{itemize}
    \item $M$ is $(\varepsilon,\delta)$-differentially private with respect to changing one column of its input.
    \item Let $X_1, \cdots, X_n$ be independent samples from $\mathcal{N}(0,\Sigma)$. Let $\lambda_1(\Sigma) \ge \lambda_2(\Sigma) \ge \cdots \ge \lambda_d(\Sigma)$ be the eigenvalues of $\Sigma \in \mathbb{R}^{d \times d}$. Suppose $\lambda_{k+1}(\Sigma) \le \gamma^2 \cdot \lambda_k(\Sigma)$. Let $\Pi \in \mathcal{S}_d^k$ be the projection matrix onto the subspace spanned by the eigenvectors corresponding to the $k$ largest eigenvalues of $\Sigma$. Then $\pr{}{\|M(X)-\Pi\| \le \alpha} \ge 0.7$.
\end{itemize}
\end{theorem}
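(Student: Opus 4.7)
The plan is to privately produce a noisy rank-$k$ approximation of the empirical second-moment matrix $\hat{A}=\sum_i X_iX_i^T/n$, then read off the top-$k$ eigenspace of that matrix as the output projection. Non-privately, this already works: under the assumed eigengap $\lambda_{k+1}(\Sigma)/\lambda_k(\Sigma)\le \gamma^2$ with $\gamma\ll 1/\mathsf{poly}(d)$, standard sample-covariance concentration and Wedin/Davis-Kahan $\sin\Theta$ perturbation imply that the top-$k$ eigenspace of $\hat{A}$ is within distance $\alpha$ of the target $\Pi$. The privacy analysis then only needs to guarantee that any noise we inject has spectral norm substantially smaller than the gap $\lambda_k(\Sigma)-\lambda_{k+1}(\Sigma)\ge (1-\gamma^2)\lambda_k(\Sigma)$.

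First I would control the norms of the input samples. Writing $X_i=\Pi X_i+(I-\Pi)X_i$, standard Gaussian concentration gives $\max_i\|\Pi X_i\|^2=O\!\left(\mathrm{tr}(\Pi\Sigma)+\lambda_1(\Sigma)\log n\right)$ and $\max_i\|(I-\Pi)X_i\|^2=O\!\left(d\,\gamma^2\lambda_k(\Sigma)\log n\right)$ with high probability, so the total norm is dominated by the signal component. This motivates truncating each $X_i$ at a radius $R$ of the order of the former bound. Since $R$ depends on unknown parameters of $\Sigma$, I would apply a propose-test-release / stable-histogram approach to privately certify an admissible $R$; this step is where the additive $\log(1/\delta)\log(\log(1/\delta)/\varepsilon)/\varepsilon$ term in the sample complexity naturally comes from. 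Once the clip is certified, I would release $\tilde{A}=\hat{A}_{\mathrm{clipped}}+E$ via the Gaussian mechanism, where $E$ is a symmetric Gaussian matrix with entries of scale $\Theta(R^2\sqrt{\log(1/\delta)}/(\varepsilon n))$; this gives $(\varepsilon,\delta)$-DP with respect to one column.

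Next I would carry out the utility analysis for $\tilde\Pi:=$ projection onto the top-$k$ eigenvectors of $\tilde{A}$. Writing the total perturbation as $\tilde{A}-\Sigma=(\hat{A}_{\mathrm{clipped}}-\Sigma)+E$, Wedin's theorem yields
\[ \|\tilde\Pi-\Pi\|\le \frac{\|\hat{A}_{\mathrm{clipped}}-\Sigma\|+\|E\|}{\lambda_k(\Sigma)-\lambda_{k+1}(\Sigma)}.\]
The empirical term is $O(\lambda_1(\Sigma)\sqrt{(d+\log(1/\beta))/n})$ by the usual Gaussian covariance concentration (and is in fact much smaller under the gap, because one can decompose $\Sigma$ into its signal and noise parts and concentrate each separately). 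The noise term has spectral norm $O(R^2\sqrt{d\log(1/\delta)}/(\varepsilon n))$. Substituting $R^2$ and the gap $\approx \lambda_k(\Sigma)$, both error contributions become $o(\alpha)$ precisely when $n$ is at least $\Theta(k\log(1/\delta)/\varepsilon)$ and $\gamma^2$ meets the stated bound. The success probability $0.7$ absorbs the failure probabilities of the clipping certification, the Gaussian tail bounds on sample norms, and the sample-covariance concentration.

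The hard step, I expect, is the seemingly unavoidable $\sqrt{d}$ factor appearing in $\|E\|$ from the Gaussian mechanism: at face value this forces $n$ to depend on $d$. The trick is that the gap condition on $\gamma$ buys us exactly that much slack, because we need $\|E\|$ to beat $\lambda_k(\Sigma)$ rather than $\lambda_{k+1}(\Sigma)$. Tracking this carefully is what forces the $d^2$ in the denominator of the admissible $\gamma^2$, and it is also what decouples $n$ from $d$. The other delicate point is the data-dependent clipping radius: one has to argue that with high probability under $\mathcal{N}(0,\Sigma)$ the stability test passes at some radius $R$ for which both the privacy sensitivity and the bias from clipping are simultaneously good enough, and this is where the second (additive) term in the sample-complexity lower bound comes from.
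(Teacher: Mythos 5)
There is a fundamental gap: the mechanism you propose---clip each sample at radius $R$ and release $\tilde{A}=\hat{A}_{\mathrm{clipped}}+E$ via the Gaussian mechanism---cannot achieve the dimension-independent sample complexity $n=O(k\log(1/\delta)/\varepsilon)$ that the theorem claims, and the step where you assert otherwise is exactly where the computation fails. As you yourself observe, the norm of a sample is dominated by its signal component, so any admissible clipping radius must satisfy $R^2\gtrsim \mathrm{tr}(\Pi\Sigma)\ge k\,\lambda_k(\Sigma)$. The spectral norm of a symmetric $d\times d$ Gaussian matrix with per-entry scale $\Theta(R^2\sqrt{\log(1/\delta)}/(\varepsilon n))$ is $\Theta(R^2\sqrt{d\log(1/\delta)}/(\varepsilon n))\gtrsim k\,\lambda_k(\Sigma)\sqrt{d\log(1/\delta)}/(\varepsilon n)$, while the Davis--Kahan denominator is the gap $\lambda_k-\lambda_{k+1}\approx\lambda_k$. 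The factors of $\lambda_k$ cancel and you are left needing $n\gtrsim k\sqrt{d\log(1/\delta)}/(\varepsilon\alpha)$. Shrinking $\gamma$ does not help: $\gamma$ only suppresses $\lambda_{k+1}$, whereas both the sensitivity (through $R^2$) and the usable gap are pinned to $\lambda_k$, so the hypothesis on $\gamma^2$ buys you nothing against the privacy noise. Your sentence ``the gap condition on $\gamma$ buys us exactly that much slack'' is the unsupported step; the $\gamma$ condition is needed to control the \emph{non-private} estimation error (there your signal/noise decomposition of $\hat{A}-\Sigma$ is fine), not the injected noise. This is not an artifact of your analysis but an information-theoretic barrier the paper makes explicit in its discussion of prior work: the Gaussian mechanism satisfies concentrated DP, which bounds $I(M(X);X)=O(n^2)$, while recovering the top eigenspace of $\Sigma=vv^T$ for a random direction $v$ requires $\Omega(d)$ bits, forcing $n=\Omega(\sqrt d)$ for any noise-addition scheme even under the Gaussian data model.

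The paper therefore takes an entirely different route: subsample-and-aggregate. It splits $X$ into $t=O(\log(1/\delta)/\varepsilon)$ blocks of size $m=\lfloor n/t\rfloor=O(k)$, computes a non-private PCA projection $\Pi_j$ from each block (each satisfying $\|\Pi_j-\Pi\|=O(\gamma\sqrt{d/m})$ with high probability, which is where the $\gamma$ condition is actually spent), projects $q=O(k)$ public reference points onto each $\Pi_j$, and aggregates the resulting $t$ vectors in $\mathbb{R}^{qd}$ with an $(\varepsilon,\delta)$-DP stable histogram over a randomly offset grid. The stable histogram succeeds once $t\ge O(\log(1/\delta)/\varepsilon)$ regardless of the number of cells or the ambient dimension, so no noise of magnitude growing with $\sqrt d$ is ever added; a final $\mathrm{Sin}(\Theta)$ argument converts the recovered reference-point projections back into a projection matrix. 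If you want to salvage a noise-addition proof, you would have to accept a $\mathrm{poly}(d)$ sample complexity, which is a strictly weaker statement than the theorem.
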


The sample complexity of our algorithm $n=O(k \log(1/\delta)/\varepsilon)$ is independent of the ambient dimension $d$; this is ideal. However, there is a polynomial dependence on $d$ in $\gamma$, which controls the multiplicative eigenvalue gap. % We conjecture that this can be improved, but it cannot eliminated entirely.
This multiplicative eigenvalue gap is a strong assumption, but it is also a necessary assumption if we want the sample complexity $n$ to be independent of the dimension $d$. In fact, it is necessary \emph{even without the differential privacy constraint} \cite{CaiZ16}. That is, if we did not assume an eigenvalue gap that depends polynomially on the ambient dimension $d$, then it would be impossible to estimate the subspace with sample complexity $n$ that is independent of the ambient dimension $d$ even in the non-private setting.

Our algorithm is based on the subsample and aggregate framework \cite{NissimRS07} and a differentially private histogram algorithm. These methods are generally quite robust and thus our algorithm is, too. For example, our algorithm can tolerate $o(n/k)$ input points being corrupted arbitrarily. \vnote{We don't exactly prove that in the technical section.} We also believe that our algorithm's utility guarantee is robust to relaxing the Gaussianity assumption. All that we require in the analysis is that the empirical covariance matrix of a few samples from the distribution is sufficiently close to its expectation $\Sigma$ with high probability.

\subsection{Related Work}

To the best of our knowledge, the problem of privately learning subspaces, as we formulate it, has not been studied before. However, a closely-related line of work is on Private Principal Component Analysis (PCA) and low-rank approximations. We briefly discuss this extensive line of work below, but first we note that, in our setting, all of these techniques have a sample complexity $n$ that grows polynomially with the ambient dimension $d$. Thus, they do not evade privacy's curse of dimensionality. However, we make a stronger assumption than these prior works -- namely, we assume a large multiplicative eigenvalue gap. (Many of the prior works consider an \emph{additive} eigenvalue gap, which is a weaker assumption.)

There has been a lot of interest in Private PCA, matrix completion, and low-rank approximation. One motivation for this is the infamous Netflix prize, which can be interpreted as a matrix completion problem. The competition was cancelled after researchers showed that the public training data revealed the private movie viewing histories of many of Netflix's customers \cite{NarayananS06}. Thus privacy is a real concern for matrix analysis tasks.

Many variants of these problems have been considered: Some provide approximations to the data matrix $X = (X_1, \cdots, X_n) \in \mathbb{R}^{d \times n}$; others approximate the covariance matrix $A = \sum_i^n X_i X_i^T \in \mathbb{R}^{d \times d}$ (as we do). There are also different forms of approximation -- we can either produce a subspace or an approximation to the entire matrix, and the approximation can be measured by different norms (we consider the operator norm between projection matrices). Importantly, we define differential privacy to allow one data point $X_i$ to be changed arbitrarily, whereas most of the prior work assumes a bound on the norm of the change or even assumes that only one coordinate of one vector can be changed. In the discussion below we focus on the techniques that have been considered for these problems, rather than the specific results and settings.

\citetall{DworkTTZ14} consider the simple algorithm which adds independent Gaussian noise to each of entries of the covariance matrix $A$, and then perform analysis on the noisy matrix. (%This algorithm is sometimes confusingly referred to as ``randomized response.''
In fact, this algorithm predates the development of differential privacy \cite{BlumDMN05} and was also analyzed under differential privacy by McSherry and Mironov \cite{McSherryM09} and Chaudhuri, Sarwate, and Sinha \cite{ChaudhuriSS12}.) %This requires a bound on $\|X_i\|_2$ to control sensitivity. 
This simple algorithm is versatile and several bounds are provided for the accuracy of the noisy PCA. The downside of this is that a polynomial dependence on the ambient dimension $d$ is inherent -- indeed, they prove a sample complexity lower bound of $n = \tilde\Omega(\sqrt{d})$ for any algorithm that identifies a useful approximation to the top eigenvector of $A$. This lower bound does not contradict our results because the relevant inputs do not satisfy our near low-rank assumption.

\citetall{HardtR12} and \citetall{AroraBU18} apply techniques from dimensionality reduction to privately compute a low-rank approximation to the input matrix $X$. \citetall{HardtR13} and \citetall{HardtP13} use the power iteration method with noise injected at each step to compute low-rank approximations to the input matrix $X$. In all of these, the underlying privacy mechanism is still noise addition and the results still require the sample complexity to grow polynomially with the ambient dimension to obtain interesting guarantees. (However, the results can be dimension-independent if we define differential privacy so that only one entry -- as opposed to one column -- of the matrix $X$ can be changed by $1$. This is a significantly weaker privacy guarantee.)

\citetall{BlockiBDS12} and \citetall{Sheffet19} also use tools from dimensionality reduction; they approximate the covariance matrix $A$. However, they show that the dimensionality reduction step itself provides a privacy guarantee (whereas the aforementioned results did not exploit this and relied on noise added at a later stage). \citetall{Sheffet19} analyzes two additional techniques -- the addition of Wishart noise (i.e., $YY^T$ where the columns of $Y$ are independent multivariate Gaussians) and sampling from an inverse Wishart distribution (which has a Bayesian interpretation).

\citetall{ChaudhuriSS12}, \citetall{KapralovT13}, \citetall{WeiSCHT16}, and \citetall{AminDKMV18} apply variants of the exponential mechanism \cite{McSherryT07} to privately select a low-rank approximation to the covariance matrix $A$. This method is nontrivial to implement and analyse, but it ultimately requires the sample complexity to grow polynomially in the ambient dimension. %The exponential mechanism satisfies \emph{pure} differential privacy.%, and packing lower bounds \cite{HardtT10} can be used to show that a polynomial dependence on the dimension is necessary for any algorithm satisfying pure differential privacy.

\citetall{GonemG18} exploit smooth sensitivity \cite{NissimRS07} to release a low-rank approximation to the matrix $A$. This allows adding less noise than worst case sensitivity, under an eigenvalue gap assumption. However, the sample complexity $n$ is polynomial in the dimension $d$.

\paragraph{Limitations of Prior Work}
Given the great variety of techniques and analyses that have been applied to differentially private matrix analysis problems, what is missing?
We see that almost all of these techniques are ultimately based on some form of noise addition or the exponential mechanism. With the singular exception of the techniques of Sheffet \cite{Sheffet19}, all of these prior techniques satisfy pure\footnote{Pure differential privacy (a.k.a.~pointwise differential privacy) is $(\varepsilon,\delta)$-differential privacy with $\delta=0$.} or concentrated differential privacy \cite{BunS16}. This is enough to conclude that these techniques cannot yield the dimension-independent guarantees that we seek. No amount of postprocessing or careful analysis can avoid this limitation. This is because pure and concentrated differential privacy have strong group privacy properties, which means ``packing'' lower bounds \cite{HardtT10} apply.

We briefly sketch why concentrated differential privacy is incompatible with dimension-independent guarantees. Let the input be $X_1 = X_2 = \cdots = X_n = \xi/\sqrt{d}$ for a uniformly random $\xi \in \{-1,+1\}^d$. That is, the input is one random point repeated $n$ times. If $M$ satisfies $O(1)$-concentrated differential privacy, then it satisfies the mutual information bound $I(M(X);X) \le O(n^2)$ \cite{BunS16}. But, if $M$ provides a meaningful approximation to $X$ or $A = XX^T$, then we must be able to recover an approximation to $\xi$ from its output, whence $I(M(X);X) \ge \Omega(d)$, as the entropy of $X$ is $d$ bits. This gives a lower bound of $n \ge \Omega(\sqrt{d})$, even though $X$ and $A$ have rank $k=1$.

The above example shows that, even under the strongest assumptions (i.e., the data lies exactly in a rank-$1$ subspace), any good approximation to the subspace, to the data matrix $X$, or to the covariance matrix $A = XX^T$ must require the sample complexity $n$ to grow polynomially in the ambient dimension $d$ if we restrict to techniques that satisfy concentrated differential privacy. Almost all of the prior work in this general area is subject to this restriction.

To avoid a sample complexity $n$ that grows polynomially with the ambient dimension $d$, we need fundamentally new techniques.

\subsection{Our Techniques}

For the exact case, we construct a score function for subspaces that has low sensitivity, assigns high score to the correct subspace, and assigns a low score to all other subspaces. Then we can simply apply a $\GAPMAX$ algorithm to privately select the correct subspace \cite{BunDRS18}. 

The $\GAPMAX$ algorithm satisfies $(\varepsilon,\delta)$-differential privacy and outputs the correct subspace as long as the gap between its score and that of any other subspace is larger than $O(\log(1/\delta)/\varepsilon)$. This works even though there are infinitely many subspaces to consider, which would not be possible under concentrated differential privacy. 

The simplest score function would simply be the number of input points that the subspace contains. This assigns high score to the correct subspace, but it also assigns high score to any larger subspace that contains the correct subspace. To remedy this, we subtract from the score the number of points contained in a strictly smaller subspace. That is, the score of subspace $S$ is the number of points in $S$ minus the maximum over all subspaces $S' \subsetneq S$ of the number of points contained in $S'$.

This $\GAPMAX$ approach easily solves the exact case, but it does not readily extend to the approximate case. If we count points near to the subspace, rather than in it, then (infinitely) many subspaces will have high score, which violates the assumptions needed for $\GAPMAX$ to work. Thus we use a completely different approach for the approximate case.

We apply the ``subsample and aggregate'' paradigm of \cite{NissimRS07}. That is, we split the dataset $X_1, \cdots, X_n$ into $n/O(k)$ sub-datasets each of size $O(k)$. We use each sub-dataset to compute an approximation to the subspace by doing a (non-private) PCA on the sub-dataset. Let $\Pi$ be the projection matrix onto the correct subspace and $\Pi_1, \cdots, \Pi_{n/O(k)}$ the projection matrices onto the approximations derived from the sub-datasets. With high probability $\|\Pi_j-\Pi\|$ is small for most $j$. (Exactly how small depends on the eigengap.) Now we must privately aggregate the projection matrices $\Pi_1, \cdots, \Pi_{n/O(k)}$ into a single projection matrix.

Rather than directly trying to aggregate the projection matrices, we pick a set of reference points, project them onto the subspaces, and then aggregate the projected points. We draw $p_1, \cdots, p_{O(k)}$ independently from a standard spherical Gaussian. Then $\|\Pi_j p_i - \Pi p_i \| \le \|\Pi_j - \Pi\| \cdot O(\sqrt{k})$ is also small for all $i$ and most $j$. We wish to privately approximate $\Pi p_i$ and to do this we have $n/O(k)$ points $\Pi_j p_i$ most of which are close to $\Pi p_i$. This is now a location or mean estimation problem, which we can solve privately. Thus we obtain points $\hat p_i$ such that $\|\hat p_i - \Pi p_i\|$ is small for all $i$. From a PCA of these points we can obtain a projection $\hat\Pi$ with $\|\hat\Pi-\Pi\|$ being small, as required.

Finally, we discuss how to privately obtain $(\hat p_1, \hat p_2, \cdots, \hat p_{O(k)})$ from $(\Pi_1 p_1, \cdots, \Pi_1 p_{O(k)}), \cdots,$\\$(\Pi_{n/O(k)} p_1, \cdots, \Pi_{n/O(k)} p_{O(k)})$. It is better here to treat $(\hat p_1, \hat p_2, \cdots, \hat p_{O(k)})$ as a single vector in $\mathbb{R}^{O(kd)}$, rather than as $O(k)$ vectors in $\mathbb{R}^d$. We split $\mathbb{R}^{O(kd)}$ into cells and then run a differentially private histogram algorithm.
If we construct the cells carefully, for most $j$ we have that $(\Pi_j p_1, \cdots, \Pi_j p_{O(k)})$ is in the same histogram cell as the desired point $(\Pi p_1, \cdots, \Pi p_{O(k)})$. The histogram algorithm will thus identify this cell, and we take an arbitrary point from this cell as our estimate $(\hat p_1, \hat p_2, \cdots, \hat p_{O(k)})$. The differentially private histogram algorithm is run over exponentially \vnote{infinitely?} many cells, which is possible under $(\varepsilon,\delta)$-differential privacy if $n/O(k) \ge O(\log(1/\delta)/\varepsilon)$. (Note that under concentrated differential privacy the histogram algorithm's sample complexity $n$ would need to depend on the number of cells and, hence, the ambient dimension $d$.)

The main technical ingredients in the analysis of our algorithm for the approximate case are matrix perturbation and concentration analysis and the location estimation procedure using differentially private histograms.
Our matrix perturbation analysis uses a variant of the Davis-Kahan theorem to show that if the empirical covariance matrix is close to the true covariance matrix, then the subspaces corresponding to the top $k$ eigenvalues of each are also close; this is applied to both the subsamples and the projection of the reference points.
The matrix concentration results that we use show that the empirical covariance matrices in all the subsamples are close to the true covariance matrix.
This is the only place where the multivariate Gaussian assumption arises. Any distribution that concentrates well will work.

\section{Notations, Definitions, and Background Results}\label{sec:preliminaries}

\subsection{Linear Algebra and Probability Preliminaries}

Here, we mention a few key technical results that we will
be using to prove the main theorem for the approximate
case. Throughout this document, we assume that the dimension
$d$ is larger than some absolute constant, and adopt the following
notation: for a matrix $A$ of rank $r$, we use $s_1(A)\geq\dots\geq s_r(A)$
to denote the singular values of $A$ in decreasing order, and
use $\lambda_1(A)\geq\dots\geq\lambda_r(A)$ to denote the
eigenvalues of $A$ in decreasing order; let $s_{\min}(A)$ denote
the least, non-zero singular value of $A$. We omit the parentheses
when the context is clear. We begin by stating
two results about matrix perturbation
theory. The first result says that if two matrices are close
to one another in operator norm, then their corresponding
singular values are also close to one another.

Define \[\|M\| := \sup \{\|Mx\|_2 : x \in \mathbb{R}^d, ~ \|x\|_2 \le 1\}\]
to be the operator norm with respect to the Euclidean vector
norm.

\begin{lemma}[Singular Value Inequality]\label{lem:weyl-singular}
    Let $A,B \in \RR^{d \times n}$ and let $r = \min\{d,n\}$.
    Then for $1 \leq i,j \leq r$,
    $$s_{i+j-1}(A+B) \leq s_i(A) + s_j(B).$$
\end{lemma}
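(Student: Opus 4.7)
The plan is to prove this by the Courant--Fischer / min--max characterization of singular values, which for $M \in \mathbb{R}^{d \times n}$ states
\[
s_k(M) \;=\; \min_{\substack{V \subseteq \mathbb{R}^n \\ \dim V = n-k+1}} \;\max_{\substack{x \in V \\ \|x\|_2 = 1}} \|Mx\|_2
\]
for $1 \le k \le r = \min\{d,n\}$. I would take this as a known fact (it follows from the SVD by the standard argument).

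First, I would pick subspaces $V_A, V_B \subseteq \mathbb{R}^n$ that achieve the minima for $s_i(A)$ and $s_j(B)$ respectively; that is, $\dim V_A = n-i+1$ with $\max_{x \in V_A, \|x\|_2 = 1} \|Ax\|_2 = s_i(A)$, and similarly $\dim V_B = n-j+1$ with $\max_{x \in V_B, \|x\|_2 = 1} \|Bx\|_2 = s_j(B)$. Next, I would invoke the dimension-of-intersection inequality: $\dim(V_A \cap V_B) \ge \dim V_A + \dim V_B - n = n - (i+j-1) + 1$, provided $i + j - 1 \le r$, which is given.

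The third step is to use $V_A \cap V_B$ (or any $(n-(i+j-1)+1)$-dimensional subspace contained in it) as a feasible witness for the min in the variational formula for $s_{i+j-1}(A+B)$. For any unit $x \in V_A \cap V_B$, the triangle inequality yields $\|(A+B)x\|_2 \le \|Ax\|_2 + \|Bx\|_2 \le s_i(A) + s_j(B)$; taking the maximum over such $x$ and then the min over admissible subspaces gives the claim.

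The proof is essentially a one-line dimension count combined with the triangle inequality, so there is no real obstacle once the min--max characterization is in hand. The only subtlety worth flagging is the non-square case: since $A, B \in \mathbb{R}^{d \times n}$, the variational principle should be applied over subspaces of $\mathbb{R}^n$ (the domain), and the hypothesis $i + j - 1 \le r = \min\{d,n\}$ is exactly what is needed to ensure both that $s_{i+j-1}(A+B)$ is defined and that the intersection $V_A \cap V_B$ has positive dimension.
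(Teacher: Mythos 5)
Your proof is correct and is the standard variational argument for Weyl's inequality for singular values (Courant--Fischer characterization over subspaces of the domain $\mathbb{R}^n$, plus the dimension-of-intersection count and the triangle inequality). The paper states this lemma as a known background fact without proof, so there is no alternative argument to compare against; your handling of the non-square case and of the implicit requirement $i+j-1\le r$ is appropriate.
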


The following result gives a lower bound on the least
singular value of sum of two matrices.

\begin{lemma}[Least Singular Value of Matrix Sum]\label{lem:least-singular}
    Let $A,B \in \RR^{d\times n}$. Then
    $$s_{\min}(A+B) \geq s_{\min}(A) - \|B\|.$$
\end{lemma}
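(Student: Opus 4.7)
The plan is to obtain this lower bound as a direct corollary of the singular value inequality already available as Lemma \ref{lem:weyl-singular}. The trick is to apply that inequality not to $A$ and $B$ themselves, but to the trivial identity $A = (A+B) + (-B)$, viewing $A$ as a sum of the two matrices $A+B$ and $-B$.

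Concretely, let $r = \min\{d,n\}$ and apply Lemma \ref{lem:weyl-singular} to the pair of matrices $A+B$ and $-B$ with index choice $i=r$, $j=1$. Since $i + j - 1 = r$, the lemma yields $s_r(A) \leq s_r(A+B) + s_1(-B)$. The top singular value $s_1(-B)$ is precisely the operator norm $\norm{B}$, so rearranging gives $s_r(A+B) \geq s_r(A) - \norm{B}$, which is the stated inequality once we identify $s_{\min}$ with $s_r$.

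There is essentially no obstacle here: the whole content of the proof is the decomposition $A = (A+B) + (-B)$ that lets us turn the upper bound of Lemma \ref{lem:weyl-singular} into the desired lower bound. The only minor bookkeeping concerns the paper's convention that $s_{\min}$ refers to the least \emph{nonzero} singular value; the argument above compares the $r$-th singular values, and one should check that in the regime where the lemma is applied later (namely when $A$ and $A+B$ have full rank) the two notions coincide, so the stated inequality follows.
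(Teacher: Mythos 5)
Your proof is correct. The paper states Lemma~\ref{lem:least-singular} as a background fact with no proof at all, so there is nothing to compare against; the derivation you give -- applying Lemma~\ref{lem:weyl-singular} to the decomposition $A=(A+B)+(-B)$ with $i=r$, $j=1$ to get $s_r(A)\le s_r(A+B)+\|B\|$ and rearranging -- is exactly the standard argument, and it is complete. Your closing caveat is also well taken and is more than bookkeeping: under the paper's literal convention that $s_{\min}$ denotes the least \emph{nonzero} singular value, the statement as written is actually false in general (take $A=\mathrm{diag}(1,0)$ and $B=\mathrm{diag}(0,\epsilon)$, so that $s_{\min}(A)=1$, $\|B\|=\epsilon$, but $s_{\min}(A+B)=\epsilon$). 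The inequality you prove, with $s_{\min}=s_r$, is the correct version, and it is the version the paper actually uses in Lemma~\ref{lem:final-projection}, where the relevant matrices are full rank within the $k$-dimensional subspace in question, so the two notions coincide there.
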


%\begin{lemma}[Weyl's Inequality]\label{lem:weyl}
%    Let $A$ and $B$ be $n \times n$ hermitian matrices.
%    Suppose $\lambda_1\geq\dots\geq\lambda_n$ are eigenvalues
%    of $A$, and $\nu_1\geq\dots\geq\nu_n$ are eigenvalues
%    of $B$. Then for each $1 \leq i \leq n$,
%    $$\abs{\lambda_i - \nu_i} \leq \|A-B\|.$$
%\end{lemma}

The next result bounds the angle between the subspaces
spanned by two matrices that are close to one another.
Let $X \in \RR^{d\times n}$ have the following SVD.
$$X=
    \begin{bmatrix}
        U & U_{\bot}
    \end{bmatrix} \cdot
    \begin{bmatrix}
        \Sigma_1 & 0\\
        0 & \Sigma_2
    \end{bmatrix} \cdot
    \begin{bmatrix}
        V^T\\
        V_{\bot}^T
    \end{bmatrix}$$
In the above, $U,U_{\bot}$ are orthonormal matrices
such that $U\in \RR^{d\times r}$ and $U_{\bot}\in \RR^{d \times (d-r)}$,
$\Sigma_1,\Sigma_2$ are diagonal matrices, such that
$\Sigma_1 \in \RR^{r \times r}$ and $\Sigma_2 \in \RR^{(d-r)\times(n-r)}$,
and $V,V_{\bot}$ are orthonormal matrices, such that
$V \in \RR^{n \times r}$ and $V_{\bot} \in \RR^{n \times (n-r)}$.
Let $Z \in \RR^{d\times n}$ be a perturbation matrix,
and $\hat{X} =  X + Z$, such that $\hat{X}$ has the
following SVD.
$$\hat{X}=
    \begin{bmatrix}
        \hat{U} & \hat{U}_{\bot}
    \end{bmatrix} \cdot
    \begin{bmatrix}
        \hat{\Sigma_1} & 0\\
        0 & \hat{\Sigma}_2
    \end{bmatrix} \cdot
    \begin{bmatrix}
        \hat{V}^T\\
        \hat{V}_{\bot}^T
    \end{bmatrix}$$
In the above, $\hat{U},\hat{U}_{\bot},\hat{\Sigma}_1,\hat{\Sigma}_2,\hat{V},\hat{V}_{\bot}$
have the same structures as $U,U_{\bot},\Sigma_1,\Sigma_2,V,V_{\bot}$
respectively. Let $Z_{21} = U_{\bot}U_{\bot}^T Z VV^T$ and
$Z_{12} = UU^T Z V_{\bot}V_{\bot}^T$. Suppose $\sigma_1\geq\dots\geq\sigma_r\geq0$
are the singular values of $U^T\hat{U}$. Let
$\Theta(U,\hat{U}) \in \RR^{r\times r}$ be a diagonal matrix,
such that $\Theta_{ii}(U,\hat{U}) = \cos^{-1}(\sigma_i)$.

\begin{lemma}[$\text{Sin}(\Theta)$ Theorem \cite{CaiZ16}]\label{lem:sin-theta}
    Let $X,\hat{X},Z,Z_{12},Z_{21}$ be defined as above.
    Denote $\alpha = s_{\min}(U^T \hat{X} V)$ and
    $\beta = \|U_{\bot}^T \hat{X} V_{\bot}\|$.
    If $\alpha^2 > \beta^2 + \min\{\|Z_{12}\|^2,\|Z_{21}\|^2\}$,
    then we have the following.
    $$\|\text{Sin}(\Theta)(U,\hat{U})\| \leq
        \frac{\alpha\|Z_{21}\|+\beta\|Z_{12}\|}
        {\alpha^2-\beta^2-\min\{\|Z_{12}\|^2,\|Z_{21}\|^2\}}$$
\end{lemma}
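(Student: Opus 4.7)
The goal is to bound $\|\sin\Theta(U,\hat U)\|$. Since the principal angles are determined by the singular values of $U^T\hat U$, and since the columns of $\hat U$ being orthonormal forces $K^T K + R_1^T R_1 = I_r$ for $K := U^T\hat U$ and $R_1 := U_\bot^T\hat U$, the singular values of $R_1$ are exactly the sines of those angles, so $\|\sin\Theta(U,\hat U)\| = \|R_1\|$. I also set $P := V^T\hat V$, $R_2 := V_\bot^T\hat V$, and introduce the $2\times 2$ block decomposition $A_{ij}$ of $\hat X$ in the $(U,U_\bot)$-by-$(V,V_\bot)$ basis. Since $X$ is block-diagonal in this basis, the off-diagonal blocks depend only on $Z$, giving (outer orthonormal factors preserve the operator norm) $\|A_{12}\| = \|Z_{12}\|$, $\|A_{21}\| = \|Z_{21}\|$, $s_{\min}(A_{11}) = \alpha$, and $\|A_{22}\| = \beta$. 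The overall plan is to derive a Sylvester-type equation for $R_1$, bound its solution via the spectral gap, and take the tighter of two such bounds (one for each of $\|Z_{12}\|, \|Z_{21}\|$).

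Projecting $\hat X\hat V = \hat U\hat\Sigma_1$ onto $U^T$ and $U_\bot^T$, and $\hat X^T\hat U = \hat V\hat\Sigma_1$ onto $V^T$ and $V_\bot^T$, produces four block identities:
\[K\hat\Sigma_1 = A_{11} P + A_{12} R_2, \quad R_1\hat\Sigma_1 = A_{21} P + A_{22} R_2,\]
\[P\hat\Sigma_1 = A_{11}^T K + A_{21}^T R_1, \quad R_2\hat\Sigma_1 = A_{12}^T K + A_{22}^T R_1.\]
Right-multiplying the second by $\hat\Sigma_1$ and substituting the third and fourth yields the Sylvester equation
\[R_1\hat\Sigma_1^2 - (A_{21}A_{21}^T + A_{22}A_{22}^T)R_1 = (A_{21}A_{11}^T + A_{22}A_{12}^T)K.\]
The standard operator-norm bound for such equations $R N - M R = C$ gives $\|R_1\| \le \|C\|/(\lambda_{\min}(\hat\Sigma_1^2) - \lambda_{\max}(M))$. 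Here $\lambda_{\min}(\hat\Sigma_1^2) \ge \alpha^2$ by Poincar\'e separation (the $r\times r$ compression $A_{11}$ of $\hat X$ satisfies $s_{\min}(A_{11}) \le s_{\min}(\hat X)$), and $\lambda_{\max}(M) = \|[A_{21},A_{22}]\|^2 \le \|Z_{21}\|^2 + \beta^2$, so the denominator is at least $\alpha^2 - \beta^2 - \|Z_{21}\|^2$.

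The delicate step is sharpening the numerator from the naive triangle-inequality bound $\|A_{11}\|\,\|Z_{21}\| + \beta\|Z_{12}\|$ down to the stated $\alpha\|Z_{21}\| + \beta\|Z_{12}\|$; this is nontrivial because $\|A_{11}\|$ may dramatically exceed $\alpha = s_{\min}(A_{11})$. The cleanest path is a Schur-complement maneuver: solving the first identity for $P = A_{11}^{-1}(K\hat\Sigma_1 - A_{12}R_2)$ and substituting into the second gives $(R_1 - A_{21}A_{11}^{-1}K)\hat\Sigma_1 = S R_2$ with Schur complement $S := A_{22} - A_{21}A_{11}^{-1}A_{12}$, and the analogous manipulation of the third and fourth identities gives $(R_2 - (A_{11}^{-1}A_{12})^T P)\hat\Sigma_1 = S^T R_1$. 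Bounding each of these in operator norm using $\|A_{11}^{-1}\| \le 1/\alpha$ yields a coupled linear system in $\|R_1\|, \|R_2\|$; solving the resulting quadratic in $\|R_1\|$ produces the target bound with $\|Z_{21}\|^2$ in the denominator. Running the symmetric argument starting from $\hat X^T$ in place of $\hat X$ swaps the roles of $Z_{12}$ and $Z_{21}$ and yields the analogous bound with $\|Z_{12}\|^2$ in the denominator; taking the tighter of the two bounds delivers the $\min\{\|Z_{12}\|^2,\|Z_{21}\|^2\}$ in the statement. The main obstacle throughout is precisely this reduction of $\|A_{11}\|$ down to $\alpha$, which is what the Schur-complement step accomplishes and which distinguishes the bound here from the classical Wedin $\sin\Theta$ inequality.
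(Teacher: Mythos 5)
The paper does not actually prove this lemma: it is imported verbatim from Cai and Zhang \cite{CaiZ16} as a background result, so there is no in-paper argument to compare yours against. Judged on its own terms, your setup is sound: the identification $\|\mathrm{Sin}(\Theta)(U,\hat U)\| = \|R_1\|$ with $R_1 = U_\bot^T\hat U$, the four block identities obtained by projecting $\hat X\hat V = \hat U\hat\Sigma_1$ and $\hat X^T\hat U = \hat V\hat\Sigma_1$, the Sylvester equation, and the denominator estimate $\lambda_{\min}(\hat\Sigma_1^2) - \lambda_{\max}(A_{21}A_{21}^T + A_{22}A_{22}^T) \ge \alpha^2 - \beta^2 - \|Z_{21}\|^2$ are all correct. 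You also correctly identify the crux: the naive numerator contains $\|A_{11}\|$ rather than $\alpha = s_{\min}(A_{11})$.

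The gap is that the repair you propose does not deliver the stated constants. First, the Schur-complement route yields the coupled system $\|R_1\| \le \|Z_{21}\|/\alpha + \|S\|\|R_2\|/\alpha$ and $\|R_2\| \le \|Z_{12}\|/\alpha + \|S\|\|R_1\|/\alpha$, hence $\|R_1\| \le \frac{\alpha\|Z_{21}\| + \|S\|\|Z_{12}\|}{\alpha^2 - \|S\|^2}$ with $S = A_{22} - A_{21}A_{11}^{-1}A_{12}$; but the best available estimate is $\|S\| \le \beta + \|Z_{12}\|\|Z_{21}\|/\alpha$, not $\beta$, and a direct computation (e.g.\ with $\|Z_{12}\| = \|Z_{21}\| = z$ and $\beta > 0$) shows the resulting bound is strictly weaker than $\frac{\alpha\|Z_{21}\|+\beta\|Z_{12}\|}{\alpha^2-\beta^2-z^2}$; the cross terms do not cancel. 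Second, the symmetry argument for the $\min$ in the denominator fails: replacing $\hat X$ by $\hat X^T$ interchanges $U$ with $V$ and therefore produces a bound on $\|\mathrm{Sin}(\Theta)(V,\hat V)\|$ (with numerator $\alpha\|Z_{12}\| + \beta\|Z_{21}\|$), not a second bound on $\|\mathrm{Sin}(\Theta)(U,\hat U)\|$ with $\|Z_{12}\|^2$ in the denominator. Obtaining the exact inequality --- $\alpha$ in the numerator \emph{and} $\min\{\|Z_{12}\|^2,\|Z_{21}\|^2\}$ in the denominator for the same quantity --- requires the finer case analysis in Cai and Zhang's original proof; your argument as written establishes only a qualitatively similar but quantitatively weaker perturbation bound.
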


%\begin{lemma}[Davis-Kahan
%    \footnote{\href{https://www.cs.columbia.edu/~djhsu/coms4772-f16/lectures/davis-kahan.pdf}
%    {\texttt{https://www.cs.columbia.edu/\textasciitilde{}djhsu/coms4772-f16/lectures/davis-kahan.pdf}}}]\label{lem:davis-kahan}
%    Let $A,\tilde{A},U,\tilde{U},\Lambda,\tilde\Lambda
%    \in \mathbb{R}^{d \times d}$ satisfy the following.
%    (i) $A^T=A$, $\tilde{A}^T=A$, (ii) $A=U\Lambda U^T$,
%    $\tilde{A} = \tilde{U} \tilde{\Lambda} \tilde{U}^T$,
%    (iii) $U^TU=I_d=\tilde{U}^T\tilde{U}$, (iv) $\Lambda$
%    and $\tilde\Lambda$ are diagonal matrices with entries
%    in descending order. Let $\lambda_i, \tilde{\lambda}_i$
%    denote the $i^\text{th}$ diagonal entry of $\Lambda, \tilde\Lambda$.
%    Denote $U_{a:b}, \tilde{U}_{a:b} \in \mathbb{R}^{d \times (b-a+1)}$
%    to be the matrix formed by columns $a, a+1, \cdots, b$ of $U$
%    and $\tilde{U}$ respectively (i.e., corresponding to
%    $\lambda_a \ge \lambda_{a+1} \ge \cdots \ge \lambda_b$
%    and $\tilde\lambda_a \ge \tilde\lambda_{a+1} \ge \cdots \ge \tilde\lambda_b$).
%     Then, for all $k \in [d-1]$, if $\lambda_k > \tilde\lambda_{k+1}$,
%    we have
%    \[\|\tilde{U}_{1:k}^T {U}_{k+1:d}\| \le \frac{\|\tilde{U}_{1:k}^T
%        (\tilde{A}-A) {U}_{k+1:d}\|}{\lambda_k - \tilde\lambda_{k+1}}.\]
%\end{lemma}

The next result bounds $\|\text{Sin}(\Theta)(U,\hat{U})\|$ in terms
of the distance between $UU^T$ and $\hat{U}\hat{U}^T$.

\begin{lemma}[Property of $\|\text{Sin}(\Theta)\|$ \cite{CaiZ16}]\label{lem:sin-theta-property}
    Let $U,\hat{U} \in \RR^{d \times r}$ be orthonormal
    matrices, and let $\Theta(U,\hat{U})$ be defined as above in terms
    of $\hat{U},U$. Then we have the following.
    $$\|\text{Sin}(\Theta)(U,\hat{U})\| \leq \|\hat{U}\hat{U}^T-UU^T\|
        \leq 2\|\text{Sin}(\Theta)(U,\hat{U})\|$$
\end{lemma}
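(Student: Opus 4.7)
The plan is to prove both inequalities by relating $\|U_{\bot}^T \hat{U}\|$ (and the symmetric quantity $\|U^T \hat{U}_{\bot}\|$) to $\|\text{Sin}(\Theta)(U,\hat{U})\|$, and then to the projection difference $P - Q$, where I write $P = UU^T$ and $Q = \hat{U}\hat{U}^T$.

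First, the key preliminary fact is that if the singular values of $U^T \hat{U}$ are $\sigma_1 \geq \dots \geq \sigma_r \geq 0$ with $\sigma_i = \cos(\theta_i)$ (so $\Theta_{ii} = \theta_i$), then the singular values of $U_{\bot}^T \hat{U}$ are $\sqrt{1-\sigma_i^2} = \sin(\theta_i)$, and likewise for $U^T \hat{U}_{\bot}$. I would verify this in one line from the identity $\hat{U}^T U U^T \hat{U} + \hat{U}^T U_{\bot} U_{\bot}^T \hat{U} = \hat{U}^T \hat{U} = I$, which gives that the eigenvalues of $\hat{U}^T U_{\bot} U_{\bot}^T \hat{U}$ are $1-\sigma_i^2$. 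Consequently $\|U_{\bot}^T \hat{U}\| = \|U^T \hat{U}_{\bot}\| = \max_i \sin(\theta_i) = \|\text{Sin}(\Theta)(U,\hat{U})\|$.

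For the lower bound $\|\text{Sin}(\Theta)(U,\hat{U})\| \leq \|P - Q\|$, I would observe that $U_{\bot}^T P = U_{\bot}^T U U^T = 0$, so $U_{\bot}^T(P - Q) = -U_{\bot}^T \hat{U}\hat{U}^T$. Since $\|U_{\bot}^T\| = 1$ and multiplying by the projection $\hat{U}\hat{U}^T$ on the right does not change the operator norm of $U_{\bot}^T \hat{U}$ (apply the identity $\|A\hat{U}\| = \|A\hat{U}\hat{U}^T\|$ both ways), this yields $\|\text{Sin}(\Theta)(U,\hat{U})\| = \|U_{\bot}^T \hat{U}\| = \|U_{\bot}^T(P-Q)\| \leq \|P - Q\|$.

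For the upper bound $\|P - Q\| \leq 2\|\text{Sin}(\Theta)(U,\hat{U})\|$, I would decompose using $I = \hat{U}\hat{U}^T + \hat{U}_{\bot}\hat{U}_{\bot}^T = UU^T + U_{\bot}U_{\bot}^T$ to write
\[ P - Q = P(I-Q) - (I-P)Q = UU^T \hat{U}_{\bot}\hat{U}_{\bot}^T - U_{\bot}U_{\bot}^T \hat{U}\hat{U}^T. \]
The triangle inequality bounds $\|P-Q\|$ by $\|UU^T \hat{U}_{\bot}\hat{U}_{\bot}^T\| + \|U_{\bot}U_{\bot}^T \hat{U}\hat{U}^T\|$; each of these is at most $\|U^T\hat{U}_{\bot}\|$ and $\|U_{\bot}^T \hat{U}\|$ respectively (projections have operator norm at most one), and by the preliminary fact both equal $\|\text{Sin}(\Theta)(U,\hat{U})\|$. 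Summing gives the factor of $2$.

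I do not expect any real obstacle here; the only subtlety is keeping the principal-angle/singular-value correspondence straight, and making sure the identity that tacking on a projection matrix on either side of a product does not change the spectral norm is invoked cleanly. Everything else is triangle inequality and block decomposition.
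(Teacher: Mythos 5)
Your proof is correct. The paper states this lemma as a background result imported from Cai--Zhang and gives no proof of its own, so there is nothing to compare against; your argument (the identity $\hat{U}^T UU^T\hat{U} + \hat{U}^T U_{\bot}U_{\bot}^T\hat{U} = I_r$ giving $\|U_{\bot}^T\hat{U}\| = \|U^T\hat{U}_{\bot}\| = \|\text{Sin}(\Theta)(U,\hat{U})\|$, then $U_{\bot}^T(P-Q) = -U_{\bot}^T\hat{U}\hat{U}^T$ for the lower bound and the decomposition $P-Q = P(I-Q)-(I-P)Q$ with the triangle inequality for the upper bound) is the standard derivation and checks out in every step.
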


The next result bounds the singular values of a matrix,
whose columns are independent vectors from a mean zero,
isotropic distribution in $\R^d$. We first define the
sub-Gaussian norm of a random variable.

\begin{definition}
    Let $X$ be a sub-Gaussian random variable. The sub-Gaussian
    norm of $X$, denoted by $\|X\|_{\psi^2}$, is defined as,
    $$\|X\|_{\psi^2} = \inf\{t>0: \ex{}{\exp(X^2/t^2)} \leq 2\}.$$
\end{definition}

\begin{lemma}[Theorem 4.6.1 \cite{Vershynin18}]\label{lem:sub-gaussian-spectrum}
    Let $A$ be an $n \times m$ matrix, whose columns $A_i$
    are independent, mean zero, sub-Gaussian isotropic
    random vectors in $\R^n$. Then for any $t \geq 0$,
    we have
    $$\sqrt{m} - CK^2(\sqrt{n}+t) \leq s_n(A) \leq
        s_1(A) \leq \sqrt{m} + CK^2(\sqrt{n}+t)$$
    with probability at least $1-2\mathrm{exp}(-t^2)$.
    Here, $K = \max_i\|A\|_{\psi^2}$ (sub-Gaussian norm
    of $A$).
\end{lemma}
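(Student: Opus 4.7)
This is the standard sub-Gaussian matrix concentration bound (Theorem 4.6.1 of Vershynin), and I would prove it by the classical epsilon-net plus Bernstein argument. The key observation is that the conclusion $\sqrt m - \eta \le s_n(A) \le s_1(A) \le \sqrt m + \eta$ with $\eta = CK^2(\sqrt n + t)$ is equivalent, up to constants, to the deviation bound
\[
\bigl\lVert A A^T - m \cdot I_n \bigr\rVert \;\le\; CK^2 \bigl(\sqrt{m}(\sqrt n + t) + (\sqrt n + t)^2\bigr),
\]
because $s_i(A)^2$ are the eigenvalues of $AA^T$ and $|s_i(A) - \sqrt m| = |s_i(A)^2 - m|/(s_i(A) + \sqrt m)$. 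So the whole job is to bound $\|AA^T - mI_n\|$.

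\textbf{Main steps.} First, since the columns $A_i$ are independent, mean zero, and isotropic in $\mathbb{R}^n$, we have $\mathbb{E}[A A^T] = \sum_{i=1}^m \mathbb{E}[A_i A_i^T] = m \cdot I_n$. For any fixed unit vector $x \in S^{n-1}$, the quadratic form is
\[
x^T A A^T x \;=\; \sum_{i=1}^m \langle A_i, x\rangle^2,
\]
whose expectation is $m$ and whose summands are independent squares of sub-Gaussian random variables with sub-Gaussian norm at most $K$; hence $\langle A_i, x\rangle^2 - 1$ is centered sub-exponential with norm $O(K^2)$. Bernstein's inequality then gives the single-point tail
\[
\Pr\bigl[\,|x^T A A^T x - m| \ge CK^2(\sqrt{m u} + u)\,\bigr] \;\le\; 2\exp(-u).
\]
Second, I would take a $1/4$-net $\mathcal N \subset S^{n-1}$ with $|\mathcal N| \le 9^n$ (standard volumetric bound), and use the inequality $\|M\| \le 2 \max_{x\in\mathcal N} |x^T M x|$ for symmetric $M$. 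Applying the single-point tail with $u = c(n + t^2)$ and union bounding over $\mathcal N$ absorbs the $9^n$ factor into the exponent and yields
\[
\bigl\lVert A A^T - m I_n \bigr\rVert \;\le\; CK^2\bigl(\sqrt m \cdot(\sqrt n + t) + (\sqrt n + t)^2\bigr)
\]
with probability at least $1 - 2\exp(-t^2)$.

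\textbf{From the operator norm bound to singular values.} Writing the bound as $|s_i(A)^2 - m| \le \Delta$ with $\Delta := CK^2\bigl(\sqrt m(\sqrt n + t) + (\sqrt n + t)^2\bigr)$, I would conclude $|s_i(A) - \sqrt m| \le \Delta/\sqrt m \le CK^2(\sqrt n + t)$ in the regime where $\sqrt n + t \le c\sqrt m$; when the latter fails, the claimed bound is vacuous (the lower bound becomes non-positive) and there is nothing to prove. Combining these two regimes and adjusting constants yields the stated two-sided inequality.

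\textbf{Main obstacle.} The only nontrivial step is getting the correct $\sqrt n + t$ scaling rather than a weaker $\sqrt n \cdot t$ scaling; this requires being careful to combine the two Bernstein regimes (the sub-Gaussian tail governed by $\sqrt{mu}$ and the sub-exponential tail governed by $u$) and then to divide by $\sqrt m$ to recover singular values from squared singular values. The net argument and the scalar Bernstein estimate are routine, so I would allocate most of the effort to verifying that $\Delta/\sqrt m$ collapses to the clean $CK^2(\sqrt n + t)$ form in the relevant regime.
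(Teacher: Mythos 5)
This lemma is quoted in the paper as a known background result (Theorem 4.6.1 of Vershynin) and is not proved there, so there is no in-paper argument to compare against; your outline is precisely the standard $\varepsilon$-net-plus-Bernstein proof from that reference and is essentially correct. One small imprecision: in the regime $\sqrt{n}+t\gtrsim\sqrt{m}$ only the \emph{lower} bound is vacuous, so the upper bound $s_1\le\sqrt{m}+CK^2(\sqrt{n}+t)$ still needs the extra line $s_1\le\sqrt{m+\Delta}\le\sqrt{m}+\sqrt{\Delta}$ together with $K\gtrsim 1$ (which holds for isotropic sub-Gaussian vectors) — this is exactly what Vershynin's Lemma 4.1.5 packages via the $\max\{\delta,\delta^2\}$ formulation.
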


In the above, $\|A\|_{\psi^2} \in O(1)$ if the distribution
in question is $\cN(\vec{0},\id)$. The following corollary
generalises the above result for arbitrary Gaussians.

\begin{corollary}\label{coro:normal-spectrum}
    Let $A$ be an $n \times m$ matrix, whose columns $A_i$
    are independent, random vectors in $\R^n$ from $\cN(\vec{0},\Sigma)$.
    Then for any $t \geq 0$, we have
    $$(\sqrt{m} - CK^2(\sqrt{n} + t))\sqrt{s_n(\Sigma)}
        \leq s_n(A) \leq (\sqrt{m} + CK^2(\sqrt{n} + t))\sqrt{s_n(\Sigma)}$$
    and
    $$s_1(A) \leq (\sqrt{m} + CK^2(\sqrt{n} + t))\sqrt{s_1(\Sigma)}$$
    with probability at least $1-2\mathrm{exp}(-t^2)$.
    Here, $K = \max_i\|A\|_{\psi^2}$ (sub-Gaussian norm
    of $A$).
\end{corollary}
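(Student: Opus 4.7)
The plan is to reduce to the isotropic case already handled by Lemma \ref{lem:sub-gaussian-spectrum}. Assuming $\Sigma$ is positive definite, I would write $A = \Sigma^{1/2} B$ where $B \in \RR^{n \times m}$ has columns $B_i = \Sigma^{-1/2} A_i$, which are then i.i.d.\ samples from $\cN(\vec 0, \id)$. Because a standard Gaussian vector has $O(1)$ sub-Gaussian norm, Lemma \ref{lem:sub-gaussian-spectrum} applies to $B$ with an absolute constant in place of $K$, giving $\sqrt{m} - CK^2(\sqrt{n}+t) \leq s_n(B) \leq s_1(B) \leq \sqrt{m} + CK^2(\sqrt{n}+t)$ with probability at least $1 - 2\exp(-t^2)$.

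The remaining step is to transfer the singular value bounds from $B$ to $A = \Sigma^{1/2} B$ using three elementary operator-norm inequalities. First, by submultiplicativity, $s_1(A) \leq s_1(\Sigma^{1/2}) \cdot s_1(B) = \sqrt{s_1(\Sigma)} \cdot s_1(B)$, which yields the upper bound on $s_1(A)$. Second, for the lower bound on $s_n(A)$, I would start from $s_n(A) = \min_{\|u\|_2 = 1} \|B^T \Sigma^{1/2} u\|_2$, substitute $v = \Sigma^{1/2} u$ so that $\|v\|_2 \geq \sqrt{s_n(\Sigma)} \|u\|_2$, and apply $\|B^T v\|_2 \geq s_n(B)\|v\|_2$ to obtain $s_n(A) \geq \sqrt{s_n(\Sigma)} \cdot s_n(B)$. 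Third, for the matching upper bound on $s_n(A)$, I would test against the unit eigenvector $e_n$ of $\Sigma$ associated with $s_n(\Sigma)$: since $\Sigma^{1/2} e_n = \sqrt{s_n(\Sigma)} e_n$, we get $s_n(A) \leq \|A^T e_n\|_2 = \sqrt{s_n(\Sigma)} \|B^T e_n\|_2 \leq \sqrt{s_n(\Sigma)} \cdot s_1(B)$. Combining these three inequalities with the bounds on $s_1(B)$ and $s_n(B)$ from Lemma \ref{lem:sub-gaussian-spectrum} immediately yields both displayed inequalities in the corollary, at the same probability $1 - 2\exp(-t^2)$.

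The only subtlety is the rank-deficient case where $\Sigma$ is singular, so $\Sigma^{1/2}$ is non-invertible and the substitution $B = \Sigma^{-1/2} A$ is not well defined. In that case $s_n(\Sigma) = 0$ and the columns of $A$ lie almost surely in $\mathrm{range}(\Sigma)$, so $s_n(A) = 0$ almost surely and the $s_n$ bounds hold trivially; one can still apply the reduction within $\mathrm{range}(\Sigma)$ to handle $s_1(A)$. Other than this edge case I do not anticipate any real obstacle, as the proof is essentially bookkeeping with the operator norm.
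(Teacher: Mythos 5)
Your proposal is correct and follows essentially the same route as the paper: whiten via $A = \Sigma^{1/2}B$ with $B = \Sigma^{-1/2}A$ having i.i.d.\ standard Gaussian columns, apply Lemma~\ref{lem:sub-gaussian-spectrum} to $B$, and transfer the singular-value bounds using elementary operator-norm facts (the paper's transfer steps are the same inequalities, just phrased via ratio manipulations and the identity $s_{\min}(XY)\le s_{\min}(X)\|Y\|$ rather than your eigenvector test). Your explicit treatment of the singular-$\Sigma$ case is a small bonus the paper omits, and is in fact relevant since the corollary is later applied to the rank-deficient $\Sigma_k$.
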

\begin{proof}
    First, we prove the lower bound on $s_n(A)$. Note
    that $s_n(A) = \min\limits_{\|x\|>0}{\frac{\|Ax\|}{\|x\|}}$,
    and that the columns of $\Sigma^{-\frac{1}{2}}A$ are distributed
    as $\cN(\vec{0},\id)$. Therefore, we have the following.
    \begin{align*}
        \min\limits_{\|x\|>0}\frac{\|Ax\|}{\|x\|} &=
                \min\limits_{\|x\|>0}\frac{\|\Sigma^{\frac{1}{2}}
                \Sigma^{-\frac{1}{2}} Ax\|}{\|x\|}\\
            &= \min\limits_{\|x\|>0}\frac{\|\Sigma^{\frac{1}{2}}
                \Sigma^{-\frac{1}{2}} Ax\|}{\|\Sigma^{-\frac{1}{2}}Ax\|}
                \frac{\|\Sigma^{-\frac{1}{2}}Ax\|}{\|x\|}\\
            &\geq \min\limits_{\|x\|>0}\frac{\|\Sigma^{\frac{1}{2}}
                \Sigma^{-\frac{1}{2}} Ax\|}{\|\Sigma^{-\frac{1}{2}}Ax\|}
                \min\limits_{\|x\|>0}\frac{\|\Sigma^{-\frac{1}{2}}Ax\|}{\|x\|}\\
            &\geq \min\limits_{\|y\|>0}\frac{\|\Sigma^{\frac{1}{2}}y\|}{\|y\|}
                \min\limits_{\|x\|>0}\frac{\|\Sigma^{-\frac{1}{2}}Ax\|}{\|x\|}\\
            &\geq (\sqrt{m}-CK^2(\sqrt{n}+t))\sqrt{s_n(\Sigma)}
                \tag{Lemma~\ref{lem:sub-gaussian-spectrum}}
    \end{align*}
    Next, we prove the upper bound on $s_n(A)$. For this,
    we first show that for $X\in \RR^{m\times d}$ and $Y \in \RR^{d \times n}$,
    $s_{\min}(XY) \leq s_{\min}(X)\cdot\|Y\|$.
    \begin{align*}
        s_{\min}(XY) &= \min\limits_{\|z\|=1}\|XYz\|\\
            &\leq \min\limits_{\|z\|=1}\|X\|\|Yz\|\\
            &= \|X\|\cdot \min\limits_{\|z\|=1}\|Yz\|\\
            &= \|X\|\cdot s_{\min}(Y)
    \end{align*}
    Now, $s_{\min}(XY) = s_{\min}(Y^T X^T) \leq \|Y\|\cdot s_{\min}(X)$
    by the above reasoning. Using this results, we have the following.
    \begin{align*}
        s_n(A) &= s_n(\Sigma^{1/2}\cdot\Sigma^{-1/2}A)\\
            &\leq s_n(\Sigma^{1/2})\|\Sigma^{-1/2}A\|\\
            &\leq (\sqrt{m} + CK^2(\sqrt{n}+t))\sqrt{s_n(\Sigma)}
                \tag{Lemma~\ref{lem:sub-gaussian-spectrum}}
    \end{align*}
    Now, we show the upper bound on $s_1(A)$. Note that
    $s_1(A) = \|A\|$.
    \begin{align*}
        \|A\| &= \|\Sigma^{\frac{1}{2}}\Sigma^{-\frac{1}{2}}A\|\\
            &\leq \|\Sigma^{\frac{1}{2}}\|\cdot\|\Sigma^{-\frac{1}{2}}A\|\\
            &\leq (\sqrt{m}+CK^2(\sqrt{n}+t))\sqrt{s_1(\Sigma)}
                \tag{Lemma~\ref{lem:sub-gaussian-spectrum}}
    \end{align*}
    This completes the proof.
\end{proof}

Now, we state a concentration inequality for $\chi^2$ random variables.

\begin{lemma}\label{lem:chi-squared}
    Let $X$ be a $\chi^2$ random variable with $k$
    degrees of freedom. Then,
    $$\pr{}{X>k+2\sqrt{kt}+2t} \leq e^{-t}.$$
\end{lemma}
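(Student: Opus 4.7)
The plan is to prove this classical concentration bound (Laurent–Massart style) via the Cramér–Chernoff method applied to the known moment generating function of $X$.

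First I would write $X = \sum_{i=1}^k Z_i^2$ with $Z_1, \dots, Z_k \sim \mathcal{N}(0,1)$ independent. The MGF of a squared standard Gaussian is $\ex{}{e^{\lambda Z_i^2}} = (1-2\lambda)^{-1/2}$ for $\lambda \in [0, 1/2)$, so by independence $\ex{}{e^{\lambda X}} = (1-2\lambda)^{-k/2}$ on the same range. Passing to the centered variable gives
\[\log \ex{}{e^{\lambda(X-k)}} \;=\; -\lambda k \,-\, \tfrac{k}{2}\log(1-2\lambda).\]

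Next I would bound the cumulant generating function by a sub-gamma expression. Using the power series $-\log(1-u) = \sum_{j\ge 1} u^j/j$, one checks that $-\log(1-u) - u = \sum_{j\ge 2} u^j/j \le \tfrac{u^2}{2(1-u)}$ for $u \in [0,1)$. Applying this with $u = 2\lambda$ yields
\[\log \ex{}{e^{\lambda(X-k)}} \;=\; \tfrac{k}{2}\bigl(-\log(1-2\lambda) - 2\lambda\bigr) \;\le\; \frac{k\lambda^2}{1-2\lambda} \quad \text{for all } \lambda \in [0, 1/2).\]
This is exactly the standard sub-gamma(v,c) bound $\tfrac{v\lambda^2}{2(1-c\lambda)}$ with $v = 2k$ and $c = 2$.

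Finally, I would invoke the Cramér–Chernoff inequality: for a sub-gamma variable with parameters $(v,c)$, one has $\Pr(X - k > \sqrt{2vt} + ct) \le e^{-t}$ for every $t > 0$. Substituting $v = 2k$ and $c = 2$ gives $\sqrt{2vt} = 2\sqrt{kt}$ and $ct = 2t$, so
\[\pr{}{X > k + 2\sqrt{kt} + 2t} \;\le\; e^{-t},\]
which is the claim. Concretely, the Chernoff step picks $\lambda = \tfrac{1}{c}\bigl(1 - \sqrt{v/(v+2ct)}\bigr) \in (0, 1/2)$ and verifies $\lambda\bigl(2\sqrt{kt}+2t\bigr) - \tfrac{k\lambda^2}{1-2\lambda} \ge t$ by direct algebra.

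The only nontrivial step is the sub-gamma bound on the cumulant generating function (the Taylor-series inequality on $-\log(1-u)-u$); everything else is routine. There is no real obstacle here, as the result and its proof are standard — I am essentially reproducing the argument of Laurent and Massart.
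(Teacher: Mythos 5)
Your proof is correct: the MGF computation, the sub-gamma bound on the cumulant generating function via the series inequality $-\log(1-u)-u\le \tfrac{u^2}{2(1-u)}$, and the Chernoff step with $v=2k$, $c=2$ are exactly the standard Laurent--Massart argument. The paper states this lemma as a known background fact and gives no proof of its own, so there is nothing to compare against; your derivation fills that gap correctly.
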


Next, we state the well-known Bernstein's inequality
for sums of independent Bernoulli random variables.

\begin{lemma}[Bernstein's Inequality]\label{lem:chernoff-add}
    Let $X_1,\dots,X_m$ be independent Bernoulli random variables
    taking values in $\zo$. Let $p = \ex{}{X_i}$.
    Then for $m \geq \frac{5p}{2\epsilon^2}\ln(2/\beta)$ and
    $\eps \leq p/4$,
    $$\pr{}{\abs{\frac{1}{m}\sum{X_i}-p} \geq \epsilon}
        \leq 2e^{-\epsilon^2m/2(p+\epsilon)}
        \leq \beta.$$
\end{lemma}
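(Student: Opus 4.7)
The plan is to deduce both inequalities from the standard Bernstein concentration bound for sums of independent, bounded random variables, together with a straightforward substitution.

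First, I would invoke the classical Bernstein inequality in its usual form: for independent random variables $X_1,\dots,X_m$ with $|X_i - \ex{}{X_i}| \le M$ and total variance $V = \sum_i \vars(X_i)$,
\[\pr{}{\abs{\sum_i X_i - \ex{}{\sum_i X_i}} \ge t} \le 2\exp\!\left(-\frac{t^2/2}{V + Mt/3}\right).\]
Specialising to Bernoullis with mean $p$, I would use $\vars(X_i) = p(1-p) \le p$, so $V \le mp$, and $M \le 1$. Setting $t = \epsilon m$ then yields
\[\pr{}{\abs{\tfrac{1}{m}\sum_i X_i - p} \ge \epsilon} \le 2\exp\!\left(-\frac{\epsilon^2 m / 2}{p + \epsilon/3}\right) \le 2\exp\!\left(-\frac{\epsilon^2 m}{2(p+\epsilon)}\right),\]
which establishes the first inequality in the statement. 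The slight loosening from $\epsilon/3$ to $\epsilon$ in the denominator is harmless and matches the form the authors want.

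Next, I would show the second inequality by plugging in the sample-size hypothesis. Since $\epsilon \le p/4$, we have $p + \epsilon \le 5p/4$, and therefore
\[\frac{\epsilon^2 m}{2(p+\epsilon)} \;\ge\; \frac{\epsilon^2 m}{2 \cdot 5p/4} \;=\; \frac{2\epsilon^2 m}{5p} \;\ge\; \frac{2\epsilon^2}{5p} \cdot \frac{5p}{2\epsilon^2}\ln(2/\beta) \;=\; \ln(2/\beta),\]
where the final inequality uses $m \ge \frac{5p}{2\epsilon^2}\ln(2/\beta)$. Hence $2e^{-\epsilon^2 m/2(p+\epsilon)} \le 2e^{-\ln(2/\beta)} = \beta$, completing the chain.

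I do not anticipate a serious obstacle here: the proof is essentially two invocations of elementary inequalities. The only minor subtlety is keeping the Bernstein denominator in the slightly loose form $p+\epsilon$ rather than $p+\epsilon/3$, and being careful with the $\epsilon \le p/4$ hypothesis when bounding $p+\epsilon$ by $5p/4$ in the final step.
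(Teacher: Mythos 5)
Your proposal is correct, and the paper itself states this lemma without proof as a standard fact, so there is nothing to compare against: your derivation from the classical Bernstein bound (bounding the variance by $mp$, setting $t=\epsilon m$, loosening $\epsilon/3$ to $\epsilon$, and then using $\epsilon \le p/4$ together with the sample-size hypothesis to get the exponent down to $\ln(2/\beta)$) is exactly the standard route and all the steps check out.
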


We finally state a result about the norm of a
vector sampled from $\cN(\vec{0},\id)$.

\begin{lemma}\label{lem:gauss-vector-norm}
    Let $X_1,\dots,X_q \sim \cN(\vec{0},\Sigma)$ be vectors in
    $\RR^d$, where $\Sigma$ is the projection of $\id_{d \times d}$
    on to a subspace of $\RR^d$ of rank $k$. Then
    $$\pr{}{\forall i, \|X_i\|^2 \leq k + 2\sqrt{kt} + 2t} \geq 1-qe^{-t}.$$
\end{lemma}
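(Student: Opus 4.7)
The plan is to reduce $\|X_i\|^2$ to a $\chi^2$ random variable with $k$ degrees of freedom and then apply Lemma~\ref{lem:chi-squared} along with a union bound.

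First, I would observe that since $\Sigma$ is the projection of $\id_{d\times d}$ onto a subspace of rank $k$, it is a symmetric idempotent matrix of rank $k$. Thus $\Sigma$ has $k$ eigenvalues equal to $1$ and $d-k$ eigenvalues equal to $0$. Write $\Sigma = U U^T$ where $U \in \RR^{d\times k}$ has orthonormal columns spanning the subspace. Then each $X_i \sim \cN(\vec{0},\Sigma)$ can be represented as $X_i = U Z_i$ for $Z_i \sim \cN(\vec{0}, \id_{k\times k})$, since $\mathrm{Cov}(UZ_i) = U\id U^T = \Sigma$.

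Next, I would compute $\|X_i\|^2 = Z_i^T U^T U Z_i = Z_i^T Z_i = \sum_{j=1}^k Z_{i,j}^2$ since $U^T U = \id_{k\times k}$. This is a sum of $k$ squared independent standard Gaussians, so $\|X_i\|^2$ has the $\chi^2$ distribution with $k$ degrees of freedom.

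Applying Lemma~\ref{lem:chi-squared} to $\|X_i\|^2$ gives
\[
\pr{}{\|X_i\|^2 > k + 2\sqrt{kt} + 2t} \leq e^{-t}
\]
for each fixed $i$. Finally, a union bound over $i \in \{1,\dots,q\}$ yields
\[
\pr{}{\exists i: \|X_i\|^2 > k + 2\sqrt{kt} + 2t} \leq q e^{-t},
\]
which is equivalent to the claimed bound. There is no real obstacle here; the only subtlety is correctly identifying that a degenerate Gaussian with a projection covariance of rank $k$ produces a squared norm that is exactly $\chi^2_k$ rather than a weighted combination of $\chi^2$ variables, which is immediate from the fact that $\Sigma$'s nonzero eigenvalues are all exactly $1$.
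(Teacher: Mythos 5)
Your proof is correct and follows the same route as the paper: identify $\|X_i\|^2$ as a $\chi^2_k$ random variable, apply Lemma~\ref{lem:chi-squared}, and take a union bound over $i \in [q]$. The only difference is that you spell out via the factorization $\Sigma = UU^T$ why the squared norm is exactly $\chi^2_k$, a step the paper asserts without elaboration.
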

\begin{proof}
    Since $\Sigma$ is of rank $k$, we can directly
    use Lemma~\ref{lem:chi-squared} for a fixed $i \in [q]$,
    and the union bound over all $i \in [q]$ to get
    the required result. This is because for any $i$,
    $\|X_i\|^2$ is a $\chi^2$ random variable with $k$
    degrees of freedom.
\end{proof}

%\begin{lemma}[Lemma A.3 \cite{KamathSSU19}]\label{lem:gauss-vector-norm}
%    Let $X_1,\dots,X_q \sim \cN(\vec{0},\id)$ be vectors in $\R^d$,
%    where $d \geq O(\log(q))$.
%    Then with probability at least $0.95$, $\|X\| \leq \sqrt{3d}$.
%\end{lemma}

\subsection{Privacy Preliminaries}

\begin{definition}[Differential Privacy (DP) \cite{DworkMNS06}]
    \label{def:dp}
    A randomized algorithm $M:\cX^n \rightarrow \cY$
    satisfies $(\eps,\delta)$-differential privacy
    ($(\eps,\delta)$-DP) if for every pair of
    neighboring datasets $X,X' \in \cX^n$
    (i.e., datasets that differ in exactly one entry),
    $$\forall Y \subseteq \cY~~~
        \pr{}{M(X) \in Y} \leq e^{\eps}\cdot
        \pr{}{M(X') \in Y} + \delta.$$
    When $\delta = 0$, we say that $M$ satisfies
    $\eps$-differential privacy or pure differential
    privacy.
\end{definition}
Neighbouring datasets are those that differ by the replacement of one individual's data. In our setting, each individual's data is assumed to correspond to one point in $\cX = \mathbb{R}^d$, so neighbouring means one point is changed arbitrarily.

Throughout the document, we will assume that $\eps$ is
smaller than some absolute constant less than $1$ for
notational convenience, but note that our results still
hold for general $\eps$. Now, this privacy definition is
closed under post-processing.
%and can be composed with graceful degradation of the privacy
%parameters.
\begin{lemma}[Post Processing \cite{DworkMNS06}]\label{lem:post-processing}
    If $M:\cX^n \rightarrow \cY$ is
    $(\eps,\delta)$-DP, and $P:\cY \rightarrow \cZ$
    is any randomized function, then the algorithm
    $P \circ M$ is $(\eps,\delta)$-DP.
\end{lemma}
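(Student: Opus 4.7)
The plan is to reduce the randomized post-processing to a deterministic one by conditioning on auxiliary randomness, and then use the differential privacy guarantee of $M$ directly on the preimage of the target output event.

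First I would handle the deterministic case. Suppose $P:\cY\to\cZ$ is deterministic. Fix any (measurable) set $Z\subseteq\cZ$, and let $Y:=P^{-1}(Z)\subseteq\cY$. Then for any pair of neighboring datasets $X,X'\in\cX^n$,
\[
\pr{}{P(M(X))\in Z}=\pr{}{M(X)\in Y}\leq e^{\eps}\pr{}{M(X')\in Y}+\delta=e^{\eps}\pr{}{P(M(X'))\in Z}+\delta,
\]
where the inequality is the $(\eps,\delta)$-DP guarantee for $M$ applied to the event $Y$. Since $Z$ was arbitrary, $P\circ M$ is $(\eps,\delta)$-DP in this case.

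Next I would reduce the randomized case to the deterministic one. Any randomized $P$ can be written as $P(y)=P'(y,R)$, where $R$ is an auxiliary random variable drawn independently of $M(X)$ (and of the internal coins of $M$), and $P'$ is deterministic given $R=r$. Conditioning on $R=r$, the map $y\mapsto P'(y,r)$ is deterministic, so by the previous paragraph, for every $r$,
\[
\pr{}{P'(M(X),r)\in Z}\leq e^{\eps}\pr{}{P'(M(X'),r)\in Z}+\delta.
\]
Integrating both sides against the distribution of $R$ and using independence of $R$ from $M$'s randomness yields
\[
\pr{}{P(M(X))\in Z}\leq e^{\eps}\pr{}{P(M(X'))\in Z}+\delta,
\]
as desired.

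Since this holds for every measurable $Z\subseteq\cZ$ and every neighboring pair $X,X'$, the composed algorithm $P\circ M$ satisfies $(\eps,\delta)$-DP, completing the proof. There is no serious obstacle here; the only subtlety is the standard measure-theoretic justification for treating a randomized map $P$ as a deterministic map with an independent auxiliary random seed, which is routine.
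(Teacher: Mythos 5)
Your proof is correct and is the standard argument for post-processing (reduce to the deterministic case via the preimage $P^{-1}(Z)$, then handle randomized $P$ by conditioning on an independent seed and integrating). The paper itself gives no proof of this lemma --- it is stated as a background result with a citation to Dwork et al. --- so there is nothing to compare against; your argument is exactly the one the citation stands in for, and the measure-theoretic caveats you flag (measurability of the preimage, representing a randomized map as a deterministic map with an independent seed) are indeed routine.
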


%\begin{lemma}[Composition of DP~\cite{DworkMNS06, DworkRV10}]\label{lem:composition}
%    If $M$ is an adaptive composition of differentially
%    private algorithms $M_1,\dots,M_T$, then the following
%    all hold:
%    \begin{enumerate}
%        \item If $M_1,\dots,M_T$ are
%            $(\eps_1,\delta_1),\dots,(\eps_T,\delta_T)$-DP
%            then $M$ is $(\eps,\delta)$-DP for
%            $$\eps = \sum_t \eps_t~~~~\textrm{and}~~~~\delta = \sum_t \delta_t.$$
%        \item If $M_1,\dots,M_T$ are
%            $(\eps_0,\delta_1),\dots,(\eps_0,\delta_T)$-DP
%            for some $\eps_0 \leq 1$, then for every $\delta_0 > 0$, $M$
%            is $(\eps, \delta)$-DP for
%            $$\eps = \eps_0 \sqrt{6 T \log(1/\delta_0)}~~~~
%                \textrm{and}~~~~\delta = \delta_0 + \sum_t \delta_t$$
%    \end{enumerate}
%\end{lemma}
%
%Quantitatively tighter composition results are known \cite[etc.]{KairouzOV15,BunS16}. However, these results suffice for our purposes, as we do not aim to optimize constants.

\subsection{Basic Differentially Private Mechanisms.}
We first state standard results on achieving
privacy via noise addition proportional to
sensitivity~\cite{DworkMNS06}.

\begin{definition}[Sensitivity]
    Let $f : \cX^n \to \R^d$ be a function,
    its \emph{$\ell_1$-sensitivity} and
    \emph{$\ell_2$-sensitivity} are
    $$\Delta_{f,1} = \max_{X \sim X' \in \cX^n} \| f(X) - f(X') \|_1
    ~~~~\textrm{and}~~~~\Delta_{f,2} = \max_{X \sim X' \in \cX^n} \| f(X) - f(X') \|_2,$$
    respectively.
    Here, $X \sim X'$ denotes that $X$ and $X'$
    are neighboring datasets (i.e., those that
    differ in exactly one entry).
\end{definition}

%For functions with bounded $\ell_1$-sensitivity,
%we can achieve $\eps$-DP by adding noise from
%a Laplace distribution proportional to
%$\ell_1$-sensitivity.
%For functions taking values
%in $\R^d$ for large $d$ it is useful to add
%noise from a Gaussian distribution proportional
%to the $\ell_2$-sensitivity, to get $(\eps,\delta)$-DP
%and $\rho$-zCDP.

%\begin{lemma}[Laplace Mechanism] \label{lem:laplacedp}
%    Let $f : \cX^n \to \R^d$ be a function
%    with $\ell_1$-sensitivity $\Delta_{f,1}$.
%    Then the Laplace mechanism
%    $$M(X) = f(X) + \Lap\left(\frac{\Delta_{f,1}}
%        {\eps}\right)^{\otimes d}$$
%    satisfies $\eps$-DP.
%\end{lemma}

%\begin{lemma}[Gaussian Mechanism] \label{lem:gaussiandp}
%    Let $f : \cX^n \to \R^d$ be a function
%    with $\ell_2$-sensitivity $\Delta_{f,2}$.
%    Then the Gaussian mechanism
%    $$M(X) = f(X) + \cN\left(0,\left(\frac{\Delta_{f,2}
%        \sqrt{2\ln(2/\delta)}}{\eps}\right)^2 \cdot \id_{d \times d}\right)$$
%    satisfies $(\eps,\delta)$-DP.
%\end{lemma}

One way of introducing $(\eps,\delta)$-differential
privacy is via adding noise sampled from the truncated
Laplace distribution, proportional to the $\ell_1$
sensitivity.

\begin{lemma}[Truncated Laplace Mechanism \cite{GengDGK20}]\label{lem:truncated-laplace}
    Define the probability density function ($p$) of
    the truncated Laplace distribution as follows.
    \[
    p(x) =
    \begin{cases}
        Be^{-\frac{\abs{x}}{\lambda}} & \text{if $x \in [-A,A]$}\\
        0 & \text{otherwise}
    \end{cases}
    \]
    In the above,
    $$
        \lambda = \frac{\Delta}{\eps},~~~
        A = \frac{\Delta}{\eps}\log\left(1+\frac{e^\eps - 1}{2\delta}\right),~~~
        B = \frac{1}{2\lambda(1-e^{-\frac{A}{\lambda}})}.
    $$
    Let $\TLap(\Delta,\eps,\delta)$ denote a draw
    from the above distribution. 
    
    Let $f : \cX^n \to \R^d$ be a function
    with sensitivity $\Delta$. Then the truncated
    Laplace mechanism
    $$M(X) = f(X) + \TLap(\Delta,\eps,\delta)$$
    satisfies $(\eps,\delta)$-DP.
\end{lemma}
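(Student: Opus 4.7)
The plan is to verify the $(\varepsilon,\delta)$-DP definition directly, by pointwise comparison of the densities of $M(X)$ and $M(X')$ on neighbouring datasets. I treat the scalar case, which contains the entire difficulty; a vector-valued $f$ reduces to this coordinatewise via the $\ell_1$ sensitivity. Fix neighbours $X \sim X'$, write $\mu = f(X)$, $\mu' = f(X')$, so that $|\mu - \mu'| \le \Delta$, and assume WLOG $\mu \le \mu'$. Letting $S_X = [\mu - A, \mu + A]$ and $S_{X'} = [\mu' - A, \mu' + A]$, the density of $M(X)$ at $y$ is $p_X(y) = B\,e^{-|y-\mu|/\lambda}\ONE[y \in S_X]$, and analogously for $p_{X'}$.

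The key step is to partition $\mathbb{R}$ into $T_1 = S_X \cap S_{X'}$, $T_2 = S_X \setminus S_{X'}$, and $T_3 = S_{X'} \setminus S_X$, and handle each separately. On $T_1$, the reverse triangle inequality gives
\[
\frac{p_X(y)}{p_{X'}(y)} \;=\; \exp\!\left(\frac{|y-\mu'|-|y-\mu|}{\lambda}\right) \;\le\; \exp(|\mu-\mu'|/\lambda) \;\le\; e^{\Delta/\lambda} \;=\; e^\varepsilon,
\]
so $\Pr[M(X) \in Y \cap T_1] \le e^\varepsilon \Pr[M(X') \in Y \cap T_1]$ for every measurable $Y$. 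The region $T_3$ carries no mass under $M(X)$. Thus the full DP condition reduces to showing $\Pr[M(X) \in T_2] \le \delta$, which will be absorbed into the additive slack.

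That tail computation is exactly what the prescribed $A$ is engineered to make tight. Under the WLOG ordering, $T_2 = [\mu - A, \mu' - A)$ and $y - \mu \le 0$ there, so
\[
\Pr[M(X) \in T_2] \;=\; \int_{\mu - A}^{\mu' - A} B\,e^{-(\mu - y)/\lambda}\,dy \;=\; B\lambda\, e^{-A/\lambda}\!\left(e^{(\mu'-\mu)/\lambda} - 1\right) \;\le\; B\lambda\, e^{-A/\lambda}(e^\varepsilon - 1).
\]
Substituting $B = 1/(2\lambda(1-e^{-A/\lambda}))$ collapses this bound to $(e^\varepsilon - 1)/(2(e^{A/\lambda} - 1))$, and plugging in $e^{A/\lambda} = 1 + (e^\varepsilon - 1)/(2\delta)$ gives exactly $\delta$. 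Combining the three regions produces $\Pr[M(X) \in Y] \le e^\varepsilon \Pr[M(X') \in Y] + \delta$.

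There is no serious obstacle here; the proof is essentially bookkeeping once the right decomposition is in place. The only care needed is verifying that the same argument applies with the roles of $X$ and $X'$ swapped (so that the DP inequality holds in both directions after relabelling), and observing that the specific choice of $A$ in the statement is rigged precisely so that the ``new'' support region $T_2$ has mass equal to the DP slack $\delta$; this is why the truncated Laplace is a tight analogue of the ordinary Laplace mechanism with a pure-DP guarantee replaced by an $(\varepsilon,\delta)$ one.
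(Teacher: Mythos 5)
Your proof is correct, and it is the standard argument for this mechanism; the paper itself imports the lemma from Geng et al.\ without proof, so there is nothing to compare against beyond noting that your decomposition into the common support, the escaping region $T_2$, and the empty region $T_3$ is exactly the intended one, and your computation showing $\Pr[M(X)\in T_2]\le\delta$ with the prescribed $A$ and $B$ checks out. The one place to be slightly more careful is the sentence dismissing the vector-valued case as a coordinatewise reduction: with independent $\TLap$ noise per coordinate and per-coordinate shifts $\Delta_i$ summing to at most $\Delta$, the density ratio on the common support multiplies to at most $e^{\sum_i\Delta_i/\lambda}\le e^{\varepsilon}$ as you'd expect, but the escape probabilities add up to $B\lambda e^{-A/\lambda}\sum_i(e^{\Delta_i/\lambda}-1)$, and you need the superadditivity $\sum_i(e^{x_i}-1)\le e^{\sum_i x_i}-1$ for $x_i\ge 0$ to conclude this is still at most $\delta$ rather than $d\delta$. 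Since the paper only ever applies the mechanism to scalar scores, the $d=1$ case you proved in full is all that is actually used.
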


In the above $A \leq \tfrac{\Delta_{f,1}}{\eps}\log(1/\delta)$
since $\eps$ is smaller than some absolute constant less than
$1$. Now, we introduce differentially private histograms.

\begin{lemma}[Private Histograms]\label{lem:priv-hist}
    Let $n \in \mathbb{N}$, $\varepsilon,\delta,\beta>0$, and $\mathcal{X}$ a set.
    There exists $M : \mathcal{X}^n \to \mathbb{R}^{\mathcal{X}}$ which is $(\varepsilon,\delta)$-differentially private and, for all $x \in \mathcal{X}^n$, we have \[\pr{M}{\sup_{y \in \mathcal{X}} \left|M(x)_y - \frac1n |\{ i \in [n] : x_i = y\}| \right| \le O\left(\frac{\log(1/\delta\beta)}{\varepsilon n}\right) } \ge 1-\beta.\]
%    Furthermore, the runtime of $M$ is $\poly(n,\log(\abs{\mathcal{X}}/\eps\beta))$
%    Let $(X_1,\dots,X_n)$ be samples in some data universe
%    $U$, and let $\Omega = \{h_u\}_{u \subset U}$
%    be a collection of disjoint histogram buckets over $U$.
%    Then we have $(\eps,\delta)$-DP
%    histogram algorithms with the following guarantees.
%    \begin{itemize}
%        \item $(\eps,\delta)$-DP:
%            $\ell_\infty$ error - $O\left(\tfrac{\log(1/\delta\beta)}{\eps}\right)$
%            with probability at least $1-\beta$;
%            run time - $\poly(n,\log(\abs{U}/\eps\beta))$
%    \end{itemize}
\end{lemma}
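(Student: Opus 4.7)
The plan is to use the standard stability-based histogram construction instantiated with truncated Laplace noise. Given $x \in \mathcal{X}^n$, let $n_y = |\{i : x_i = y\}|$ and $c_y = n_y/n$. The algorithm processes only those $y \in \mathcal{X}$ for which $n_y > 0$: for each such $y$ it samples an independent $Z_y \sim \TLap(1/n, \varepsilon/2, \delta/2)$, forms $\hat{c}_y = c_y + Z_y$, and sets $M(x)_y = \hat{c}_y$ if $\hat{c}_y > T$ and $M(x)_y = 0$ otherwise, where $T = A + 1/n$ and $A = O(\log(1/\delta)/(\varepsilon n))$ is the support bound of $\TLap$ from Lemma~\ref{lem:truncated-laplace}. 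For every $y$ with $n_y = 0$, the algorithm sets $M(x)_y = 0$ by convention.

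For privacy I argue bin-by-bin. Swapping one entry in $x$ changes the counts of at most two bins by $\pm 1/n$, and every other bin has the same noisy output distribution under both datasets. Consider a bin $y$ whose count changes. If $y$ has count $1/n$ in one dataset and $0$ in the other, then on the first dataset $\hat{c}_y \le 1/n + A = T$ deterministically, so $M$ outputs $0$ in coordinate $y$ with probability $1$; on the other dataset $M(x)_y = 0$ by convention. The two distributions in coordinate $y$ are identical, so no privacy budget is spent. If instead $y$ has count at least $1/n$ in both datasets, with counts differing by $1/n$, then Lemma~\ref{lem:truncated-laplace} applied with sensitivity $1/n$ and parameters $(\varepsilon/2, \delta/2)$ gives $(\varepsilon/2, \delta/2)$-DP in coordinate $y$, and the threshold step is post-processing. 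Basic composition over the at most two affected bins then yields $(\varepsilon, \delta)$-DP for the joint output.

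For utility, the deterministic bound $|Z_y| \le A$ means that whenever $c_y > T + A$ we release $\hat{c}_y$ and the error is $|Z_y| \le A$; whenever $c_y \le T + A$ we have $|M(x)_y - c_y| \le \max(c_y, |Z_y|) \le T + A = O(\log(1/\delta)/(\varepsilon n))$; and bins with $n_y = 0$ contribute zero error. Taking the supremum over $y$, the claimed error bound holds with probability $1$. If one instead prefers untruncated Laplace noise of scale $1/(\varepsilon n)$, a union bound over the at most $n$ bins with $n_y > 0$ gives $\sup_y |Z_y| = O(\log(n/\beta)/(\varepsilon n))$ with probability at least $1-\beta$, which is again absorbed into the stated $O(\log(1/(\delta\beta))/(\varepsilon n))$ bound up to constants.

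The main obstacle is making the algorithm well-defined on an arbitrary, possibly infinite, $\mathcal{X}$: noise cannot be added to every bin, so the algorithm must process only the finitely many bins appearing in $x$, and the very decision of which bins to consider must be privacy-safe. The threshold $T = A + 1/n$ is calibrated precisely so that bins appearing in only one of two neighboring datasets contribute zero privacy loss; this stability property is what lets the construction handle the infinite codomain $\mathbb{R}^{\mathcal{X}}$ under $(\varepsilon, \delta)$-DP, in a way that a pure-DP variant (whose noise scale would have to depend on $|\mathcal{X}|$) could not.
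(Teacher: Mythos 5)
Your construction is correct: it is the standard stability-based histogram (add noise only to occupied bins, threshold so that bins present in only one of two neighboring datasets are released with probability zero), which is exactly the argument of the references \cite{BunNS16,Vadhan17} that the paper cites for this lemma in lieu of a proof. The case analysis for the two affected bins and the deterministic utility bound $T+A = O(\log(1/\delta)/(\varepsilon n))$ from the truncated-Laplace support both check out, and in fact give the stated guarantee with probability $1$ rather than $1-\beta$.
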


The above holds due
to \cite{BunNS16,Vadhan17}. Finally, we introduce the
$\GAPMAX$ algorithm from \cite{BunDRS18} that outputs
the element from the output space that has the highest
score function, given that there is a significant gap
between the scores of the highest and the second to the
highest elements.

\begin{lemma}[$\GAPMAX$ Algorithm \cite{BunDRS18}]\label{lem:gap-max}
    Let $\Score:\cX^n \times \cY \rightarrow \RR$ be a
    score function with sensitivity $1$ in its
    first argument, and let $\varepsilon,\delta>0$. Then there exists a $(\varepsilon,\delta)$-differentially private algorithm $M : \mathcal{X}^n \to \mathcal{Y}$ and $\alpha=\Theta(\log(1/\delta)/\varepsilon n)$ with the following property. Fix an input
    $X \in \cX^n$. Let
    $$y^* = \argmax_{y \in \cY}\{\Score(X,y)\}.$$
    Suppose
    $$\forall y \in \cY, y \neq y^* \implies \Score(X,y) < \Score(X,y^*)-\alpha n.$$
    Then $M$ outputs $y^*$ with probability 1.
\end{lemma}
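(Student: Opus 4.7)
The plan is to mimic the classic propose-test-release / above-threshold style argument, but using the \emph{truncated} Laplace mechanism (Lemma~\ref{lem:truncated-laplace}) so that the mechanism's output is deterministic whenever the gap is sufficiently large. Concretely, on input $X$ I would let the algorithm $M$ compute
\[
y^* = \argmax_{y \in \cY} \Score(X, y), \qquad g = \Score(X, y^*) - \sup_{y \neq y^*} \Score(X, y),
\]
draw a single noise sample $Z \sim \TLap(2, \eps, \delta)$ (so $|Z| \le A$ for $A = (2/\eps)\log(1 + (e^\eps-1)/(2\delta)) = O(\log(1/\delta)/\eps)$), and output $y^*$ if $g + Z > A + 2$, and otherwise output an arbitrary fixed fallback element $y_0 \in \cY$. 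The threshold $A + 2$ and the sensitivity choice $\Delta = 2$ are tuned for the privacy case analysis below.

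\textbf{Privacy.} Fix neighbors $X \sim X'$, and let $y^*, {y'}^*, g, g'$ be the corresponding argmaxes and gaps. The argument splits into two cases. In the first case, $y^* = {y'}^*$: then $g$ is a function of the data with sensitivity $2$ (each score changes by at most $1$), so $g + Z$ is $(\eps,\delta)$-close on $X$ and $X'$ by Lemma~\ref{lem:truncated-laplace}, and the binary threshold test plus fixed labels $y^*, y_0$ is a postprocessing, hence $(\eps,\delta)$-DP. In the second case, $y^* \neq {y'}^*$: since $\Score$ has sensitivity $1$ and ${y'}^*$ maximizes $\Score(X', \cdot)$,
\[
\Score(X, y^*) \le \Score(X', y^*) + 1 \le \Score(X', {y'}^*) + 1 \le \Score(X, {y'}^*) + 2,
\]
so $g \le 2$ and symmetrically $g' \le 2$. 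Hence for every realization of $Z$, $g + Z \le 2 + A$ and $g' + Z \le 2 + A$, both failing the threshold test, and so $M(X) = M(X') = y_0$ identically. Privacy is trivial in this case.

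\textbf{Utility.} Set $\alpha n = 2A + 2 = \Theta(\log(1/\delta)/\eps)$, i.e.\ $\alpha = \Theta(\log(1/\delta)/(\eps n))$. Suppose the hypothesis holds, so $g > \alpha n = 2A + 2$. Since $Z \ge -A$ deterministically (by truncation), $g + Z \ge g - A > A + 2$, so the threshold is passed for every realization of $Z$ and $M$ outputs $y^*$ with probability $1$, as required.

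\textbf{Main obstacle.} The one genuinely non-mechanical step is the structural observation in the second privacy case: that a change in the argmax between neighbors forces both gaps to be at most $2$. Everything else is bookkeeping on top of Lemma~\ref{lem:truncated-laplace}. This observation is what allows the argument to handle a potentially infinite output space $\cY$ without the usual $\log|\cY|$ penalty from a union bound or exponential-mechanism-style normalization; the noise is added only to the scalar gap, and the choice between $y^*$ and $y_0$ is forced by the (data-dependent) combinatorial structure rather than by further randomization over $\cY$.
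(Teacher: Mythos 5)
Your proof is correct (up to one boundary constant) and is a legitimate self-contained proof of the lemma, which the paper itself only cites from \cite{BunDRS18} rather than proving. Your route differs in presentation from the version the paper actually implements and analyses (Algorithm~\ref{alg:exact} and Lemma~\ref{lem:exact-privacy}): there, truncated Laplace noise is added to the clipped gap score $\max\{0,\,u(X,s)-u(X,s_2)-1\}$ of \emph{every} candidate and the noisy argmax is returned; privacy follows because this vector of scores is sparse (at most one nonzero coordinate across a pair of neighbours) and that coordinate has sensitivity $2$. You instead add noise once to the scalar gap $g$ and run a threshold test, outputting either $y^*$ or a fixed fallback. Both arguments hinge on exactly the same structural fact, which you correctly isolate as the crux: if the argmax changes between neighbours, then both gaps are at most $2$, so no union bound or dependence on $|\cY|$ is needed. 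Your Case-1 "postprocessing" step is valid because the DP inequality is checked per pair of neighbours, and within that case the relabelling function (noisy gap $\mapsto y^*$ or $y_0$) is identical on both datasets; it is also worth noting that $g$ has global sensitivity $2$ even across Case 2, since there both gaps lie in $[0,2]$, so the appeal to Lemma~\ref{lem:truncated-laplace} is legitimate. The only slip is in the utility step: the hypothesis gives $g \ge \alpha n$, not $g > \alpha n$, so with $\alpha n = 2A+2$ you only get $g+Z \ge A+2$, which fails your strict test; take $\alpha n = 2A+3$ or make the test non-strict --- either fix is absorbed by the $\Theta(\cdot)$ in the statement.
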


%removed preliminaries
%\medskip
%We provide definitions and background results on
%linear algebra and privacy in Appendix \ref{sec:preliminaries}.

\section{Exact case}

Here, we discuss the case, where all $n$ points lie \emph{exactly} in a subspace $s_*$ of dimension $k$ of $\RR^d$. Our goal
is to privately output that subspace. We do it under the
assumption that all strict subspaces of $s_*$ contain at most $\ell$
points.
If the points are in general position, then $\ell=k-1$, as any strictly smaller subspace has dimension $<k$ and cannot contain more points than its dimension.
Let $\mathcal{S}_d^k$ be the set of all $k$-dimensional
subspaces of $\mathbb{R}^d$. Let $\mathcal{S}_d$ be the
set of all subspaces of $\mathbb{R}^d$. We formally define
that problem as follows.

\begin{problem}\label{prob:exact}
    Assume (i) all but at most $\ell$, input points are in some
    $s_* \in \mathcal{S}_d^k$, and (ii)  every subspace
    of dimension $<k$ contains at most $\ell$ points. (If the points
    are in general position -- aside from being contained in
    $s_*$ -- then $\ell=k-1$.) The goal is to output a representation
    of $s_*$.
\end{problem}

We call these $\leq \ell$ points that do not lie in
$s_*$, ``adversarial points''.
With the problem defined in Problem~\ref{prob:exact}, we
will state the main theorem of this section.

\begin{theorem}\label{thm:exact}
    For any $\eps,\delta>0$, $\ell \ge k-1 \ge 0$, and
    $$n \geq O\left(\ell + \frac{\log(1/\delta)}{\eps}\right),$$ there
    exists an $(\eps,\delta)$-DP algorithm
    $M : \mathbb{R}^{d \times n} \to \mathcal{S}_d^k$, such that if
    $X$ is a dataset of $n$ points satisfying the conditions
    in Problem~\ref{prob:exact},
    then $M(X)$ outputs a representation of $s_*$ with probability $1$.
\end{theorem}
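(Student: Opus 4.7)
The plan is to reduce the problem to the $\GAPMAX$ mechanism (Lemma~\ref{lem:gap-max}) by constructing a score function on subspaces whose unique maximizer is $s_*$ with a large gap to all other subspaces. For each $k$-dimensional subspace $S \in \mathcal{S}_d^k$, define
\[
    \Score(X,S) = \tfrac{1}{2}\left(\, |\{i \in [n] : X_i \in S\}| \;-\; \max_{S' \subsetneq S} |\{i \in [n] : X_i \in S'\}| \,\right),
\]
where $S'$ ranges over all strict (not necessarily $k$-dimensional) subspaces of $S$. Each of the two counts in the parentheses has sensitivity $1$ in $X$, and a max of sensitivity-$1$ functions retains sensitivity $1$; thus the difference has sensitivity at most $2$, and the factor $\tfrac{1}{2}$ brings the total sensitivity to $1$, as required by Lemma~\ref{lem:gap-max}. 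The max is well-defined despite ranging over infinitely many subspaces because its range is a finite set of integers in $\{0,1,\dots,n\}$.

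Next I would analyze the score on $s_*$ versus its competitors. By hypothesis (i) at least $n-\ell$ input points lie in $s_*$, and by hypothesis (ii) every strict subspace of $s_*$ has dimension $\le k-1$ and so contains at most $\ell$ input points; hence
\[
    \Score(X, s_*) \;\ge\; \tfrac{1}{2}\bigl((n - \ell) - \ell\bigr) \;=\; \tfrac{n-2\ell}{2}.
\]
For any other $k$-dimensional $S \ne s_*$, the intersection $S \cap s_*$ is a proper subspace of $s_*$ (both are $k$-dimensional), so it contains at most $\ell$ of the $\ge n-\ell$ points of $s_*$; together with the at-most-$\ell$ adversarial points, this gives $|\{i : X_i \in S\}| \le 2\ell$, and therefore $\Score(X, S) \le \ell$.

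Finally I would apply $\GAPMAX$ with domain $\mathcal{Y} = \mathcal{S}_d^k$ and the above score. The gap between $\Score(X, s_*)$ and the score of any other $S \in \mathcal{S}_d^k$ is at least $(n - 2\ell)/2 - \ell = (n - 4\ell)/2$. By Lemma~\ref{lem:gap-max}, this suffices to output $s_*$ with probability $1$ provided $(n-4\ell)/2 \ge C\log(1/\delta)/\eps$ for the absolute constant $C$ from that lemma, which is implied by the hypothesis $n \ge O(\ell + \log(1/\delta)/\eps)$.

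The hard part is handling the uncountable output space $\mathcal{S}_d^k$: Lemma~\ref{lem:gap-max} as stated applies to general $\mathcal{Y}$, but to legitimately invoke it here one relies on the fact that, at any fixed input $X$, the function $S \mapsto \Score(X,S)$ takes only finitely many distinct values --- namely, values determined by which subsets of $\{X_1,\dots,X_n\}$ are contained in $S$ and in the maximizing $S'$ --- so the supremum is attained and the gap condition is meaningful. A secondary verification is the sensitivity claim about the max, which follows from the standard inequality $\max_i f_i(X) \le \max_i f_i(X') + \max_i |f_i(X)-f_i(X')|$.
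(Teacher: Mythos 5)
Your proposal is correct and follows essentially the same route as the paper: the identical score (count of points in $S$ minus the maximum count in a proper subspace of $S$), fed into $\GAPMAX$, with the same gap analysis yielding $u(X,s_*)\ge n-2\ell$ versus $\le 2\ell$ for every competitor. The only differences are minor: the paper observes that the unscaled score already has sensitivity $1$ (a single point change cannot simultaneously raise the first count and lower the second) rather than dividing by $2$, and it gives a concrete finite-time implementation by restricting candidates to subspaces spanned by $k$-subsets of the data and adding a $\NULL$ fallback --- which forces it to re-derive privacy for a data-dependent candidate set, a complication your black-box invocation of Lemma~\ref{lem:gap-max} over all of $\mathcal{S}_d^k$ sidesteps.
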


We prove Theorem \ref{thm:exact} by proving the privacy and
the accuracy guarantees of Algorithm~\ref{alg:exact}.
The algorithm performs a $\GAPMAX$ (cf.~Lemma~\ref{lem:gap-max}).
It assigns a score to all the relevant
subspaces, that is, the subspaces spanned by the points
of the dataset $X$. We show that the only subspace that
has a high score is the true subspace $s_*$, and the rest
of the subspaces have low scores. Then $\GAPMAX$ outputs
the true subspace successfully because of the gap between
the scores of the best subspace and the second to the best
one. For $\GAPMAX$ to work all the time, we define a default
option in the output space that has a high score, which we
call $\NULL$. Thus, the output space is now
$\cY = \cS_d \cup \{\NULL\}$. Also, for $\GAPMAX$ to run in
finite time, we filter $\cS_d$ to select finite number of subspaces
that have at least $0$ scores on the basis of $X$. Note that
this is a preprocessing step, and does not violate privacy as,
we will show, all other subspaces already have $0$ probability
of getting output.
We define the score function
$u : \mathcal{X}^n \times \cY \to \mathbb{N}$
as follows.
\[ u(x,s) :=
\begin{cases}
    |x \cap s| - \sup \{ |x \cap t| : t \in \mathcal{S}_d, t \subsetneq s \} &
        \text{if $s \in \cS_d$}\\
    \ell + \frac{4\log(1/\delta)}{\eps} + 1 & \text{if $s = \NULL$}
\end{cases}
\]
Note that this score function can be computed in finite
time because for any $m$ points and $i>0$, if the points
are contained in an $i$-dimensional subspace, then the
subspace that contains all $m$ points must lie within
the set of subspaces spanned by ${m \choose i+1}$ subsets
of points.

\begin{algorithm}[h!] \label{alg:exact}
\caption{DP Exact Subspace Estimator
    $\DPESE_{\eps, \delta, k, \ell}(X)$}
\KwIn{Samples $X \in \R^{d \times n}$.
    Parameters $\eps, \delta, k, \ell > 0$.}
\KwOut{$\hat{s} \in \cS_d^k$.}
\vspace{5pt}

Set $\cY \gets \{\NULL\}$ and sample noise $\xi(\NULL)$ from $\TLap(2,\eps,\delta)$.\\
Set score $u(X,\NULL) = \ell + \frac{4\log(1/\delta)}{\eps} + 1$.
\vspace{5pt}

\tcp{Identify candidate outputs.}
\For{each subset $S$ of $X$ of size $k$}{
    Let $s$ be the subspace spanned by $S$.\\
    $\cY \gets \cY \cup \{s\}$.\\
    Sample noise $\xi(s)$ from $\TLap(2,\eps,\delta)$.\\
    Set score $u(X,s) = |x \cap s| - \sup \{ |x \cap t| : t \in \mathcal{S}_d, t \subsetneq s \} $.
}
\vspace{5pt}

\tcp{Apply $\GAPMAX$.}
Let $s_1 = \argmax_{s \in \cY} u(X,s)$ be the candidate with the largest score.\\
Let $s_2 = \argmax_{s \in \cY \setminus \{s_1\}} u(X,s)$ be the candidate with the second-largest score.\\
Let $\hat s = \argmax_{s \in \cY} \max\{ 0 , u(X,s) - u(X,s_2) -1\} + \xi(s)$.\\
\tcp{Truncated Laplace noise $\xi \sim \TLap(2,\eps,\delta)$; see Lemma \ref{lem:truncated-laplace}}

\vspace{5pt}
\Return $\hat{s}.$
\vspace{5pt}
\end{algorithm}

We split the proof of Theorem~\ref{thm:intro-main-exact} into sections
for privacy (Lemma~\ref{lem:exact-privacy}) and accuracy (Lemma~\ref{lem:exact-accuracy}).

\subsection{Privacy}

\begin{lemma}\label{lem:exact-privacy}
    Algorithm~\ref{alg:exact} is $(\eps,\delta)$-differentially
    private.
\end{lemma}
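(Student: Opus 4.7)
The strategy is to reduce Algorithm~\ref{alg:exact} to an instance of the $\GAPMAX$ privacy framework (Lemma~\ref{lem:gap-max}), after establishing the sensitivity of the underlying score function and handling the data-dependence of the candidate set.

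The first step is a sensitivity bound on the score function. The $\NULL$ score is a constant, hence $0$-sensitive. For $s \in \mathcal{S}_d$, replacing a single data point changes $|X \cap s|$ by at most $1$ and changes each $|X \cap t|$ for $t \subsetneq s$ by at most $1$; since a supremum of $1$-sensitive functions is $1$-sensitive (via the elementary inequality $\sup_t f_t(X) - \sup_t f_t(X') \leq \sup_t (f_t(X) - f_t(X'))$), the subtracted term is $1$-sensitive. Thus $u(\cdot, s)$ is $2$-sensitive for every fixed $s$, which exactly matches the scale of the $\TLap(2, \varepsilon, \delta)$ noise added by the algorithm.

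The second step deals with the fact that $\mathcal{Y}$ depends on $X$. I would argue that the algorithm is equivalent in output distribution to a conceptual version that takes $\argmax$ over the common, data-independent superset $\mathcal{S}_d \cup \{\NULL\}$, assigning each candidate an independent $\TLap(2, \varepsilon, \delta)$ draw added to the modified score $\max\{0, u(X, s) - u(X, s_2(X)) - 1\}$. The key claim is that for every $s \notin \mathcal{Y}(X)$ the modified score is $0$: when $s$ is $k$-dimensional and not spanned by any $k$-subset of $X$, the points $X \cap s$ all lie in a proper subspace of $s$, so $|X \cap s|$ is matched by the supremum and $u(X,s) \leq 0$, while for $s$ of other dimensions a short case analysis shows $u(X, s) \leq u(X, s_2(X))$ because $\NULL$ (or another element of $\mathcal{Y}$) already dominates in score. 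A symmetrising coupling of the truncated Laplace noise among these zero-score candidates then yields the desired distributional equivalence.

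With the algorithm now framed on a fixed candidate set using $2$-sensitive scores and noise calibrated to that sensitivity, privacy follows directly from the $\GAPMAX$ analysis of Lemma~\ref{lem:gap-max}: the $\NULL$ candidate's score $\ell + 4\log(1/\delta)/\varepsilon + 1$ serves as the safe default that absorbs neighbour-dataset pairs for which the true gap is small, and the $\TLap$ noise provides $(\varepsilon, \delta)$-DP against the $2$-sensitive gap-adjusted scores. The principal obstacle is Step~2 --- rigorously establishing the distributional equivalence so that the standard $\GAPMAX$ privacy proof applies verbatim to a data-independent candidate space. Once that is in hand, $(\varepsilon, \delta)$-differential privacy of the literal Algorithm~\ref{alg:exact} is immediate.
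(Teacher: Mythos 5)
Your overall architecture (bound the sensitivity, then argue the data-dependent candidate set can be replaced by a fixed one because out-of-set candidates have zero adjusted score and are dominated by $\NULL$) matches the paper's, but your sensitivity accounting has a genuine gap that breaks the calibration to the noise. You bound the sensitivity of $u(\cdot,s)$ by $2$ via the triangle inequality and declare this to "match" the $\TLap(2,\eps,\delta)$ noise. But the noise is not added to $u(X,s)$; it is added to the gap-adjusted score $\max\{0,\,u(X,s)-u(X,s_2)-1\}$, which involves \emph{two} score evaluations, one of them at the data-dependent second-argmax $s_2(X)$. Under your accounting this quantity has sensitivity up to $4$, so $\TLap(2,\eps,\delta)$ would only give $(2\eps,\delta')$-DP. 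The paper closes this by two refinements you are missing: (i) $u(\cdot,s)$ actually has sensitivity $1$, because changing one point cannot simultaneously increase $|X\cap s|$ and decrease $\sup\{|X\cap t| : t\subsetneq s\}$ (or vice versa); and (ii) $u(X,s_2(X))$ has sensitivity $1$, because the second-largest score is a monotone function of the vector of all scores, each of which moves by at most $1$. Together these give the gap-adjusted score sensitivity $2$, matching the noise. You assert the gap-adjusted scores are $2$-sensitive in your last paragraph, but this is unsupported by (and inconsistent with) your own step~1.

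A second, smaller gap: you appeal to "the $\GAPMAX$ analysis" as a black box, but Lemma~\ref{lem:gap-max} in the paper is stated only as a utility guarantee for some existing private mechanism, so the privacy of this particular instantiation has to be reproved. The crux is a sparsity/composition argument you do not make: across a pair of neighbouring datasets $X,X'$, at most \emph{one} candidate $s$ has a nonzero gap-adjusted score on either dataset (one cannot have $u(X,s_1)-u(X,s_2)>1$ and $u(X',s_1')-u(X',s_2')>1$ with $s_1\ne s_1'$, precisely because the gap has sensitivity $2$). This is what lets the algorithm add independent noise to every candidate without paying for composition over the (exponentially many) candidates. Your step~2 handles candidates outside $\cY(X)\cap\cY(X')$ correctly in spirit, but without the cross-neighbour sparsity claim the "add noise to everything" step is not justified.
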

The proof of Lemma \ref{lem:exact-privacy} closely follows the
privacy analysis of $\GAPMAX$ by \cite{BunDRS18}. The only novelty
is that Algorithm \ref{alg:exact} may output $\NULL$ in the case
that the input is malformed (i.e., doesn't satisfy the assumptions
of Problem~\ref{prob:exact}).

The key is that the score $u(X,s)$ is low sensitivity. Thus
$\max\{ 0 , u(X,s) - u(X,s_2) -1\}$ also has low sensitivity.
What we gain from subtracting the second-largest score and
taking this maximum is that these values are also sparse -- only
one ($s=s_1$) is nonzero. This means we can add noise to all
the values without paying for composition. We now prove
Lemma~\ref{lem:exact-privacy}.

\begin{proof}
    First, we argue that the sensitivity of $u$ is
    $1$. %Consider any $s \in \cS_d$, and neighbouring datasets $X,X'$.
    The quantity $\abs{X \cap s}$ has sensitivity $1$ and so does
    $\sup \{ |X \cap t| : t \in \mathcal{S}_d, t \subsetneq s \}$.
    This implies sensitivity $2$ by the triangle inequality.
    However, we see that it is not possible to change one point
    that simultaneously increases $\abs{X \cap s}$ and decreases
    $\sup \{ |X \cap t| : t \in \mathcal{S}_d, t \subsetneq s \}$
    or vice versa. Thus the sensitivity is actually $1$.

    We also argue that $u(X,s_2)$ has sensitivity $1$, where $s_2$
    is the candidate with the second-largest score.
    Observe that the second-largest score is a monotone function of
    the collection of all scores -- i.e., increasing scores cannot
    decrease the second-largest score and vice versa.
    Changing one input point can at most increase all the scores by
    $1$, which would only increase the second-largest score by $1$.

    This implies that $\max\{ 0, u(X,s) - u(X,s_2) -1 \}$ has sensitivity
    $2$ by the triangle inequality and the fact that the maximum does
    not increase the sensitivity.

    Now we observe that for any input $X$ there is at most one $s$ such
    that $\max\{ 0, u(X,s) - u(X,s_2) -1 \} \ne 0$, namely $s=s_1$.
    We can say something even stronger: Let $X$ and $X'$ be neighbouring
    datasets with $s_1$ and $s_2$ the largest and second-largest scores
    on $X$ and $s_1'$ and $s_2'$ the largest and second-largest scores
    on $X'$. Then there is at most one $s$ such that
    $\max\{ 0, u(X,s) - u(X,s_2) -1 \} \ne 0$ or $\max\{ 0, u(X',s) - u(X',s_2') -1 \} \ne 0$.
    In other words, we cannot have both $u(X,s_1) - u(X,s_2) >1$ and
    $u(X',s_1') - u(X',s_2') >1$ unless $s_1=s_1'$. This holds because
    $u(X,s) - u(X,s_2)$ has sensitivity $2$.

    With these observations in hand, we can delve into the privacy
    analysis. Let $X$ and $X'$ be neighbouring datasets with $s_1$
    and $s_2$ the largest and second-largest scores on $X$ and $s_1'$
    and $s_2'$ the largest and second-largest scores on $X'$. Let $\cY$
    be the set of candidates from $X$ and let $\cY'$ be the set of
    candidates from $X'$. Let $\check \cY = \cY \cup \cY'$ and
    $\hat \cY = \cY \cap \cY'$.

    We note that, for $s \in \check \cY$, if $u(X,s) \le \ell$, then
    there is no way that $\hat s = s$. This is because
    $|\xi(s)|\le \frac{2 \log(1/\delta)}{\varepsilon}$ for all $s$ and
    hence, there is no way we could have
    $\argmax_{s \in \cY} \max\{ 0 , u(X,s) - u(X,s_2) -1\} +
    \xi(s) \ge \argmax_{s \in \cY} \max\{ 0 , u(X,\NULL) - u(X,s_2) -1\} + \xi(\NULL)$.

    If $s \in \check \cY \setminus \hat \cY$, then
    $u(X,s) \le |X \cap s| \le k+1 \le \ell$ and $u(X',s) \le \ell$.
    This is because $s \notin \hat \cY$ implies $|X \cap s| < k$ or
    $|X' \cap s| < k$, but $|X \cap s| \le |X' \cap s| +1$. Thus,
    there is no way these points are output and, hence, we can ignore
    these points in the privacy analysis. (This is the reason for adding
    the $\NULL$ candidate.)

    Now we argue that the entire collection of noisy values
    $\max\{ 0 , u(X,s) - u(X,s_2) -1\} + \xi(s)$ for $s \in \hat \cY$
    is differentially private.
    This is because we are adding noise to a vector where (i) on the
    neighbouring datasets only $1$ coordinate is potentially different
    and (ii) this coordinate has sensitivity $2$.
\end{proof}

\subsection{Accuracy}

We start by showing that the true subspace $s_*$ has a
high score, while the rest of the subspaces have low scores.

\begin{lemma}\label{lem:scores}
    Under the assumptions of Problem~\ref{prob:exact}, $u(x,s_*) \ge n - 2\ell$
    and $u(x,s')\le 2\ell$ for $s' \ne s_*$.
\end{lemma}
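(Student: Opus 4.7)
The plan is to handle the lower bound on $u(x,s_*)$ first and then to do a case analysis on $s'$ based on how it sits relative to $s_*$.

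For $u(x,s_*)$, assumption~(i) in Problem~\ref{prob:exact} gives $|x \cap s_*| \ge n-\ell$. Any $t \subsetneq s_*$ has $\dim(t) \le k-1$, so assumption~(ii) yields $|x \cap t| \le \ell$. Subtracting gives $u(x,s_*) \ge (n-\ell)-\ell = n-2\ell$, as desired.

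For $s' \ne s_*$, I would split into three cases. \textbf{Case A:} $\dim(s') < k$. Then by assumption~(ii), $|x \cap s'| \le \ell$, and since $u(x,s')$ is at most $|x \cap s'|$, we get $u(x,s') \le \ell$. \textbf{Case B:} $s_* \subseteq s'$ and $s' \ne s_*$, so in fact $s_* \subsetneq s'$. Then $s_*$ itself is a strict subspace of $s'$, so the supremum in the definition of $u$ is at least $|x \cap s_*| \ge n-\ell$, while $|x \cap s'| \le n$, giving $u(x,s') \le \ell$. \textbf{Case C:} $\dim(s') \ge k$ but $s_* \not\subseteq s'$ (so $s' \cap s_* \subsetneq s_*$). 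Then points of $x \cap s'$ split into those in $s' \cap s_*$ and those in $s' \setminus s_*$. The first group has size at most $\ell$ by assumption~(ii) applied to the proper subspace $s' \cap s_*$ of $s_*$, and the second group has size at most $\ell$ because at most $\ell$ data points lie outside $s_*$ by assumption~(i). Hence $|x \cap s'| \le 2\ell$ and $u(x,s') \le |x \cap s'| \le 2\ell$.

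Combining the three cases yields $u(x,s') \le 2\ell$ for every $s' \ne s_*$. There is no genuine obstacle here — the argument is a clean case split — the only subtlety worth flagging is that in Case B one must be careful to note that $s_* \subsetneq s'$ is a valid witness for the supremum, and in Case C one uses both assumptions simultaneously (one to bound points in $s' \cap s_*$, one to bound points outside $s_*$).
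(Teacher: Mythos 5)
Your proof is correct and follows essentially the same route as the paper's: the same lower bound for $u(x,s_*)$ via assumptions (i) and (ii), and a three-way case split on how $s'$ sits relative to $s_*$. The only cosmetic difference is that in your Case C you bound the raw count $|x\cap s'|$ by $2\ell$ directly, whereas the paper uses $s'\cap s_*$ as the witness in the supremum to get the sharper bound $\ell$; both suffice for the stated claim.
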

\begin{proof}
    We have $u(x,s_*) = |x \cap s_*| - |x \cap s'|$ for some
    $s' \in \mathcal{S}_d$ with $s' \subsetneq s_*$. The
    dimension of $s'$ is at most $k-1$ and, by the assumption
    (ii), $|x \cap s'| \le \ell$. 
    
    Let $s' \in \mathcal{S}_d \setminus \{s_*\}$.
    There are three cases to analyse:
    \begin{enumerate}
        \item Let $s' \supsetneq s_*$. Then
            $u(x,s') \le |x \cap s'| - |x \cap s_*| \leq \ell$
            because the $\leq \ell$ adverserial points and the $\geq n-\ell$
            non-adversarial points may not together lie in a subspace
            of dimension $k$.

        \item Let $s' \subsetneq s_*$. Let
            $k'$ be the dimension of $s'$. Clearly $k'<k$.
            By our assumption (ii), $|s' \cap x| \le \ell$.
            Then $u(x,s') = |x \cap s'| - |x \cap t| \le \ell$
            for some $t$ because the $\leq \ell$ adversarial points
            already don't lie in $s_*$, so they will not lie in
            any subspace of $s_*$.

        \item Let $s'$ be incomparable to $s_*$.
            Let $s'' = s' \cap s_*$. Then
            $u(x,s') \le |x \cap s'| - |x \cap s''| \leq \ell$
            because the adversarial points may not lie in $s_*$,
            but could be in $s'\setminus s''$.
    \end{enumerate}
    This completes the proof.
\end{proof}

Now, we show that the algorithm is accurate.

\begin{lemma}\label{lem:exact-accuracy}
    If
    $n \geq 3\ell + \frac{8\log(1/\delta)}{\eps} + 2,$
    then Algorithm~\ref{alg:exact} outputs $s_*$ for Problem~\ref{prob:exact}.
\end{lemma}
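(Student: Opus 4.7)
The goal is to verify that, under $n \ge 3\ell + 8\log(1/\delta)/\eps + 2$, the last line of Algorithm~\ref{alg:exact} returns $s_*$ with probability $1$. The overall strategy is standard for gap-max: decompose each candidate's noisy objective into a ``gap term'' $\max\{0, u(X,s)-u(X,s_2)-1\}$ plus bounded noise $\xi(s)$, establish that $s_*$ is the unique candidate with a positive gap term, and then check that the gap exceeds the largest possible value of $\xi(s) - \xi(s_*)$.

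For the first part, Lemma~\ref{lem:scores} gives $u(X,s_*) \ge n - 2\ell$ and $u(X,s) \le 2\ell$ for every $s \in \cS_d \setminus \{s_*\}$, while by construction $u(X,\NULL) = \ell + 4\log(1/\delta)/\eps + 1$. Hence $u(X,s_2) \le \max\{2\ell,\, \ell + 4\log(1/\delta)/\eps + 1\}$. A short case split on whether $\ell \le 4\log(1/\delta)/\eps + 1$ (where the $\NULL$ candidate dominates the max) or $\ell > 4\log(1/\delta)/\eps + 1$ (where $2\ell$ dominates, and the hypothesis forces $n \ge 5\ell$) shows that the hypothesis on $n$ implies both $s_1 = s_*$ and
\[
u(X,s_*) - u(X,s_2) - 1 \;\ge\; \tfrac{4\log(1/\delta)}{\eps}
\]
in each regime. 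Because $s_1 = s_*$, every competitor $s \ne s_*$ satisfies $u(X,s) \le u(X,s_2)$ and so contributes a gap term of $0$; its noisy objective reduces to $\xi(s)$.

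For the second part, Lemma~\ref{lem:truncated-laplace} with $\Delta = 2$ gives $\xi(s) \in [-A,A]$ with $A \le 2\log(1/\delta)/\eps$ (using that $\eps$ is smaller than a constant), and continuity of the density gives $|\xi(s)| < A$ almost surely. Since $\cY$ is finite (it consists of $\NULL$ together with the spans of the $\binom{n}{k}$ $k$-subsets of the input), this strict bound holds uniformly over $\cY$ with probability $1$, so $\xi(s) - \xi(s_*) < 2A \le 4\log(1/\delta)/\eps$ for every $s \ne s_*$ almost surely. Combining with the gap bound of the previous paragraph yields
\[
\max\{0, u(X,s_*)-u(X,s_2)-1\} + \xi(s_*) \;>\; \xi(s) \;=\; \max\{0, u(X,s)-u(X,s_2)-1\} + \xi(s)
\]
for every $s \ne s_*$, so $\hat s = s_*$ with probability $1$. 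The main obstacle is the two-regime case analysis on $\ell$: the lopsided constants $3\ell$ and $8\log(1/\delta)/\eps$ in the hypothesis are tuned precisely so that, whichever of $2\ell$ or $\ell + 4\log(1/\delta)/\eps + 1$ dominates $u(X,s_2)$, the required $4\log(1/\delta)/\eps$ gap still survives.
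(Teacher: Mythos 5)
Your overall strategy is the same as the paper's: combine Lemma~\ref{lem:scores} with the bound $|\xi(s)|\le 2\log(1/\delta)/\eps$ on the truncated Laplace noise and check that the gap term for $s_*$ exceeds the largest possible noise difference. (Your extra remark that ties are measure-zero is a nicety the paper omits.) However, there is a genuine error in your second regime.

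In the case $\ell > 4\log(1/\delta)/\eps + 1$ you assert that ``the hypothesis forces $n \ge 5\ell$.'' It does not: that case assumption gives $8\log(1/\delta)/\eps + 2 < 2\ell$, i.e.\ an \emph{upper} bound on the $\log(1/\delta)/\eps$ contribution, so $n \ge 3\ell + 8\log(1/\delta)/\eps + 2$ is consistent with $n$ barely exceeding $3\ell$. Concretely, take $\log(1/\delta)/\eps = 0.1$, $\ell = 100$, $n = 303$: the hypothesis of the lemma holds, but $n < 5\ell = 500$, and the inequality your case actually needs with the $2\ell$ competitor bound, namely $n - 4\ell - 1 \ge 4\log(1/\delta)/\eps$, fails ($302 - 400 < 0$). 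So your argument does not close in exactly the regime where the $2\ell$ bound dominates.

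The way out — and what the paper implicitly does — is to use the sharper bound $u(x,s') \le \ell$ for all $s' \ne s_*$. Although the \emph{statement} of Lemma~\ref{lem:scores} says $2\ell$, each of the three cases in its proof in fact establishes $u(x,s')\le\ell$, and the paper's accuracy proof quotes ``at most $\ell$.'' With that bound, $u(X,s_2) \le \max\{\ell,\, \ell + 4\log(1/\delta)/\eps + 1\} = \ell + 4\log(1/\delta)/\eps + 1$ always, no case split is needed, and
\[
\max\{0,\,u(X,s_*) - u(X,s_2) - 1\} \;\ge\; n - 3\ell - \tfrac{4\log(1/\delta)}{\eps} - 2 \;\ge\; \tfrac{4\log(1/\delta)}{\eps},
\]
which beats the worst-case noise difference. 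You should either prove and use this $\ell$ bound directly, or strengthen the hypothesis on $n$ (e.g.\ to $n \ge 4\ell + 8\log(1/\delta)/\eps + 2$) if you insist on working only from the stated $2\ell$.
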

\begin{proof}
    From Lemma~\ref{lem:scores}, we know that $s_*$ has
    a score of at least $n-2\ell$, and the next best subspace
    can have a score of at most $\ell$. Also, the score of
    $\NULL$ is defined to be $\ell + \tfrac{4\log(1/\delta)}{\eps} + 1$.
    This means that the gap satisfies $\max\{ 0 , u(X,s_*) - u(X,s_2) -1\} \ge n - 3\ell - \tfrac{4\log(1/\delta)}{\eps}  -1$.
    Since the noise is bounded by $\tfrac{2\log(1/\delta)}{\eps}$, our bound on $n$ implies that $\hat s = s_*$
\end{proof}

%\subsection{Putting It All Together}
%
%\begin{proof}[Proof of Theorem~\ref{thm:exact}]
%    By Lemmata~\ref{lem:exact-privacy} and \ref{lem:exact-accuracy},
%    the algorithm is differentially private and accurate.
%\end{proof}

\begin{comment}
    
    Key questions: Remove/weaken assumption (ii) general
    position. Approximate case (or discretization).
    
    \subsection{Generalizing beyond general position}
    
    Still assuming $x \subset s_*$ for some $s_* \in \mathcal{S}_d^k$. Also assume $0 \notin x$.
    
    Claim: There exists some $s \in \mathcal{S}_d$ with $u(x,s) \ge n/k$.
    \begin{proof}
    Let $s_k=s_*$. For $i=k,k-1,k-2,\cdots,1$, inductively choose $s_{i-1} \in \mathcal{S}_d$ such that $s_{i-1} \subsetneq s_i$ and $|x \cap s_{i-1}|$ is maximal. By construction, $u(x,s_i) = |x \cap s_i| - |x \cap s_{i-1}|$ for all $i \in [k]$. By assumption, $|x \cap s_k| = n$. Also, $\mathsf{dimension}(s_i) \le i$. Thus $|x \cap s_0| = 0$.
    Since $\sum_{i=1}^k u(x,s_i) = |x \cap s_k| - |x \cap s_0| = n$, we must have $u(x,s_i) \ge n/k$ for some $i \in [k]$.
    \end{proof}

\end{comment}

\subsection{Lower Bound}

Here, we show that our upper bound is optimal up to constants for
the exact case.

\begin{theorem}
     Any $(\eps,\delta)$-DP algorithm that takes a dataset of $n$ points satisfying the conditions
    in Problem~\ref{prob:exact} and outputs $s_*$ with probability $>0.5$ requires
    $n \geq \Omega\left(\ell + \frac{\log(1/\delta)}{\eps}\right).$
\end{theorem}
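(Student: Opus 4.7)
The plan is to decompose the lower bound into a statistical piece $n \ge 2\ell + 1$ (which does not use privacy) and a group-privacy packing piece $n \ge \Omega(\log(1/\delta)/\eps)$; combined, these give $n \ge \Omega(\ell + \log(1/\delta)/\eps)$.

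For the statistical piece I would construct a single dataset $X$ that is a valid Problem~\ref{prob:exact} instance for two distinct candidate subspaces simultaneously. Pick distinct $s_1, s_2 \in \mathcal{S}_d^k$ and populate $X$ with $n - \ell$ points in general position inside $s_1$ together with $\ell$ more points in general position inside $s_2$. When $n \le 2\ell$, both assignments ``$s_* = s_1$ with the $s_2$-points adversarial'' and ``$s_* = s_2$ with the $s_1$-points adversarial'' satisfy condition~(i) of Problem~\ref{prob:exact}, and a careful placement of the points makes both satisfy condition~(ii). Because $M(X)$ is a single distribution on $\mathcal{S}_d^k$, the two disjoint events $\{M(X)=s_1\}$ and $\{M(X)=s_2\}$ cannot both carry probability exceeding $1/2$, so one of the two instances violates the accuracy hypothesis, forcing $n \ge 2\ell + 1 = \Omega(\ell)$.

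For the privacy piece I would apply a group-privacy packing argument. Pick a finite family $\mathcal{F} \subset \mathcal{S}_d^k$ of distinct $k$-planes, available in arbitrary size because the Grassmannian is uncountable, and for each $s \in \mathcal{F}$ let $X_s$ be a dataset of $n$ points in general position inside $s$. Each $X_s$ satisfies (i) and (ii) with zero adversarial points (general position puts at most $k-1 \le \ell$ points in any $(k-1)$-plane), and any pair $X_s, X_{s'}$ differ in at most $n$ coordinates. Standard group privacy (applied to $(\eps,\delta)$-DP with group size $n$) plus the accuracy assumption give, for any fixed reference $s_0 \in \mathcal{F}$ and every $s \in \mathcal{F}$,
\[
\pr{}{M(X_{s_0})=s} \;\ge\; e^{-n\eps}\pr{}{M(X_s)=s} - n e^{-\eps}\delta \;>\; \tfrac{1}{2} e^{-n\eps} - n e^{-\eps}\delta.
\]
Summing over the pairwise-disjoint events $\{M(X_{s_0})=s\}$ for $s \in \mathcal{F}$ yields $1 \ge |\mathcal{F}| \bigl(\tfrac{1}{2}e^{-n\eps} - n e^{-\eps}\delta\bigr)$, and letting $|\mathcal{F}| \to \infty$ forces the bracketed quantity to be non-positive, which rearranges to $e^{n\eps} \ge e^{\eps}/(2n\delta)$ and hence $n \ge \Omega(\log(1/\delta)/\eps)$.

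The main delicate step is verifying condition (ii) in the statistical construction when $\ell$ is near its minimum $k-1$: a generic $(k-1)$-plane can pick up up to $k-1$ points from each of $s_1$ and $s_2$, totalling roughly $2(k-1)$, which exceeds $\ell = k-1$ for $k \ge 2$. I would resolve this by either choosing $s_1, s_2$ to share a large common subspace so the relevant slicing planes are constrained, or by folding the $\Omega(\ell)$ bound directly into the packing part: having all $X_s$ share $\ell$ common points placed in general position in a fixed $(k-1)$-plane $T \subset \bigcap_{s \in \mathcal{F}} s$ drops the pairwise Hamming distance from $n$ to $n-\ell$, so the same group-privacy calculation produces $n - \ell \ge \Omega(\log(1/\delta)/\eps)$ in a single step and recovers the combined bound.
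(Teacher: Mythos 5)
Your proposal follows essentially the same route as the paper: the paper's proof is a three-sentence sketch that decomposes the bound into a statistical part ($n \ge \ell + k$, because $k$ points are needed to span the subspace and $\ell$ of them may be corrupted) and a group-privacy part ($n \ge \Omega(\log(1/\delta)/\varepsilon)$, since otherwise the mechanism would be $(10,0.1)$-DP under replacement of the \emph{entire} dataset, which is incompatible with recovering the subspace). Your write-up is a correct, fully detailed version of that sketch --- the explicit packing over an arbitrarily large family of subspaces is exactly the argument the paper leaves implicit, and your two-instance ambiguity argument for the $\Omega(\ell)$ term is a sound (if slightly more delicate, as you yourself note regarding condition (ii)) substitute for the paper's simpler $n \ge \ell + k$ counting.
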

\begin{proof}
    First, $n \ge \ell + k$. This is because we need at least
    $k$ points to span the subspace, and $\ell$ points could be corrupted.
    Second, $n \ge \Omega(\log(1/\delta)/\varepsilon)$ by group
    privacy. Otherwise, the algorithm is $(10,0.1)$-differentially
    private with respect to changing the \emph{entire} dataset and
    it is clearly impossible to output the subspace under this condition.
\end{proof}

\section{Approximate Case}

In this section, we discuss the case, where the data
``approximately'' lies in a $k$-dimensional subspace of
$\R^d$. %An alternate perspective of this problem is that
%the data lies in a $k$-dimensional subspace of $\R^d$,
%but has very small noise in the orthogonal directions.
We make a Gaussian distributional assumption, where the
covariance is approximately $k$-dimensional, though the
results could be extended to distributions with heavier
tails using the right inequalities. We formally define
the problem:

\begin{problem}\label{prob:gaussians}
    Let $\Sigma \in \R^{d \times d}$ be a symmetric, PSD
    matrix of rank $\geq k \in \{1,\dots,d\}$, and let $0 < \gamma \ll 1$,
    such that $\tfrac{\lambda_{k+1}}{\lambda_k} \leq \gamma^2$.
    Suppose $\Pi$ is the projection matrix corresponding
    to the subspace spanned by the eigenvectors of $\Sigma$
    corresponding to the eigenvalues $\lambda_1,\dots,\lambda_k$.
    Given sample access to $\cN(\vec{0},\Sigma)$,
    and $0 < \alpha < 1$, output a projection matrix $\wh{\Pi}$,
    such that $\|\Pi-\wh{\Pi}\| \leq \alpha$.
\end{problem}

We solve Problem~\ref{prob:gaussians} under the constraint
of $(\eps,\delta)$-differential privacy. Throughout this section,
we would refer to the subspace spanned by the top $k$ eigenvectors
of $\Sigma$ as the ``true'' or ``actual'' subspace.

Algorithm \ref{alg:approximate} solves Problem~\ref{prob:gaussians} and proves Theorem \ref{thm:intro-main-approx}.
Here $\|\cdot\|$ is the operator norm.

\begin{remark}\label{rem:gamma}
    We scale the eigenvalues of $\Sigma$
    so that $\lambda_k=1$ and $\lambda_{k+1} \leq \gamma^2$.
    %We will be adopting this notation throughout this text.
    Also, for the purpose of the analysis, we will be splitting
    $\Sigma = \Sigma_k + \Sigma_{d-k}$, where $\Sigma_k$ is the
    covariance matrix formed by the top $k$ eigenvalues and
    the corresponding eigenvectors of $\Sigma$ and $\Sigma_{d-k}$
    is remainder.
\end{remark}

%\begin{comment}

Also, we assume the
knowledge of $\gamma$ (or an upper bound on $\gamma$). Our solution
is presented in Algorithm~\ref{alg:approximate}. The following
theorem is the main result of the section.

\begin{theorem}\label{thm:approximate}
    Let $\Sigma \in \R^{d \times d}$ be an arbitrary, symmetric, PSD
    matrix of rank $\geq k \in \{1,\dots,d\}$, and let $0 < \gamma < 1$.
    Suppose $\Pi$ is the projection matrix corresponding
    to the subspace spanned by the vectors of $\Sigma_k$.
    Then given
    $$\gamma^2 \in
        O\left(\frac{\eps\alpha^2n}{d^{2}k\ln(1/\delta)}\cdot
        \min\left\{\frac{1}{k},
        \frac{1}{\ln(k\ln(1/\delta)/\eps)}
        \right\}\right),$$
    such that $\lambda_{k+1}(\Sigma) \leq \gamma^2\lambda_k(\Sigma)$,
    for every $\eps,\delta>0$, and $0 < \alpha < 1$,
    there exists and $(\eps,\delta)$-DP algorithm that takes
    $$n \geq O\left(\frac{k\log(1/\delta)}{\eps} +
        \frac{\log(1/\delta)\log(\log(1/\delta)/\eps)}{\eps}\right)$$
    samples from $\cN(\vec{0},\Sigma)$, and outputs a projection matrix $\wh{\Pi}$,
    such that $\|\Pi-\wh{\Pi}\| \leq \alpha$ with probability
    at least $0.7$.
    \tnote{Eigenvalue gap assumption missing. Also order of quantifiers is ambiguous -- ``for every $\Sigma$ there exists an algorith.''}
\end{theorem}
%\end{comment}

Algorithm~\ref{alg:approximate} is a type of
``Subsample-and-Aggregate'' algorithm \cite{NissimRS07}.
Here, we consider multiple subspaces formed by the points
from the same Gaussian, and privately find a subspace that
is close to all those subspaces. Since the subspaces formed
by the points would be close to the true subspace, the privately
found subspace would be close to the true subspace.

A little more formally, we first sample $q$ public data points
(called ``reference points'') from $\cN(\vec{0},\id)$. Next,
we divide the original dataset $X$ into disjoint datasets of $m$ samples
each, and project all reference points on the subspaces spanned
by every subset. Now, for every reference point, we do the
following. We have $t=\tfrac{n}{m}$ projections of the reference
point. Using DP histogram over $\R^d$, we aggregate those
projections in the histogram cells; with high probability
all those projections will be close to one another, so they
would lie within one histogram cell. We output a random point
from the histogram cell corresponding to the reference point.
With a total of $q$ points output in this way, we finally
output the projection matrix spanned by these points. In
the algorithm $C_0$, $C_1$, and $C_2$ are universal constants.

We divide the proof of Theorem~\ref{thm:approximate}
into two parts: privacy (Lemma \ref{coro:privacy}) and
accuracy (Lemma~\ref{lem:final-projection}).

\begin{algorithm}[h!] 
\caption{\label{alg:approximate}DP Approximate Subspace Estimator
    $\DPASE_{\eps, \delta, \alpha, \gamma, k}(X)$}
\KwIn{Samples $X_1,\dots,X_{n} \in \R^d$.
    Parameters $\eps, \delta, \alpha, \gamma, k > 0$.}
\KwOut{Projection matrix $\wh{\Pi} \in \R^{d \times d}$ of rank $k$.}
\vspace{5pt}

Set parameters:
    $t \gets \tfrac{C_0\ln(1/\delta)}{\eps}$ \qquad
    $m \gets \lfloor n/t \rfloor$ \qquad $q \gets C_1 k$
    \qquad $\ell \gets \tfrac{C_2\gamma\sqrt{dk}(\sqrt{k}+\sqrt{\ln(kt)})}{\sqrt{m}}$
\vspace{5pt}

Sample reference points $p_1,\dots,p_q$ from $\cN(\vec{0},\id)$ independently.
\vspace{5pt}

\tcp{Subsample from $X$, and form projection matrices.}
\For{$j \in 1,\dots,t$}{
    Let $X^j = (X_{(j-1)m+1},\dots,X_{jm}) \in \mathbb{R}^{d \times m}$.\\
    Let $\Pi_j \in \mathbb{R}^{d \times d}$ be the projection matrix onto the subspace spanned by the eigenvectors of $X^j (X^j)^T \in \mathbb{R}^{d \times d}$ corresponding to the largest $k$ eigenvalues.\\
    \For{$i \in 1,\dots,q$}{
        $p_{i}^j \gets \Pi_j p_i$
    }
}
\vspace{5pt}

\tcp{Create histogram cells with random offset.}
Let $\lambda$ be a random number in $[0,1)$.\\
Divide $\R^{qd}$ into $\Omega =
    \{\dots,[\lambda\ell+i\ell,\lambda\ell+(i+1)\ell),\dots\}^{qd}$,
    for all $i \in \Z$.\\
Let each disjoint cell of length $\ell$ be a histogram bucket.
\vspace{5pt}

\tcp{Perform private aggregation of subspaces.}
For each $i \in [q]$, let $Q_i \in \RR^{d \times t}$ be the
    dataset, where column $j$ is $p_i^j$.\\
Let $Q \in \RR^{qd \times t}$ be the vertical concatenation
    of all $Q_i$'s in order.\\
Run $(\eps,\delta)$-DP histogram over $\Omega$ using $Q$
    to get $\omega \in \Omega$ that contains at least $\tfrac{t}{2}$ points.\\
\If{no such $\omega$ exists}{
    \Return $\bot$
}
\vspace{5pt}

\tcp{Return the subspace.}
Let $\wh{p}=(\wh{p}_1,\dots,\wh{p}_d,\dots,\wh{p}_{(q-1)d+1},\dots,\wh{p}_{qd})$
    be a random point in $\omega$.\\
\For{each $i \in [q]$}{
    Let $\wh{p}_i = (\wh{p}_{(i-1)d+1},\dots,\wh{p}_{id})$.
}
    Let $\wh{\Pi}$ be the projection matrix of the top-$k$ subspace of $(\wh{p}_1,\dots,\wh{p}_q)$.\\
\Return $\wh{\Pi}.$
\vspace{5pt}
\end{algorithm}

\subsection{Privacy}

We analyse the privacy by understanding the sensitivities
at the only sequence of steps invoking a differentially
private mechanism, that is, the sequence of steps involving
DP-histograms.

\begin{lemma}\label{lem:histogram-sensitivity}\label{coro:privacy}
    Algorithm~\ref{alg:approximate} is $(\eps,\delta)$-differentially
    private.
\end{lemma}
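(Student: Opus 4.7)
The plan is to observe that almost the entire algorithm, apart from the one call to the differentially private histogram subroutine, is data-independent or post-processing, and then reduce privacy of the histogram call to the fact that a single change to $X$ causes a single-column change to the histogram input $Q$.

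First I would enumerate the sources of randomness used before the histogram is run and verify they are independent of $X$: the reference points $p_1,\dots,p_q \sim \cN(\vec{0},\id)$ are drawn from a fixed, data-independent distribution, and the random offset $\lambda \in [0,1)$ defining the grid $\Omega$ is likewise data-independent. Hence, conditioning on $(p_1,\dots,p_q,\lambda)$, the bucketing $\Omega$ and the map from subspaces $\Pi_j$ to points $p_i^j = \Pi_j p_i$ are deterministic functions of the data.

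Next, I would show that the column-vector $Q \in \RR^{qd \times t}$, viewed as a dataset of $t$ ``users'' indexed by the columns, has sensitivity $1$ with respect to changing one entry of $X$. Since the algorithm partitions $X$ into disjoint subsamples $X^1,\dots,X^t$, changing a single column of $X$ modifies exactly one $X^j$. Consequently only the corresponding projection matrix $\Pi_j$ can change; for every other $j' \neq j$, the column $(\Pi_{j'} p_1,\dots,\Pi_{j'} p_q)$ of $Q$ is identical. Thus $Q$ and $Q'$ differ in at most one column, i.e., they are neighbouring as input to the histogram mechanism.

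Now I would invoke Lemma~\ref{lem:priv-hist} to conclude that the output $\omega$ of the $(\eps,\delta)$-DP histogram on $Q$ (over the countable set of grid cells $\Omega$) satisfies $(\eps,\delta)$-differential privacy with respect to $X$. Finally, the remaining steps, namely drawing $\hat{p}$ uniformly at random from the returned cell $\omega$ (or returning $\bot$), unpacking $\hat{p}$ into $(\hat{p}_1,\dots,\hat{p}_q)$, and computing the top-$k$ projection $\wh{\Pi}$, are randomized post-processing of $\omega$ that do not access $X$ again. Applying Lemma~\ref{lem:post-processing}, $\wh{\Pi}$ inherits $(\eps,\delta)$-differential privacy. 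The only real content is the disjointness of the subsamples, which gives the single-column-change property; I expect no serious obstacle, just a careful accounting of which objects depend on the data.
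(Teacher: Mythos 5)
Your proposal is correct and follows essentially the same route as the paper: one changed column of $X$ alters exactly one disjoint subsample $X^j$, hence at most one column of $Q$, so the $(\eps,\delta)$-DP histogram of Lemma~\ref{lem:priv-hist} applies, and everything afterward is post-processing (Lemma~\ref{lem:post-processing}). Your additional remark that the reference points and the random grid offset are data-independent is a worthwhile detail the paper leaves implicit, but it does not change the argument.
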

\begin{proof}
    Changing one point in $X$ can change only
    one of the $X^j$'s. This can
    only change one point in $Q$, which in turn can only
    change the counts in two histogram cells by $1$.
    Therefore, the sensitivity is $2$. % Since the choice
    %of $i$ was arbitrary, this is true for all $i$.
    %For a reference
    %point $p_i$, changing a point in $X^{j^*}$ can either move
    %its projection on to the subspace spanned by $X^{j^*}$ to a
    %different histogram cell, or keep it in the same cell.
%    Privacy now follows from the guarantees of DP-histogram (Lemma~\ref{lem:priv-hist}).
    Because the sensitivity of the histogram step is bounded
    by $2$ (Lemma~\ref{lem:histogram-sensitivity}), an application
    of DP-histogram, by Lemma~\ref{lem:priv-hist}, is $(\eps,\delta)$-DP.
    Outputting a random
    point in the privately found histogram cell preserves privacy
    by post-processing (Lemma~\ref{lem:post-processing}).
    Hence, the claim.
\end{proof}

\subsection{Accuracy}

Now we delve into the utility analysis of the algorithm.
For $1 \leq j \leq t$,
let $X^j$ be the subsets of $X$ as defined in
Algorithm~\ref{alg:approximate}, and $\Pi_j$ be the
projection matrices of their respective subspaces. We
now show that $\Pi_j$ and the projection matrix of the
subspace spanned by $\Sigma_k$ are close in operator norm.

\begin{lemma}\label{lem:empirical-subspaces-close}
    Let $\Pi$ be the projection matrix of the subspace
    spanned by the vectors of $\Sigma_k$, and for each
    $1 \leq j \leq t$, let $\Pi_j$ be the projection
    matrix as defined in Algorithm~\ref{alg:approximate}.
    If $m \geq O(k + \ln(qt))$, then
    $$\pr{}{\forall j, \|\Pi-\Pi_j\| \leq
        O\left(\frac{\gamma\sqrt{d}}{\sqrt{m}}\right)} \geq 0.95$$
\end{lemma}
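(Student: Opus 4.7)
The plan is to reduce the lemma to an application of the $\text{Sin}(\Theta)$ theorem (Lemma~\ref{lem:sin-theta}) by decomposing each subsample into a ``signal'' part lying in the top-$k$ eigenspace and an orthogonal ``noise'' part, then controlling both pieces with Corollary~\ref{coro:normal-spectrum}. For a fixed $j$, I would write $X^j = X^j_k + Z^j$, where the columns of $X^j_k$ are i.i.d.\ $\cN(\vec{0},\Sigma_k)$ and the columns of $Z^j$ are independent i.i.d.\ $\cN(\vec{0},\Sigma_{d-k})$. Since $\Sigma_k$ and $\Sigma_{d-k}$ have orthogonal ranges, their sum has covariance $\Sigma$, so the columns of $X^j$ are i.i.d.\ $\cN(\vec{0},\Sigma)$ as required. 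By construction $\mathrm{col}(X^j_k)\subseteq\mathrm{col}(\Pi)$, and (almost surely, since $m\ge k$) this inclusion is an equality; moreover $\mathrm{col}(Z^j) \perp \mathrm{col}(\Pi)$.

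Next, I would apply Corollary~\ref{coro:normal-spectrum} twice with free parameter $\tau = \Theta(\sqrt{\ln t})$ (reserving $t$ for the number of subsamples). Working in the top-$k$ subspace where $X^j_k$ lives, $\Sigma_k$ is full rank with $\lambda_k=1$ (Remark~\ref{rem:gamma}), giving $s_k(X^j_k) \geq \sqrt{m} - O(\sqrt{k}+\sqrt{\ln t}) \geq \sqrt{m}/2$ under the hypothesis $m \geq \Omega(k+\ln(qt))$. For the noise, working in $\R^d$ with $\lambda_1(\Sigma_{d-k}) \leq \gamma^2$ yields $\|Z^j\| \leq \gamma(\sqrt{m} + O(\sqrt{d}+\sqrt{\ln t})) = O(\gamma(\sqrt{m}+\sqrt{d}))$. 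Each of these bounds holds per $j$ with failure probability $\leq 1/(20t)$, so a union bound over $j\in[t]$ keeps the total failure probability below $0.05$.

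Then I would instantiate Lemma~\ref{lem:sin-theta} with $X=X^j_k$ and $\hat{X} = X^j = X + Z^j$. Let $X^j_k = U \Sigma_1 V^T$ be the thin SVD; then (almost surely) $\Pi = UU^T$, while $\Pi_j = \hat{U}\hat{U}^T$ for $\hat{U}$ the top-$k$ left singular vectors of $\hat{X}$. Because $U_\bot^T X^j_k = 0$, the quantities $\beta = \|U_\bot^T \hat{X} V_\bot\|$, $\|Z_{12}\|$, and $\|Z_{21}\|$ are each bounded by $\|Z^j\|$ (projections have operator norm $\leq 1$). Meanwhile Lemma~\ref{lem:least-singular} gives $\alpha = s_{\min}(\Sigma_1 + U^T Z^j V) \geq s_k(X^j_k) - \|Z^j\| \geq \sqrt{m}/4$ once $\gamma(\sqrt{m}+\sqrt{d}) \ll \sqrt{m}$, which follows from the hypothesis on $\gamma$ in Theorem~\ref{thm:approximate}. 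The gap condition $\alpha^2 > \beta^2 + \min\{\|Z_{12}\|^2,\|Z_{21}\|^2\}$ then holds, and Lemma~\ref{lem:sin-theta} yields $\|\text{Sin}(\Theta)(U,\hat{U})\| \leq O(\sqrt{m}\cdot \gamma\sqrt{d})/\Omega(m) = O(\gamma\sqrt{d/m})$. Lemma~\ref{lem:sin-theta-property} converts this (up to a factor of $2$) into the desired bound on $\|\Pi-\Pi_j\|$.

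The main obstacle I expect is the bookkeeping in setting up the $\text{Sin}(\Theta)$ theorem correctly---in particular identifying $U,\hat{U}$ with the SVD factors of $X^j_k$ and $\hat{X}$, bounding the cross terms $\alpha,\beta,\|Z_{12}\|,\|Z_{21}\|$ as above, and verifying the gap condition under the $\gamma$-hypothesis. The Gaussian splitting, the singular-value concentration via Corollary~\ref{coro:normal-spectrum}, and the final translation to $\|\Pi-\Pi_j\|$ are all routine given the cited tools.
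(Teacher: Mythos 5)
Your proposal follows essentially the same route as the paper: decompose $X^j = Y^j + H$ with $Y^j$ drawn from $\cN(\vec{0},\Sigma_k)$ and $H$ from $\cN(\vec{0},\Sigma_{d-k})$, control $s_k(Y^j)$ and $\|H\|$ via Corollary~\ref{coro:normal-spectrum}, and feed $\alpha,\beta,Z_{12},Z_{21}$ into Lemma~\ref{lem:sin-theta} followed by Lemma~\ref{lem:sin-theta-property}.

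The one place you diverge is the bound on $\|Z_{21}\|$, and it matters quantitatively. You bound $\|Z_{21}\| \le \|Z^j\| \le O(\gamma(\sqrt{m}+\sqrt{d}))$, so the numerator $\alpha\|Z_{21}\|$ is $O(\sqrt{m}\cdot\gamma(\sqrt{m}+\sqrt{d})) = O(\gamma m + \gamma\sqrt{md})$, and after dividing by $\Omega(m)$ you get $O(\gamma + \gamma\sqrt{d/m})$, not the claimed $O(\gamma\sqrt{d/m})$; your displayed numerator $O(\sqrt{m}\cdot\gamma\sqrt{d})$ silently drops the $\gamma m$ term. The extra additive $O(\gamma)$ is absorbed only when $m = O(d)$, which is not guaranteed by the hypotheses. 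The paper avoids this by exploiting that $Z_{21} = \Pi_{U_\bot} H \Pi_V$ ends in the rank-$k$ projection $\Pi_V$: writing $H = \Sigma_{d-k}^{1/2}(\Sigma_{d-k}^{-1/2}H)$ with $\Sigma_{d-k}^{-1/2}H$ a standard Gaussian matrix, the right-multiplication by $\Pi_V$ reduces the effective number of columns from $m$ to $k$, giving $\|Z_{21}\| \le O(\gamma)\cdot O(\sqrt{d}+\sqrt{k}) = O(\gamma\sqrt{d})$ with no $\sqrt{m}$ term, and hence the clean $O(\gamma\sqrt{d}/\sqrt{m})$ after division. (Note also that $Z_{12}=UU^T H V_\bot V_\bot^T$ is exactly $0$ here, since the columns of $H$ lie in the orthogonal complement of $\mathrm{col}(U)$; your cruder bound on it is harmless but the exact vanishing is what the paper uses.) Everything else in your outline matches the paper's argument.
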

\begin{proof}
    We show that the subspaces spanned by $X^j$ and
    the true subspace spanned by $\Sigma$ are close.
    Formally, we invoke
    Lemmata \ref{lem:sin-theta} and \ref{lem:sin-theta-property}.
    This closeness follows from standard matrix concentration
    inequalities.
    %which we discuss in Appendix \ref{sec:preliminaries}.
    
    Fix a $j \in [t]$. Note that $X^j$ can be written
    as $Y^j + H$, where $Y^j$ is the matrix of vectors
    distributed as $\cN(\vec{0},\Sigma_k)$, and $H$ is
    a matrix of vectors distributed as $\cN(\vec{0},\Sigma_{d-k})$,
    where $\Sigma_k$ and $\Sigma_{d-k}$ are defined as
    in Remark~\ref{rem:gamma}.
    By Corollary~\ref{coro:normal-spectrum}, with probability at least $1-\tfrac{0.02}{t}$,
    $s_k(Y^j) \in \Theta((\sqrt{m}+\sqrt{k})(\sqrt{s_k(\Sigma_k)})) = \Theta(\sqrt{m}+\sqrt{k})> 0$.
    Therefore, the subspace spanned by
    $Y^j$ is the same as the subspace spanned by $\Sigma_k$.
    So, it suffices to look at the subspace spanned
    by $Y^j$.

    Now, by Corollary~\ref{coro:normal-spectrum}, we know
    that with probability at least $1-\tfrac{0.02}{t}$,
    $\|X^j-Y^j\| = \|H\| \leq O((\sqrt{m}+{\sqrt{d}})\sqrt{s_1(\Sigma_{d-k})})
    \leq O(\gamma(\sqrt{m}+\sqrt{d})\sqrt{s_k(\Sigma_k)}) \leq O(\gamma(\sqrt{m}+\sqrt{d}))$.
    
    We wish to invoke Lemma~\ref{lem:sin-theta}. Let $UDV^T$
    be the SVD of $Y^j$, and let $\hat{U}\hat{D}\hat{V}^T$ be
    the SVD of $X^j$. Now, for a matrix $M$, let $\Pi_M$ denote
    the projection matrix of the subspace spanned by the columns
    of $M$. Define quantities $a,b,z_{12},z_{21}$ as follows.
    \begin{align*}
        a &= s_{\min}(U^TX^jV)\\
            &= s_{\min}(U^TY^jV + U^THV)\\
            &= s_{\min}(U^TY^jV) \tag{Columns of $U$ are orthogonal to columns of $H$}\\
            &= s_k(Y^j)\\
            &\in \Theta(\sqrt{m}+\sqrt{k})\\
            &\in \Theta(\sqrt{m})\\
        b &= \|U_{\bot}^TX^jV_{\bot}\|\\
            &= \|U_{\bot}^TY^jV_{\bot} + U_{\bot}^THV_{\bot}\|\\
            &= \|U_{\bot}^THV_{\bot}\|
                \tag{Columns of $U_{\bot}$ are orthogonal to columns of $Y^j$}\\
            &\leq \|H\|\\
            &\leq O(\gamma(\sqrt{m}+\sqrt{d}))\\
        z_{12} &= \|\Pi_U H \Pi_{V_{\bot}}\|\\
            &= 0\\
        z_{21} &= \|\Pi_{U_{\bot}}H\Pi_V\|\\
            &= \|\Pi_{U_{\bot}}\Sigma_{d-k}^{1/2}(\Sigma_{d-k}^{-1/2}H)\Pi_V\|
    \end{align*}
    Now, in the above, $\Sigma_{d-k}^{-1/2}H \in \RR^{d\times m}$,
    such that each of its entry is an independent sample from $\cN(0,1)$.
    Right-multiplying it by $\Pi_V$ makes it a matrix
    in a $k$-dimensional subspace of $\RR^m$, such that
    each row is an independent vector from a spherical
    Gaussian. Using Corollary~\ref{coro:normal-spectrum},
    $\|\Sigma_{d-k}^{-1/2}H\| \leq O(\sqrt{d}+\sqrt{k}) \leq O(\sqrt{d})$
    with probability at least $1-\tfrac{0.01}{t}$.
    Also, $\|\Pi_{U_{\bot}}\Sigma_{d-k}^{1/2}\| \leq O(\gamma\sqrt{s_k(\Sigma_k)}) \leq O(\gamma)$.
    This gives us:
    $$z_{21} \leq O(\gamma\sqrt{d}).$$

    Since $a^2 > 2b^2$, we get the following by
    Lemma~\ref{lem:sin-theta}.
    \begin{align*}
        \|\text{Sin}(\Theta)(U,\hat{U})\| &\leq \frac{az_{21} + bz_{12}}
                {a^2-b^2-\min\{z_{12}^2,z_{21}^2\}}\\
            &\leq O\left(\frac{\gamma\sqrt{d}}{\sqrt{m}}\right)
    \end{align*}

    Therefore, using Lemma~\ref{lem:sin-theta-property},
    and applying the union bound over all $j$, we get the
    required result.
\end{proof}

Let $\xi = O\left(\tfrac{\gamma\sqrt{d}}{\sqrt{m}}\right)$. We
show that the projections of any
reference point are close.

\begin{corollary}\label{coro:reference-projections-close}
    Let $p_1,\dots,p_q$ be the reference points as
    defined in Algorithm~\ref{alg:approximate}, and
    let $\Pi$ and $\Pi_j$ (for $1 \leq j \leq t$) be
    projections matrices as defined in Lemma~\ref{lem:empirical-subspaces-close}.
    Then
    $$\pr{}{\forall i,j, \|(\Pi-\Pi_j)p_i\| \leq O(\xi(\sqrt{k}+\sqrt{\ln(qt)}))} \geq 0.9.$$
\end{corollary}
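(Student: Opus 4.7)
The plan is to combine two high-probability events. First, condition on the conclusion of Lemma~\ref{lem:empirical-subspaces-close}: with probability at least $0.95$, we have $\|\Pi-\Pi_j\|\le\xi$ for every $j\in[t]$. Second, we control the Gaussian reference points $p_1,\dots,p_q$ separately; with a further high-probability event over the $p_i$'s, a final union bound will deliver the required $0.9$.

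The key structural observation is that $\Pi-\Pi_j$ has rank at most $2k$: both $\Pi$ and $\Pi_j$ are projections onto $k$-dimensional subspaces, so their combined range has dimension at most $2k$, and $\Pi-\Pi_j$ annihilates its orthogonal complement. Let $U_j\in\RR^{d\times r_j}$ with $r_j\le 2k$ be an orthonormal basis for that range; since $U_j U_j^T$ is the projection onto that range, we can factor $(\Pi-\Pi_j)p_i = (\Pi-\Pi_j)U_j(U_j^T p_i)$, so that
\[\|(\Pi-\Pi_j)p_i\|\le\|\Pi-\Pi_j\|\cdot\|U_j^T p_i\|\le\xi\cdot\|U_j^T p_i\|.\]
Since $p_i\sim\cN(\vec{0},\id_d)$ and $U_j^T$ has orthonormal rows, $U_j^T p_i\sim\cN(\vec{0},\id_{r_j})$, and thus $\|U_j^Tp_i\|^2$ is a $\chi^2$ random variable with $r_j\le 2k$ degrees of freedom. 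This is where the rank-$2k$ reduction pays off: we avoid any dependence on the ambient dimension $d$ inside the chi-squared bound.

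The final step is a tail estimate. Applying Lemma~\ref{lem:chi-squared} (or equivalently Lemma~\ref{lem:gauss-vector-norm}) with deviation parameter $s=\Theta(\ln(qt))$, chosen large enough that $e^{-s}\le 0.05/(qt)$, gives
\[\|U_j^Tp_i\|^2\le 2k+2\sqrt{2k\cdot s}+2s,\]
and therefore $\|U_j^Tp_i\|\le O(\sqrt{k}+\sqrt{\ln(qt)})$, for each fixed $(i,j)$ with failure probability at most $0.05/(qt)$. A union bound over all $qt$ pairs yields this bound simultaneously for every $i,j$ with probability at least $0.95$. Intersecting with the spectral event of Lemma~\ref{lem:empirical-subspaces-close} gives total failure at most $0.1$, and combining the two bounds as above produces the claimed estimate $\|(\Pi-\Pi_j)p_i\|\le O(\xi(\sqrt{k}+\sqrt{\ln(qt)}))$. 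The only conceptually nontrivial step is the rank-$2k$ reduction; the rest is a standard chi-squared tail bound plus a union bound, with no real obstacle to overcome.
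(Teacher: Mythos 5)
Your proof is correct and follows essentially the same route as the paper: the paper also reduces to the at-most-$2k$-dimensional subspace spanned jointly by the ranges of $\Pi$ and $\Pi_j$ (it calls the corresponding projection $\wh{\Pi}_j$ where you use $U_jU_j^T$), bounds $\|\wh{\Pi}_j p_i\|$ by a $\chi^2$ tail bound with a union bound over all $qt$ pairs, and intersects with the event of Lemma~\ref{lem:empirical-subspaces-close}. No substantive differences.
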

\begin{proof}
    We know from Lemma~\ref{lem:empirical-subspaces-close}
    that $\|\Pi-\Pi_j\| \leq \xi$ for all $j$ with
    probability at least $0.95$. For $j \in [t]$, let
    $\wh{\Pi}_j$ be the projection matrix for the union
    of the $j^{\text{th}}$ subspace and the subspace
    spanned by $\Sigma_k$. Lemma~\ref{lem:gauss-vector-norm}
    implies that with probability at least $0.95$,
    for all $i,j$, $\|\wh{\Pi}_j p_i\| \leq O(\sqrt{k}+\sqrt{\ln(qt)})$.
    Therefore,
    \begin{align*}
        \|(\Pi-\Pi_j)p_i\| &= \|(\Pi-\Pi_j)\wh{\Pi}_jp_i\|
            \leq \|\Pi-\Pi_j\|\cdot\|\wh{\Pi}_jp_i\|
            \leq O(\xi(\sqrt{k}+\sqrt{\ln(qt)})).
    \end{align*}
    Hence, the claim.
\end{proof}

The above corollary shows that the projections of
each reference point lie in a ball of radius $O(\xi\sqrt{k})$.
Next, we show that for each reference point, all the
projections of the point lie inside a histogram cell
with high probability. For notational convenience, since
each point in $Q$ is a concatenation of the projection
of all reference points on a given subspace, for all
$i,j$, we refer to
$(0,\dots,0,Q_{(i-1)d+1}^j,\dots,Q_{id}^j,0,\dots,0) \in R^{qd}$
(where there are $(i-1)d$ zeroes behind $Q_{(i-1)d+1}^j$,
and $(q-i)d$ zeroes after $Q_{id}^j$) as $p_i^j$.

\begin{lemma}\label{lem:histogram-cell-points}
    Let $\ell$ and $\lambda$ be the length of a histogram
    cell and the random offset respectively, as defined in
    Algorithm~\ref{alg:approximate}. Then
    $$\pr{}{|\omega \cap Q| = t} \geq 0.8.$$
    Thus there exists $\omega \in \Omega$ that,
    such that all points in $Q$ lie within $\omega$.
\end{lemma}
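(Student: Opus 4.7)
The plan is to condition on the favorable subspace-projection event from Corollary~\ref{coro:reference-projections-close} (probability $\geq 0.9$), and then exploit the randomness of the offset $\lambda$ (which is independent of everything else) to show that all $t$ points of $Q$ land in the same cube with conditional probability $\geq 0.9$, giving $0.8$ overall.

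Define the ``ideal'' concatenated projection $p^* = (\Pi p_1, \Pi p_2, \ldots, \Pi p_q) \in \mathbb{R}^{qd}$. Under the event from Corollary~\ref{coro:reference-projections-close}, each block of $Q^j - p^*$ is exactly $(\Pi_j - \Pi) p_i$, which has $\ell_2$ norm at most $\eta := O(\xi(\sqrt{k}+\sqrt{\ln(qt)}))$ where $\xi = O(\gamma\sqrt{d}/\sqrt{m})$ is the bound from Lemma~\ref{lem:empirical-subspaces-close}. In particular, every coordinate deviation $|Q^j_r - p^*_r|$ is at most $\eta$, and the concatenated deviation satisfies $\|Q^j - p^*\|_2 \le \sqrt{q}\,\eta$ for all $j$.

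With the data $X$ and reference points $p_i$ fixed, consider the randomness of $\lambda \in [0,1)$. For each coordinate $r \in [qd]$, let $w_r = \max_j Q^j_r - \min_j Q^j_r$. The $t$ values $\{Q^j_r\}_j$ all lie in the same grid interval $[\lambda\ell + i\ell, \lambda\ell + (i+1)\ell)$ unless the interval $[\min_j Q^j_r, \max_j Q^j_r]$ contains a grid breakpoint $\lambda\ell + i\ell$ for some $i \in \mathbb{Z}$. Since $\lambda$ is uniform on $[0,1)$, this failure has probability exactly $w_r/\ell$ (assuming $w_r < \ell$). Thus by a union bound over coordinates,
\[
\Pr_\lambda[\exists r:\text{coord }r\text{ splits}] \le \frac{1}{\ell}\sum_{r=1}^{qd} w_r.
\]

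The technical core is bounding $\sum_r w_r$. Within block $i$, using $w_{(i-1)d+s} \le 2\max_j |((\Pi_j-\Pi)p_i)_s|$, one obtains
\[
\sum_{s=1}^d w_{(i-1)d+s}^2 \le 4\sum_{s=1}^d \max_j \bigl((\Pi_j-\Pi)p_i\bigr)_s^2 \le 4\sum_j \|(\Pi_j-\Pi)p_i\|_2^2 \le 4t\eta^2,
\]
and Cauchy--Schwarz then yields $\sum_{s=1}^d w_{(i-1)d+s} \le 2\sqrt{dt}\,\eta$. Summing over the $q$ blocks gives a bound on $\sum_r w_r$ in terms of $q, d, t, \eta$; plugging in $\xi = O(\gamma\sqrt{d}/\sqrt{m})$, the identity $\ell = C_2\xi\sqrt{k}(\sqrt{k}+\sqrt{\ln(kt)})$, and $q = C_1 k$, the constants $C_0, C_1, C_2$ are chosen so that the ratio $\sum_r w_r/\ell \le 0.1$.

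The main obstacle is the tightness of this counting step: a naive per-coordinate bound $w_r \le 2\eta$ applied to all $qd$ coordinates is far too loose, so the argument critically uses the per-block $\ell_2$ constraint (reflecting that $\Pi_j-\Pi$ has rank at most $2k$, so each $(\Pi_j-\Pi)p_i$ lives in an effectively low-dimensional subspace). Once $\sum_r w_r \le 0.1\,\ell$ is established, we combine the two independent sources of randomness: $\Pr[|\omega \cap Q|=t] \ge \Pr[\mathcal{E}] - \Pr_\lambda[\text{some coord splits}\mid\mathcal{E}] \ge 0.9 - 0.1 = 0.8$, as desired.
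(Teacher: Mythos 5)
Your high-level structure matches the paper's: condition on the event of Corollary~\ref{coro:reference-projections-close} (probability $\ge 0.9$), then use the randomness of the offset $\lambda$ to control the probability that the cluster straddles a cell boundary. But the final quantitative step fails. With $\eta = \Theta(\xi(\sqrt{k}+\sqrt{\ln(qt)}))$ and the algorithm's choice $\ell = \Theta(\eta\sqrt{q})$ (recall $q = C_1 k$ and $\ell = C_2\gamma\sqrt{dk}(\sqrt{k}+\sqrt{\ln(kt)})/\sqrt{m}$), your per-block bound $\sum_{s=1}^d w_{(i-1)d+s} \le 2\sqrt{dt}\,\eta$ summed over the $q$ blocks gives $\sum_r w_r \le 2q\sqrt{dt}\,\eta$, hence
\[
\frac{1}{\ell}\sum_{r=1}^{qd} w_r \;=\; \Theta\!\left(\frac{q\sqrt{dt}\,\eta}{\eta\sqrt{q}}\right) \;=\; \Theta\!\left(\sqrt{qdt}\,\right),
\]
which grows with $d$, $k$, and $t=\Theta(\log(1/\delta)/\eps)$ and cannot be driven below $0.1$ by any choice of the absolute constants $C_0,C_1,C_2$. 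This is not merely looseness in your estimate: a configuration consistent with $\|Q^j-p^*\|_2 \le \sqrt{q}\,\eta$ for all $j$ in which each $Q^j$ deviates from $p^*$ along a different coordinate really does force $\sum_r w_r$ to be of order $\min\{t,\sqrt{qd}\}\cdot\sqrt{q}\,\eta$, so a per-coordinate union bound is intrinsically too expensive for a cell of side $\Theta(\eta\sqrt{q})$.

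The paper's proof takes a different route precisely to avoid this: it observes that all of $Q$ lies in a single Euclidean ball of radius $r\sqrt{q}$ with $r=\eta$, sets the cell side to $20r\sqrt{q}$, and charges the boundary-straddling event only once, as $\pr{}{\lambda \le \tfrac{1}{20} \vee \lambda \ge \tfrac{19}{20}} = \tfrac{1}{10}$, viewing the offset as a one-dimensional shift along the cell diagonal. (One can debate whether that step is itself fully rigorous with a single offset shared across all $qd$ coordinates, but it is the argument the lemma rests on.) As written, your concluding step is a genuine gap: closing it along your lines would require a cell side larger by a factor of order $\sqrt{qdt}$, which would then propagate into and break the accuracy bound of Lemma~\ref{lem:final-projection}.
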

\begin{proof}
    Let $r = O(\xi(\sqrt{k}+\sqrt{\ln(qt)}))$. This implies that $\ell = 20r\sqrt{q}$.
    The random offset could also be viewed as moving along a
    diagonal of a cell by $\lambda\ell\sqrt{dq}$. We know that
    with probability at least $0.8$, for each $i$, all projections
    of reference point $p_i$ lie in a ball of radius $r$.
    This means that all the points in $Q$ lie in a ball of
    radius $r\sqrt{q}$. Then
    $$\pr{}{|\omega \cap Q| = t} \leq \pr{}{\frac{1}{20} \geq
        \lambda \vee \lambda \geq \frac{19}{20}} = \frac{1}{10}.$$
    Taking the union bound over all $q$ and the failure
    of the event in Corollary~\ref{coro:reference-projections-close},
    we get the claim.
\end{proof}

Now, we analyse the sample complexity due
to the private algorithm, that is,
DP-histograms.

\begin{lemma}\label{lem:dp-histogram-cost}
    Let $\omega$ be the histogram cell as defined in
    Algorithm~\ref{alg:approximate}. Suppose $\pcount(\omega)$
    is the noisy count of $\omega$ as a result of applying
    the private histogram. If
    $t \geq O\left(\frac{\log(1/\delta)}{\eps}\right),$
    then
    $$\pr{}{\abs{\pcount(\omega)} \geq \frac{t}{2}} \geq 0.75.$$
\end{lemma}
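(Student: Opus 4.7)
The plan is to combine the concentration guarantee from Lemma~\ref{lem:histogram-cell-points} with the accuracy of the differentially private histogram from Lemma~\ref{lem:priv-hist}. First I would condition on the high-probability event established in Lemma~\ref{lem:histogram-cell-points}: with probability at least $0.8$, there is some histogram cell $\omega$ containing \emph{all} $t$ columns of $Q$, so the true (non-noisy) count of $\omega$ is exactly $t$.

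Next I would invoke the private histogram guarantee of Lemma~\ref{lem:priv-hist}, applied to the dataset $Q$ of size $t$ over the universe $\Omega$. With failure probability $\beta$, the noisy count $\pcount(\omega)$ for every cell satisfies
\[
\left| \pcount(\omega) - \tfrac{1}{t}|\{j : Q^j \in \omega\}| \right| \le O\!\left(\frac{\log(1/(\delta\beta))}{\varepsilon\, t}\right),
\]
so (rescaling by $t$) the additive error on the raw count of $\omega$ is at most $O(\log(1/(\delta\beta))/\varepsilon)$. Setting $\beta$ to be a small absolute constant (e.g., $\beta=0.05$) and combining with the $0.8$ event via the union bound gives overall success probability $\geq 0.75$.

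Finally, I would solve for the required $t$. Under both good events, $\pcount(\omega) \ge t - C \log(1/\delta)/\varepsilon$ for some absolute constant $C$, so demanding $\pcount(\omega) \ge t/2$ reduces to $t/2 \ge C\log(1/\delta)/\varepsilon$, i.e., $t \ge 2C\log(1/\delta)/\varepsilon$, which is exactly the hypothesis $t \ge O(\log(1/\delta)/\varepsilon)$.

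I don't anticipate a real obstacle here since this is essentially the standard ``stable histogram'' argument: the only subtle point is being careful that Lemma~\ref{lem:priv-hist} is stated in terms of empirical frequencies (divided by $t$), so the additive error on raw counts is $O(\log(1/(\delta\beta))/\varepsilon)$ rather than $O(t \log(1/(\delta\beta))/\varepsilon)$, and that the constant in the sample complexity needs to be chosen large enough to absorb both $C$ and the constant hidden in Lemma~\ref{lem:priv-hist}.
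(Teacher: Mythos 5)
Your proposal is correct and matches the paper's argument: condition on the event of Lemma~\ref{lem:histogram-cell-points} that all $t$ points of $Q$ land in one cell, apply the private histogram accuracy bound of Lemma~\ref{lem:priv-hist} with constant failure probability, and union bound to get $0.8 - 0.05 = 0.75$. Your explicit handling of the frequency-versus-raw-count rescaling is a detail the paper's terse proof leaves implicit, but the route is the same.
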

\begin{proof}
    Lemma~\ref{lem:histogram-cell-points} implies that
    with probability at least $0.8$, for each $i$, all
    projections of $p_i$ lie in a histogram cell, that is,
    all points of $Q$ lie in a histogram cell in $\Omega$.
    Because of the error bound in Lemma~\ref{lem:priv-hist}
    and our bound on $t$, we see at least $\tfrac{t}{2}$
    points in that cell with probability at least $1-0.05$.
    Therefore, by taking the union bound, the proof is complete.
\end{proof}

We finally show that the error of the projection matrix
that is output by Algorithm~\ref{alg:approximate} is small.

\begin{lemma}\label{lem:final-projection}
    Let $\wh{\Pi}$ be the projection matrix as defined in
    Algorithm~\ref{alg:approximate}, and $n$ be the total
    number of samples. If
    $$\gamma^2 \in
        O\left(\frac{\eps\alpha^2n}{d^{2}k\ln(1/\delta)}\cdot
        \min\left\{\frac{1}{k},
        \frac{1}{\ln(k\ln(1/\delta)/\eps)}
        \right\}\right),$$
    $n \geq O(\frac{k\log(1/\delta)}{\eps}+\frac{\ln(1/\delta)\ln(\ln(1/\delta)/\eps)}{\eps})$,
    and $q \geq O(k)$
    the with probability at least $0.7$, $\|\wh{\Pi}-\Pi\| \leq \alpha$.
\end{lemma}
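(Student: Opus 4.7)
The plan is to combine the histogram guarantees of Lemmata~\ref{lem:histogram-cell-points} and \ref{lem:dp-histogram-cost} with a Sin-Theta perturbation argument on the $d\times q$ matrix of returned reference projections. Lemma~\ref{lem:dp-histogram-cost} already furnishes, with probability at least $0.75$, the compound good event that (i) $\|(\Pi-\Pi_j)p_i\|\le O(\xi(\sqrt{k}+\sqrt{\ln(qt)}))$ for every $i$ and $j$, (ii) all $t$ columns of $Q$ lie in a single cell $\omega\in\Omega$, and (iii) the private histogram returns this $\omega$. I will spend a further $0.05$ of slack below on concentration of the reference matrix $P$, which still leaves the claimed $0.7$.

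Next I would bound the per-reference-point error. Because $\omega$ is a box of side $\ell$ in $\RR^{qd}$, its $i$-th $d$-block has diameter $\ell\sqrt{d}$, so both $\hat p_i$ and every $\Pi_j p_i$ (all of which sit inside $\omega$) satisfy $\|\hat p_i-\Pi_j p_i\|\le \ell\sqrt{d}$. Combining this with event (i) and the triangle inequality gives
\[\max_{i\in[q]}\|\hat p_i-\Pi p_i\|\;\le\;\ell\sqrt{d}+O\!\left(\xi(\sqrt{k}+\sqrt{\ln(qt)})\right)\;=\;O(\ell\sqrt{d}),\]
where the second step uses $\ell=C_2\gamma\sqrt{dk}(\sqrt{k}+\sqrt{\ln(kt)})/\sqrt{m}$ and $\xi=O(\gamma\sqrt{d}/\sqrt{m})$ from the algorithm, so the first term dominates by a factor of $\sqrt{k}$.

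The core step is a Sin-Theta perturbation on $P=(\Pi p_1,\ldots,\Pi p_q)\in\RR^{d\times q}$ and $\hat P=(\hat p_1,\ldots,\hat p_q)$, with $E=\hat P-P$. The columns of $P$ are i.i.d.\ $\cN(\vec{0},\Pi)$, supported in the $k$-dimensional image of $\Pi$; restricting to that image realises $P$ as a $k\times q$ standard Gaussian matrix, so Corollary~\ref{coro:normal-spectrum} with $q\ge O(k)$ gives $s_k(P)\ge\Omega(\sqrt{q})$ (and $s_{k+1}(P)=0$ by rank) with probability at least $0.95$, while the top-$k$ left singular subspace of $P$ is exactly $\Pi$. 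With $\|E\|\le\|E\|_F\le\sqrt{q}\,\max_i\|\hat p_i-\Pi p_i\|=O(\sqrt{q}\,\ell\sqrt{d})$, an application of Lemma~\ref{lem:sin-theta} with rank split $r=k$ -- whose hypothesis $\alpha^2>\beta^2+\min\{\|Z_{12}\|^2,\|Z_{21}\|^2\}$ holds since $\beta,\|Z_{12}\|,\|Z_{21}\|\le\|E\|\ll\sqrt{q}$ whereas $\alpha\ge\Omega(\sqrt{q})-\|E\|$ -- yields $\|\sin\Theta(U,\hat U)\|\le O(\|E\|/\sqrt{q})=O(\ell\sqrt{d})$, and Lemma~\ref{lem:sin-theta-property} then gives $\|\hat\Pi-\Pi\|\le 2\|\sin\Theta\|\le O(\ell\sqrt{d})$.

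Finally, substituting $t=C_0\ln(1/\delta)/\eps$ and $m=\lfloor n/t\rfloor$ into $O(\ell\sqrt{d})\le\alpha$ reduces to the stated condition $\gamma^2\le O(\alpha^2 n\eps/(d^2k\ln(1/\delta)))\cdot\min\{1/k,\,1/\ln(k\ln(1/\delta)/\eps)\}$; the two branches of the $\min$ track the $k$ and $\ln(kt)$ terms inside $(\sqrt{k}+\sqrt{\ln(kt)})^2$. The $n$ lower bound is precisely what Lemma~\ref{lem:dp-histogram-cost} requires via $t\ge O(\ln(1/\delta)/\eps)$ together with the subsample-size requirement $m\ge O(k+\ln(qt))$ of Lemma~\ref{lem:empirical-subspaces-close}. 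The main obstacle is not any individual step but keeping the powers of $d$, $k$, $q$, and the logarithms straight so that the cell-diameter penalty $\sqrt{d}$ and the $1/\sqrt{q}$ amortisation from $s_k(P)$ combine to exactly the stated $\gamma^2$ expression.
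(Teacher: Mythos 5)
Your proof is correct and follows essentially the same route as the paper's: bound $\|\hat p_i-\Pi p_i\|$ by the histogram-cell diameter plus Corollary~\ref{coro:reference-projections-close}, then run a Sin-Theta perturbation argument on $P$ versus $\hat P$ using $s_k(P)=\Omega(\sqrt{q})$ from Corollary~\ref{coro:normal-spectrum}, and finally substitute $t$ and $m$ to recover the stated $\gamma^2$ condition. The only (immaterial) difference is that the paper first splits $E$ into its components along and orthogonal to the column space of $P$ so that $Z_{12}=0$ exactly before invoking Lemma~\ref{lem:sin-theta}, whereas you bound all blocks of $Z$ by $\|E\|$ directly; both give the same conclusion.
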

\begin{proof}
    For each $i \in [q]$, let $p_i^*$ be the projection
    of $p_i$ on to the subspace spanned by $\Sigma_k$,
    $\wh{p}_i$ be as defined in the algorithm, and $p_i^j$
    be the projection of $p_i$ on to the subspace spanned
    by the $j^{\mathrm{th}}$ subset of $X$. From Lemma~\ref{lem:dp-histogram-cost},
    we know that all $p_i^j$'s are contained in a histogram
    cell of length $\ell$. This implies that $p_i^*$ is also
    contained within the same histogram cell.

    Now, let $P=(p_1^*,\dots,p_q^*)$ and $\wh{P}=(\wh{p}_1,\dots,\wh{p}_q)$.
    Then by above, $\wh{P}=P+E$, where $\|E\|_F \leq 2\ell\sqrt{dq}$. Therefore,
    $\|E\| \leq 2\ell\sqrt{dq}$. Let $E=E_P+E_{\wb{P}}$,
    where $E_P$ is the component of $E$ in the subspace
    spanned by $P$, and $E_{\wb{P}}$ be the orthogonal
    component. Let $P' = P + E_P$. We will be analysing
    $\wh{P}$ with respect to $P'$.

    Now, with probability
    at least $0.95$, $s_k(P) \in \Theta(\sqrt{k})$ due to our
    choice of $q$ and using Corollary~\ref{coro:normal-spectrum},
    and $s_{k+1}(P) = 0$. So, $s_{k+1}(P') = 0$ because $E_P$ is
    in the same subspace as $P$. Now, using Lemma~\ref{lem:least-singular},
    we know that $s_k(P') \geq s_k(P) - \|E_P\| \geq \Omega(\sqrt{k}) > 0$.
    This means that
    $P'$ has rank $k$, so the subspaces spanned by $\Sigma_k$
    and $P'$ are the same.

    As before, we will try to
    bound the distance between the subspaces spanned
    by $P'$ and $\wh{P}$. Note that using Lemma~\ref{lem:weyl-singular},
    we know that $s_k(P') \leq s_k(P) + \|E_P\| \leq O(\sqrt{k})$.

    We wish to invoke Lemma~\ref{lem:sin-theta} again. Let $UDV^T$
    be the SVD of $P'$, and let $\hat{U}\hat{D}\hat{V}^T$ be
    the SVD of $\wh{P}$. Now, for a matrix $M$, let $\Pi_M$ denote
    the projection matrix of the subspace spanned by the columns
    of $M$. Define quantities $a,b,z_{12},z_{21}$ as follows.
    \begin{align*}
        a &= s_{\min}(U^T\wh{P}V)\\
            &= s_{\min}(U^TP'V + U^TE_{\wb{P}}V)\\
            &= s_{\min}(U^TP'V) \tag{Columns of $U$ are orthogonal to columns of $E_{\wb{P}}$}\\
            &= s_k(P')\\
            &\in \Theta(\sqrt{k})\\
        b &= \|U_{\bot}^T\wh{P}V_{\bot}\|\\
            &= \|U_{\bot}^TP'V_{\bot} + U_{\bot}^TE_{\wb{P}}V_{\bot}\|\\
            &= \|U_{\bot}^TE_{\wb{P}}V_{\bot}\|
                \tag{Columns of $U_{\bot}$ are orthogonal to columns of $P'$}\\
            &\leq \|E_{\wb{P}}\|\\
            &\leq O(\ell\sqrt{dq})\\
        z_{12} &= \|\Pi_U E_{\wb{P}} \Pi_{V_{\bot}}\|\\
            &= 0\\
        z_{21} &= \|\Pi_{U_{\bot}}E_{\wb{P}}\Pi_V\|\\
            &\leq \|E_{\wb{P}}\|\\
            &\leq O(\ell{\sqrt{dq}})
    \end{align*}

    Using Lemma~\ref{lem:sin-theta}, we get the following.
    \begin{align*}
        \|\text{Sin}(\Theta)(U,\hat{U})\| &\leq \frac{az_{21} + bz_{12}}
                {a^2-b^2-\min\{z_{12}^2,z_{21}^2\}}\\
            &\leq O\left(\ell\sqrt{dk}\right)\\
            &\leq \alpha
    \end{align*}

    This completes our proof.
\end{proof}

\subsection{Boosting}

In this subsection, we discuss boosting of error
guarantees of Algorithm~\ref{alg:approximate}.
The approach we use is very similar to the well-known
Median-of-Means method: we run the algorithm multiple
times, and choose an output that is close to all
other ``good'' outputs. We formalise this in
Algorithm~\ref{alg:approximate-boosted}.

\begin{algorithm}[h!]
\caption{\label{alg:approximate-boosted}DP Approximate Subspace Estimator Boosted
    $\DPASEB_{\eps, \delta, \alpha, \beta, \gamma, k}(X)$}
\KwIn{Samples $X_1,\dots,X_{n} \in \R^d$.
    Parameters $\eps, \delta, \alpha, \beta, \gamma, k > 0$.}
\KwOut{Projection matrix $\wh{\Pi} \in \R^{d \times d}$ of rank $k$.}
\vspace{5pt}

Set parameters:
    $t \gets C_3 \log(1/\beta)$ \qquad $m \gets \lfloor n/t \rfloor$
\vspace{5pt}

Split $X$ into $t$ datasets of size $m$: $X^1,\dots,X^t$.
\vspace{5pt}

\tcp{Run $\DPASE$ $t$ times to get multiple projection matrices.}
\For{$i \gets 1,\dots,t$}{
    $\wh{\Pi}_i \gets \DPASE_{\eps,\delta,\alpha,\gamma,k(X^i)}$
}
\vspace{5pt}

\tcp{Select a good subspace.}
\For{$i \gets 1,\dots,t$}{
    $c_i \gets 0$\\
    \For{$j \in [t]\setminus\{i\}$}{
        \If{$\|\wh{\Pi}_i-\wh{\Pi}_j\| \leq 2\alpha$}{
            $c_i \gets c_i + 1$
        }
    }
    \If{$c_i \geq 0.6t-1$}{
        \Return $\wh{\Pi}_i$.
    }
}
\vspace{5pt}

\tcp{If there were not enough good subspaces, return $\bot$.}
\Return $\bot.$
\vspace{5pt}
\end{algorithm}

Now, we present the main result of this subsection.

\begin{theorem}\label{thm:approximate-boosted}
    Let $\Sigma \in \R^{d \times d}$ be an arbitrary, symmetric, PSD
    matrix of rank $\geq k \in \{1,\dots,d\}$, and let $0 < \gamma < 1$.
    Suppose $\Pi$ is the projection matrix corresponding
    to the subspace spanned by the vectors of $\Sigma_k$.
    Then given
    $$\gamma^2 \in
        O\left(\frac{\eps\alpha^2n}{d^{2}k\ln(1/\delta)}\cdot
        \min\left\{\frac{1}{k},
        \frac{1}{\ln(k\ln(1/\delta)/\eps)}
        \right\}\right),$$
    such that $\lambda_{k+1}(\Sigma) \leq \gamma^2\lambda_k(\Sigma)$,
    for every $\eps,\delta>0$, and $0 < \alpha,\beta < 1$,
    there exists and $(\eps,\delta)$-DP algorithm that takes
    $$n \geq O\left(\frac{k\log(1/\delta)\log(1/\beta)}{\eps} +
        \frac{\log(1/\delta)\log(\log(1/\delta)/\eps)\log(1/\beta)}{\eps}\right)$$
    samples from $\cN(\vec{0},\Sigma)$, and outputs a projection matrix $\wh{\Pi}$,
    such that $\|\Pi-\wh{\Pi}\| \leq \alpha$ with probability
    at least $1-\beta$.
\end{theorem}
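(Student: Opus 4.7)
The plan is to combine parallel composition for privacy with a median-of-means style concentration argument for accuracy. For privacy, the $t$ subsets $X^1,\ldots,X^t$ are disjoint, so changing any single input point affects only one subset and hence only one invocation of $\DPASE$. By parallel composition the tuple $(\wh{\Pi}_1,\ldots,\wh{\Pi}_t)$ is $(\eps,\delta)$-DP (using Lemma~\ref{coro:privacy} in the affected coordinate), and the subsequent selection loop is pure post-processing (Lemma~\ref{lem:post-processing}).

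For accuracy, call index $i \in [t]$ \emph{good} if $\|\wh{\Pi}_i - \Pi\| \leq \alpha/3$; absorbing the resulting factor of $3$ into the constant governing the inner accuracy parameter is harmless for the stated $\gamma$- and $n$-bounds. Each $X^i$ consists of $m = n/t$ i.i.d.~samples from $\cN(\vec{0},\Sigma)$, so by Theorem~\ref{thm:approximate} (instantiated with $m$ samples and the inner target $\alpha/3$), each index is good independently with probability at least $0.7$, provided $m$ satisfies the sample-complexity bound of Theorem~\ref{thm:approximate} and $\gamma^2$ satisfies the corresponding eigengap condition in terms of $m$. Applying Bernstein's inequality (Lemma~\ref{lem:chernoff-add}) to the count $G$ of good indices, one obtains $G \geq 0.6\, t$ with probability at least $1-\beta$, provided $t \geq C_3 \log(1/\beta)$ for a sufficiently large constant $C_3$.

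Condition on $G \geq 0.6\, t$. Any two good indices $i,j$ satisfy $\|\wh{\Pi}_i-\wh{\Pi}_j\| \leq 2\alpha/3 \leq 2\alpha$ by the triangle inequality, so every good index $i$ has $c_i \geq G-1 \geq 0.6\, t - 1$ and therefore triggers a return; in particular, the algorithm does not output $\bot$. Conversely, if the loop returns some index $i$, then $c_i \geq 0.6\, t - 1$ combined with at most $0.4\, t$ bad indices forces at least $0.2\, t - 1 \geq 1$ of the $j$'s counted by $c_i$ to be good, and for any such good $j$ the triangle inequality gives $\|\wh{\Pi}_i - \Pi\| \leq \|\wh{\Pi}_i - \wh{\Pi}_j\| + \|\wh{\Pi}_j - \Pi\| \leq 2\alpha + \alpha/3 \leq \alpha$ after rescaling. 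The sample complexity follows from $n = tm$ with $t = \Theta(\log(1/\beta))$, and the $\gamma$-bound follows by substituting $m = n/t$ into the inner hypothesis. The only non-routine step is the counting argument guaranteeing that the selection loop cannot return a bad projection lying in an isolated cluster of other bad projections; this is ruled out because more than half of the indices are good, so any $i$ with $c_i \geq 0.6\, t - 1$ must be close to at least one good $j$.
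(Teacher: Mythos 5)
Your proof is correct and follows essentially the same route as the paper's: privacy via the disjointness of the $t$ subsets (parallel composition plus post-processing), Bernstein's inequality to guarantee at least $0.6t$ good runs with probability $1-\beta$, and the counting/pigeonhole argument combined with the triangle inequality to show the selection loop cannot return $\bot$ and cannot return a matrix far from $\Pi$. Your explicit rescaling of the inner accuracy target to $\alpha/3$ is in fact slightly more careful than the paper, which leaves the resulting factor of $3$ implicit in its final ``it will be close to the true subspace'' step.
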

\begin{proof}
    Privacy holds trivially by Theorem~\ref{thm:approximate}.

    We know by Theorem~\ref{thm:approximate} that
    for each $i$, with probability at least $0.7$,
    $\|\wh{\Pi}_i-\Pi\| \leq \alpha$. This means
    that by Lemma~\ref{lem:chernoff-add}, with probability
    at least $1-\beta$, at least $0.6t$ of all
    the computed projection matrices are accurate.

    This means that there has to be at least one projection
    matrix that is close to $0.6t-1>0.5t$ of these
    accurate projection matrices. So, the algorithm
    cannot return $\bot$.

    Now, we want to argue that the returned projection
    matrix is accurate, too. Any projection matrix
    that is close to at least $0.6t-1$ projection
    matrices must be close to at least one accurate
    projection matrix (by pigeonhole principle). Therefore,
    by triangle inequality,
    it will be close to the true subspace. Therefore,
    the returned projection matrix is also accurate.
\end{proof}

%\subsection{Putting It All Together}

%Now, we are ready to finish the main theorem of the section.

%\begin{proof}[Proof of Theorem~\ref{thm:approximate}]
%    By Corollary~\ref{coro:privacy} and Lemma~\ref{lem:final-projection},
%    Algorithm~\ref{alg:approximate} is differentially private
%    and accurate.
%\end{proof}

%\break
\addcontentsline{toc}{section}{References}
%\bibliographystyle{alpha}
%\bibliography{biblio}

\printbibliography

%\appendix

%\input{appendices}

\end{document}